\def\noheaderplainsetup{

\topmargin=0pt \headheight=0pt \headsep=0pt  \oddsidemargin=0pt \evensidemargin=0pt  \textheight=8.9truein \textwidth=6.2truein}
\begin{document}


\newcommand{\code}[1]{\ulcorner #1 \urcorner}
\newcommand{\mldi}{\hspace{2pt}\mbox{\footnotesize $\vee$}\hspace{2pt}}
\newcommand{\mlci}{\hspace{2pt}\mbox{\footnotesize $\wedge$}\hspace{2pt}}
\newcommand{\emptyrun}{\langle\rangle} 
\newcommand{\oo}{\bot}            
\newcommand{\pp}{\top}            
\newcommand{\xx}{\wp}               
\newcommand{\legal}[2]{\mbox{\bf Lr}^{#1}_{#2}} 
\newcommand{\win}[2]{\mbox{\bf Wn}^{#1}_{#2}} 
 \newcommand{\one}{\mbox{\sc One}}
 \newcommand{\two}{\mbox{\sc Two}}
 \newcommand{\three}{\mbox{\sc Three}}
 \newcommand{\four}{\mbox{\sc Four}}
 \newcommand{\first}{\mbox{\sc Derivation}}
 \newcommand{\second}{\mbox{\sc Second}}
 \newcommand{\uorigin}{\mbox{\sc Org}}
 \newcommand{\image}{\mbox{\sc Img}}
 \newcommand{\limitset}{\mbox{\sc Lim}}
 \newcommand{\fif}{\mbox{\bf CL15}}
\newcommand{\col}[1]{\mbox{$#1$:}}

\newcommand{\sti}{\mbox{\raisebox{-0.02cm}
{\scriptsize $\circ$}\hspace{-0.121cm}\raisebox{0.08cm}{\tiny $.$}\hspace{-0.079cm}\raisebox{0.10cm}
{\tiny $.$}\hspace{-0.079cm}\raisebox{0.12cm}{\tiny $.$}\hspace{-0.085cm}\raisebox{0.14cm}
{\tiny $.$}\hspace{-0.079cm}\raisebox{0.16cm}{\tiny $.$}\hspace{1pt}}}
\newcommand{\costi}{\mbox{\raisebox{0.08cm}
{\scriptsize $\circ$}\hspace{-0.121cm}\raisebox{-0.01cm}{\tiny $.$}\hspace{-0.079cm}\raisebox{0.01cm}
{\tiny $.$}\hspace{-0.079cm}\raisebox{0.03cm}{\tiny $.$}\hspace{-0.085cm}\raisebox{0.05cm}
{\tiny $.$}\hspace{-0.079cm}\raisebox{0.07cm}{\tiny $.$}\hspace{1pt}}}

\newcommand{\seq}[1]{\langle #1 \rangle}           


\newcommand{\mla}{\mbox{{\Large $\wedge$}}}
\newcommand{\mle}{\mbox{{\Large $\vee$}}}

\newcommand{\pst}{\mbox{\raisebox{-0.01cm}{\scriptsize $\wedge$}\hspace{-4pt}\raisebox{0.16cm}{\tiny $\mid$}\hspace{2pt}}}
\newcommand{\gneg}{\neg}                  
\newcommand{\mli}{\rightarrow}                     
\newcommand{\cla}{\mbox{\large $\forall$}}      
\newcommand{\cle}{\mbox{\large $\exists$}}        
\newcommand{\mld}{\vee}    
\newcommand{\mlc}{\wedge}  
\newcommand{\ade}{\mbox{\Large $\sqcup$}}      
\newcommand{\ada}{\mbox{\Large $\sqcap$}}      
\newcommand{\add}{\sqcup}                      
\newcommand{\adc}{\sqcap}                      

\newcommand{\tlg}{\bot}               
\newcommand{\twg}{\top}               
\newcommand{\st}{\mbox{\raisebox{-0.05cm}{$\circ$}\hspace{-0.13cm}\raisebox{0.16cm}{\tiny $\mid$}\hspace{2pt}}}
\newcommand{\cst}{{\mbox{\raisebox{-0.05cm}{$\circ$}\hspace{-0.13cm}\raisebox{0.16cm}{\tiny $\mid$}\hspace{1pt}}}^{\aleph_0}} 
\newcommand{\cost}{\mbox{\raisebox{0.12cm}{$\circ$}\hspace{-0.13cm}\raisebox{0.02cm}{\tiny $\mid$}\hspace{2pt}}}
\newcommand{\ccost}{{\mbox{\raisebox{0.12cm}{$\circ$}\hspace{-0.13cm}\raisebox{0.02cm}{\tiny $\mid$}\hspace{1pt}}}^{\aleph_0}} 
\newcommand{\pcost}{\mbox{\raisebox{0.12cm}{\scriptsize $\vee$}\hspace{-4pt}\raisebox{0.02cm}{\tiny $\mid$}\hspace{2pt}}}


\newtheorem{theoremm}{Theorem}[section]
\newtheorem{conditionss}[theoremm]{Condition}
\newtheorem{thesiss}[theoremm]{Thesis}
\newtheorem{definitionn}[theoremm]{Definition}
\newtheorem{lemmaa}[theoremm]{Lemma}
\newtheorem{notationn}[theoremm]{Notation}
\newtheorem{propositionn}[theoremm]{Proposition}
\newtheorem{conventionn}[theoremm]{Convention}
\newtheorem{examplee}[theoremm]{Example}
\newtheorem{remarkk}[theoremm]{Remark}
\newtheorem{factt}[theoremm]{Fact}
\newtheorem{exercisee}[theoremm]{Exercise}
\newtheorem{questionn}[theoremm]{Open Problem}
\newtheorem{conjecturee}[theoremm]{Conjecture}

\newenvironment{exercise}{\begin{exercisee} \em}{ \end{exercisee}}
\newenvironment{definition}{\begin{definitionn} \em}{ \end{definitionn}}
\newenvironment{theorem}{\begin{theoremm}}{\end{theoremm}}
\newenvironment{lemma}{\begin{lemmaa}}{\end{lemmaa}}
\newenvironment{proposition}{\begin{propositionn} }{\end{propositionn}}
\newenvironment{convention}{\begin{conventionn} \em}{\end{conventionn}}
\newenvironment{remark}{\begin{remarkk} \em}{\end{remarkk}}
\newenvironment{proof}{ {\bf Proof.} }{\  \rule{2.5mm}{2.5mm} \vspace{.2in} }
\newenvironment{idea}{ {\bf Idea.} }{\  \rule{1.5mm}{1.5mm} \vspace{.15in} }
\newenvironment{example}{\begin{examplee} \em}{\end{examplee}}
\newenvironment{fact}{\begin{factt}}{\end{factt}}
\newenvironment{notation}{\begin{notationn} \em}{\end{notationn}}
\newenvironment{conditions}{\begin{conditionss} \em}{\end{conditionss}}
\newenvironment{question}{\begin{questionn}}{\end{questionn}}
\newenvironment{conjecture}{\begin{conjecturee}}{\end{conjecturee}}

\title{The taming of recurrences in computability logic through cirquent calculus, Part II}
\author{Giorgi Japaridze\thanks{Supported by 2010 Summer Research Fellowship from Villanova University}}
\date{}
\maketitle

\begin{abstract} This paper constructs a cirquent calculus system and proves its soundness and completeness with respect to the semantics of computability logic. The logical vocabulary of the system consists of negation $\gneg$, parallel conjunction $\mlc$, parallel disjunction $\mld$,  branching recurrence $\st$, and branching corecurrence $\cost$.  The article is published in two parts, with (the previous) Part I containing preliminaries and a soundness proof, and (the present) Part II containing a  completeness proof. 
\end{abstract}

\noindent {\em MSC}: primary: 03B47; secondary: 03B70; 68Q10; 68T27; 68T15. 

\  

\noindent {\em Keywords}: Computability logic; Cirquent calculus; Interactive computation; Game semantics; Resource semantics. 


\section{Outline} 
Being a continuation of \cite{taming1}, this article fully relies on the terminology and notation introduced in its predecessor, with which --- or, at least, with the first six sections of which --- the reader is assumed to be already familiar, and which is necessary to have at hand for references while reading this paper. 

The purpose of the present piece of writing is to prove the completeness of $\fif$, in the form of clauses (b) and (c) of  Theorem 6 of  \cite{taming1}. For the rest of this paper, we fix an arbitrary formula $\mathbb{F}_0$ and assume that 
\begin{equation}\label{feb17a}
\fif\not\vdash\mathbb{F}_0.
\end{equation}
Our immediate and most challenging goal, to which Sections \ref{ssmam}-\ref{notice} are devoted,  is to prove that $\mathbb{F}_{0}$ is not uniformly valid. The final  Section \ref{sss} will then relatively painlessly extend this result from uniform to multiform (in)validity. 

We are going to show that there is a {\em counterstrategy} (in fact, an effective one) $\cal E$ such that, when the environment plays according to $\cal E$,  no HPM wins $\mathbb{F}^{*}_{0}$ for an appropriately selected constant interpretation $^*$. Of course, we have never defined the concept of an environment's effective strategy. As explained in the following section, the latter, just like a machine's strategy, can be understood simply as an EPM.

\section{Machines against machines}\label{ssmam}

Here we borrow a discussion from \cite{Japtocl1}. 
 For a run $\Gamma$ and a computation branch $B$ of an EPM, we say that $B$ {\bf cospells} $\Gamma$ iff
$B$ spells $\gneg\Gamma$ ($\Gamma$ with all labels reversed) in the sense of Section 2.5 of \cite{taming1}. 
Intuitively, when an EPM $\cal E$ plays as  $\oo$ rather than $\pp$,  the run that is generated by  a given computation branch $B$ of $\cal E$ is the run cospelled rather than spelled by $B$, for the moves that $\cal E$ makes 
get the label $\oo$, and the moves that its adversary makes get the label $\pp$.

We say that an EPM $\cal E$ is {\bf fair} iff, for every valuation $e$, every $e$-computation branch  
of $\cal E$ is fair (again, ``fair branch'' in the sense of Section 2.5 of \cite{taming1}).

\begin{lemma}\label{lem}
Assume $\cal E$ is a fair EPM, $\cal H$ is an HPM, and $e$ is a valuation. There are a uniquely defined 
 $e$-computation branch $B_{\cal E}$ of $\cal E$ and a uniquely defined $e$-computation branch $B_{\cal H}$ of $\cal H$
--- which we respectively call {\bf the $({\cal E},e,{\cal H})$-branch} and {\bf the $({\cal H},e,{\cal E})$-branch}
 --- such that the run spelled by $B_{\cal H}$, called  {\bf the $\cal H$ vs. $\cal E$ run on $e$}, 
 is the run cospelled by $B_{\cal E}$.\end{lemma}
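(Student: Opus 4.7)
The plan is a stage-by-stage construction of the two branches in parallel, with the invariant that at every intermediate stage, the portion of the run spelled by the partial $B_{\cal H}$ so far equals the portion of the run cospelled by the partial $B_{\cal E}$ so far. Since both $\cal H$ and $\cal E$ are deterministic once their tapes are fixed, each machine's branch is completely determined by the sequence of moves it receives as adversary moves and, for $\cal E$, the clock cycles at which those moves arrive during permission states. Thus, the whole task reduces to specifying a single shared run $\Gamma$ together with a synchronization schedule.

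I would proceed by induction on an external microclock, starting from the initial configurations and the empty run $\emptyrun$, dovetailing the transitions of $\cal H$ and $\cal E$. At each microstep, advance $\cal H$ by one clock cycle: if $\cal H$ emits a $\pp$-labelled move $\alpha$, append it to $\Gamma$ and queue $\alpha$ for delivery to $\cal E$ at its next permission state. Meanwhile simulate $\cal E$ forward: if $\cal E$ itself makes a move (which counts as $\pp$ in $\cal E$'s cospelled view and hence as $\oo$ in the shared run), append it to $\Gamma$ and make it visible on $\cal H$'s run tape immediately; if instead $\cal E$ enters a permission state and a $\cal H$-move is queued, deliver it there and then, which allows $\cal E$ to continue. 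Fairness of $\cal E$ is what prevents the construction from stalling: in the branch being assembled, $\cal E$ cannot go forever without re-entering a permission state, so every queued $\cal H$-move is eventually consumed. Determinism of both machines forces each action at each microstep to be unique, so the limit branches $B_{\cal H}$ and $B_{\cal E}$ obtained as unions of the partial configurations are themselves uniquely determined, and the invariant carried throughout guarantees the desired spelling/cospelling coincidence.

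The step I anticipate as the main obstacle is the bookkeeping needed to reconcile two very different clock disciplines: an HPM runs on an externally imposed clock and may attempt a move at any cycle, while an EPM is event-driven through its permission states. The scheduling must ensure that at each microstep exactly one side makes at most one observable move, that each side's view of $\Gamma$ matches the other's up to the label flip in $\gneg\Gamma$, and that fairness of $\cal E$ is genuinely used to discharge queued $\cal H$-moves. Once this synchronization is pinned down against the definitions of HPM, EPM, and (co)spelling from Section 2.5 of \cite{taming1}, propagation of the invariant and the uniqueness claim are routine.
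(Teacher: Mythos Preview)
Your synchronization is backwards, and this creates an ordering mismatch that breaks the spelling/cospelling invariant. Consider a microstep in which $\cal H$ emits a move $\alpha$; you append $\pp\alpha$ to $\Gamma$ and queue $\alpha$ for $\cal E$. Now you ``simulate $\cal E$ forward'' and $\cal E$ emits a move $\beta$ before reaching its next permission state; you append $\oo\beta$ to $\Gamma$. Only then does $\cal E$ reach a permission state and receive $\alpha$. On $\cal E$'s own tape the order is: $\cal E$ made $\beta$ (label $\pp$), then saw $\alpha$ (label $\oo$); so $\cal E$ spells $\seq{\pp\beta,\oo\alpha,\ldots}$ and hence cospells $\seq{\oo\beta,\pp\alpha,\ldots}$. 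But your $\Gamma$ is $\seq{\pp\alpha,\oo\beta,\ldots}$. The two do not agree. The queue is precisely the problem: any move of $\cal H$ that sits in the queue while $\cal E$ keeps moving will land out of order in $\cal E$'s view. A related hazard is that advancing $\cal H$ on an external clock before $\cal E$'s moves are placed on its tape may produce a configuration sequence that is not a legitimate $e$-computation branch of $\cal H$ at all, since $\cal H$'s transitions depend on what is currently on its run tape.

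The paper's approach (sketched there as a proof idea, with the formal proof deferred to \cite{Jap03}) avoids all of this by reversing the roles: $\cal E$ drives the clock. One simulates $\cal E$ step by step; $\cal H$ sleeps and is awakened to perform exactly one transition each time $\cal E$ enters a permission state, so that $\cal H$'s $t$th clock cycle \emph{is} $\cal E$'s $t$th permission state. Between consecutive permissions $\cal E$ may make several moves, and these appear on $\cal H$'s run tape as the block of adversary moves arriving during one of $\cal H$'s clock cycles, exactly as the HPM model allows; at the permission, $\cal H$ makes at most one move, delivered to $\cal E$ immediately, exactly as the EPM model requires. There is no queue and no ordering ambiguity. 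Fairness of $\cal E$ is then used not to drain a queue but to guarantee that $\cal H$ is awakened infinitely often, so that $B_{\cal H}$ is a genuine infinite computation branch.
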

 
When ${\cal H},{\cal E},e$ are as above, $\Gamma$ is the $\cal H$ vs.\hspace{-1pt} $\cal E$ run on $e$ and $A$ is a game such that $\Gamma$ is a $\pp$-won (resp. $\oo$-won) run of $e[A]$,  we say that $\cal H$ {\bf wins}
(resp. {\bf loses}) $A$ {\bf against $\cal E$ on $e$}. Simply saying ``$\cal H$  {\bf wins}
(resp. {\bf loses}) $A$ {\bf against} $\cal E$'' means that $\cal H$ wins
(resp. loses) $A$ against $\cal E$ on some valuation $e$.  

A strict proof of the above lemma can be found in \cite{Jap03} (Lemma 20.4), and we will not reproduce  
the formal proof here.  Instead, the following intuitive explanation should suffice:\vspace{7pt}

{\bf Proof idea.} Assume $\cal H$, $\cal E$, $e$ are as in Lemma \ref{lem}. The scenario that we are going to describe is the unique 
play generated when the two machines compete against each other, with $\cal H$ in the role of $\pp,$  $\cal E$ in the role of $\oo$, and $e$ spelled on the valuation tapes of both machines. 
We can visualize this play as follows.
Most of the time during the process $\cal H$ remains inactive (sleeping); it is woken up only when $\cal E$ enters a permission state, on which event $\cal H$ makes a (one single) transition to its next computation step --- that may or may not result in making a move --- and goes back into a sleep that will continue until $\cal E$ enters  a permission state again, and so on. From ${\cal E}$'s perspective, $\cal H$ acts as a patient adversary who makes one or zero move only when granted permission, just as the EPM-model assumes.  And from $\cal H$'s perspective, which, like a person in a coma,  has no sense of time during its sleep and hence can think that the wake-up events that it calls the beginning of a clock cycle happen at a constant rate, $\cal E$ acts as an adversary who can make any finite number of moves during a clock cycle (i.e. while $\cal H$ was sleeping), just as the HPM-model assumes. This scenario uniquely determines an $e$-computation branch $B_{\cal E}$ of $\cal E$ that we call
the $({\cal E},e,{\cal H})$-branch, and an $e$-computation branch $B_{\cal H}$ of $\cal H$ that we call
the $({\cal H},e,{\cal E})$-branch. What we call the $\cal H$ vs. $\cal E$ run on $e$ is the run generated 
in this play. In particular --- since we let $\cal H$ play in the role of $\pp$ --- this is the run spelled by $B_{\cal H}$. $\cal E$, who plays in the role of $\oo$, sees the same run, only it sees the labels of the moves of that run in negative colors. 
That is, $B_{\cal E}$ cospells rather than spells that run. This is exactly what Lemma \ref{lem} asserts.

\section{Enumeration games}\label{sseg}
 We continue identifying natural numbers with their decimal representations.

An {\bf enumeration game} is a game in every instance of which any natural number $a$ is a legal move by either player $\xx$ at any time, and there are no other legal moves.  Either player can thus be seen to be enumerating a set of numbers --- the numbers made by it as moves during the play. The winner in a (legal) play of a given instance of an enumeration game only depends on the two sets enumerated this way. That is, it only matters {\em what} moves have been eventually made, regardless of {\em when} (in what order) and {\em how many times} (only once or repetitively) those moves were made. 

Since the order of moves is irrelevant, it is obvious that every enumeration game is static, and hence is a legitimate value of an interpretation $^*$ on any atom.  We call an interpretation $^*$ that sends every atom $P$ to an enumeration game $P^*$ an {\bf enumeration interpretation}. From now on, we will limit our considerations only to enumeration interpretations.   Note that, under this restriction, the set of legal runs of $e[\mathbb{F}_{0}^{*}]$ does not depend on the valuation $e$ or the interpretation $^*$, because all instances of all enumeration games have the same set of legal runs. In view of this fact, we may safely talk about legal moves or runs of $e[\mathbb{F}_{0}^{*}]$ even without having yet defined the interpretation $^*$ and/or without having specified the valuation $e$. Furthermore, as was done in \cite{taming1}, in many contexts we shall terminologically and notationally identify $\mathbb{F}_{0}$  or any other formula or cirquent with the game into which it turns after a constant enumeration interpretation $^*$ is applied to it.

\section{Units, funits and prompts}\label{ssufp}
 In what follows,  ``{\bf formula}'',  ``{\bf atom}'' and ``{\bf literal}'' always mean those of $\mathbb{F}_0$. 
Extending the usage of the prefix ``o'' introduced in Definition 3.1 of \cite{taming1}, by an {\bf osubformula} of a given formula $E$  we mean a subformula of $E$ together with a fixed occurrence of it within $E$. 
  Similarly for 
 {\bf oliterals}. Whenever we simply say ``{\bf oformula}'', unless otherwise specified or implied by the context, we always mean an osubformula of $\mathbb{F}_0$. We say that an oformula $E$ is an {\bf osuperformula} of an oformula $G$ of iff  $G$ is an osubformula of $E$. 
$E$ is a {\bf proper} osuperformula (resp. osubformula) of $G$ iff $E$ is an osuperformula (resp. osubformula) of $G$ but not vice versa. 

A {\bf politeral} (``positive oliteral'') is an oliteral that is not in the scope of $\gneg$. In other words, this is a positive occurrence of a literal. For instance, the formula $P\mlc(\gneg Q\mld\gneg Q)$ has three (rather than five) politerals, which are $P$, the first occurrence of $\gneg Q$, and the second occurrence of $\gneg Q$.

The {\bf modal depth} of an oformula $G$ is the number of the proper osuperformulas of $G$  of the form $\st E$ or $\cost E$. 

A {\bf unit} is a pair $(E,\vec{x})$, which we prefer to write as $E[\vec{x}]$, where $E$ is an oformula and $\vec{x}$ is an
array (sequence) of as many infinite bitstrings as the modal depth of    $E$ (if the latter is $0$, then $E[\vec{x}]$ simply looks like $E[\hspace{1pt}]$). Below is an inductive (re)definition of the set of units, together with the functional {\bf parenthood} relation  on units, and the {\bf projection} function that takes a unit, a run $\Omega$ and returns a run:

\begin{description}
  \item[(a)] $\mathbb{F}_0[\hspace{1pt}]$ is a unit, and the projection of $\Omega$ on $\mathbb{F}_0[\hspace{1pt}]$ is $\Omega$ itself. The unit $\mathbb{F}_0[\hspace{1pt}]$ has no parents. 
  \item[(b)] Assume $E[\vec{x}]$ is a unit, and  $\Theta$ is the projection of $\Omega$ on it. Then:
  \begin{enumerate}  
    \item Suppose $E$ is $G_0\mlc G_1$ or $G_0\mld G_1$. Then, for either $i\in\{0,1\}$, $G_i[\vec{x}]$ is a unit, the projection of $\Omega$ on $G_i[\vec{x}]$ is $\Theta^{i.}$ (see Notation 2.2 of \cite{taming1}), and the (only) parent of $G_i[\vec{x}]$ is $E[\vec{x}]$. 
    \item Suppose $E$ is $\st G$ or $\cost G$. Then, for any infinite bitstring $y$, $G[\vec{x},y]$ is a unit,  the projection of $\Omega$ on
$G[\vec{x},y]$ is $\Theta^{\preceq y}$ (see Notation 2.2 of \cite{taming1}), and the (only) parent of $G[\vec{x},y]$ is $E[\vec{x}]$. 
  \end{enumerate}
\item[(c)] Nothing is a unit  unless it can be obtained by repeated application of (a) and (b).
\end{description}

A  {\bf funit} (``f'' for ``finite'') is a pair $(E,\vec{w})$, which we prefer to write as $E[\vec{w}]$, where $E$ is an oformula and $\vec{w}$ is a sequence of as many finite bitstrings as the modal depth of    $E$. Below is an inductive (re)definition of the set of funits, together with the {\bf parenthood} relation on funits, as well as the {\bf address} function that takes a funit and returns a string of bits and periods:

\begin{description}
  \item[(a)] $\mathbb{F}_0[\hspace{1pt}]$ is a funit, and its address (as a string) is empty. This funit has no parents. 
  \item[(b)] Assume $E[\vec{w}]$ is a funit, and its address is $\alpha$. Then:
  \begin{enumerate}  
    \item Suppose $E$ is $G_0\mlc G_1$ or $G_0\mld G_1$. Then, for either $i\in\{0,1\}$, $G_i[\vec{w}]$ is a funit, and its address is $\alpha i.$ ($\alpha$ followed by the bit $i$ followed by a period).  The (only) parent of $G_i[\vec{w}]$ is $E[\vec{w}]$. 
    \item Suppose $E$ is $\st G$ or $\cost G$. Then, for any finite bitstring $u$, $G[\vec{w},u]$ is a funit, and its address is $\alpha u.$ ($\alpha$ followed by the bitstring $u$ followed by a period).  The (only) parent of $G[\vec{w},u]$ is $E[\vec{w}]$.
  \end{enumerate}
\item[(c)] Nothing is a funit  unless it can be obtained by repeated application of (a) and (b).
\end{description}

When $E[\vec{x}]=E[x_1,\ldots,x_n]$ is a unit and $E[\vec{w}]=E[w_1,\ldots,w_n]$ is a funit such that $w_1\preceq x_1,\ldots,w_n\preceq x_n$,\footnote{remember that, as we agreed earlier, $w\preceq x$ means that $w$ is a prefix of $x$.}    we say that $E[\vec{w}]$ is a {\bf funital restriction} of  $E[\vec{x}]$, and that  $E[\vec{x}]$ is a  {\bf unital extension} of $E[\vec{w}]$.

A {\bf politeral unit} (resp. {\bf politeral funit}) is a unit (resp. funit) $L[\vec{x}]$ such that $L$ is a politeral. 

We say that a unit $E[\vec{x}]$ is a {\bf child} of a unit $G[\vec{y}]$ iff $G[\vec{y}]$ is the parent of $E[\vec{x}]$. We say that $E[\vec{x}]$ is a {\bf subunit} of $G[\vec{y}]$, and that $G[\vec{y}]$ is a {\bf superunit} of $E[\vec{x}]$, iff $E[\vec{x}]=G[\vec{y}]$, or $E[\vec{x}]$ is a child of $G[\vec{y}]$, or $E[\vec{x}]$ is a child of a child of $G[\vec{y}]$, or \ldots.  Such a subunit  or superunit is said to be {\bf proper} iff $G[\vec{y}]\not=E[\vec{x}]$. Similarly for funits instead of units. 

We agree that, when $E[\vec{w}]$ is a funit  and $\beta$ is a move,  by ``{\bf making the move $\beta$ in $E[\vec{w}]$}'' we mean making the move $\alpha \beta$, where $\alpha$ is the address of 
$E[\vec{w}]$. When $E[\vec{x}]$ is a unit  and $\beta$ is a move,  by ``{\bf making the move $\beta$ in $E[\vec{x}]$}'' we mean making the move $\beta$ in some funital restriction of $E[\vec{x}]$. If here $\beta$ is a decimal numeral, we say that it is a {\bf numeric move}. Note that numeric moves can be legally made only in politeral (f)units.

Intuitively, a legal play/run $\Omega$ of $\mathbb{F}_0$ comprises parallel plays in all of the politeral units. Namely, the run that is taking place in a politeral unit $L[\vec{x}]$ is nothing but the projection of $\Omega$ on $L[\vec{x}]$. Every move of $\Omega$ has the form $\alpha a$, where $a$ is a decimal numeral, and $\alpha$ is the address of some politeral funit $L[\vec{w}]$. 
 The intuitive and technical effect of such a move  is making the numeric move $a$ in all  politeral units $L[\vec{x}]$ that happen to be unital extensions of $L[\vec{w}]$. 

We define the {\bf height} $h$ of a funit $E[w_1,\ldots,w_n]$ as the length of a longest bitstring among $w_1,\ldots,w_1$ (here, if $n=0$, we let $h=0$). We say that a funit $E[w_1,\ldots,w_n]$ is {\bf regular} iff all of the bitstrings $w_1,\ldots,w_n$ are of the same length. 

Let $\Phi$ be a position. By a {\bf $\Phi$-active funit} we mean a funit $G[\vec{u}]$ such that $\Phi$ contains some move made by either player in $G[\vec{u}]$ (that is, $\Phi$ contains some labmove of the form $\xx\alpha\beta$, where $\xx\in\{\top,\bot\}$ and $\alpha$ is the address of $G[\vec{u}]$).
And by a 
{\bf $\Phi$-prompt} we mean any of the (finitely many) regular politeral funits $L[\vec{w}]$ such that 
\begin{itemize}
  \item either $\vec{w}$ is empty ($L$ is not in the scope of any occurrences of $\st$ and/or $\cost$ within $\mathbb{F}_0$, that is), 
  \item or $L[\vec{w}]$ is of height $h$, where $h$ is the smallest number exceeding the heights of all $\Phi$-active funits.  
 \end{itemize}

\section{The counterstrategy $\cal E$}\label{sscs}

Technically, the counterstrategy  $\cal E$ that we are going to construct in this section is an EPM, meaning that, in a computation branch of $\cal E$, the moves made by $\cal E$ appear (as they should according to our definition of an EPM given in Section 2.5 of \cite{taming1}) on the run tape of $\cal E$ with the label $\pp$, and the moves made by its adversary with the label $\oo$. However, we will be eventually interested in the runs cospelled rather than spelled by computation branches of $\cal E$. Correspondingly, when describing or analyzing the work of $\cal E$, it is beneficial for our intuition to directly talk about the positions ``cospelled'' rather than spelled on the run tape of the machine at various times. Namely, we agree that, in the context of any step of any play (computation branch) of $\cal E$, by the {\bf current position} we mean $\gneg \Psi$, where $\Psi$ is the position spelled on the run tape of $\cal E$. 

We define $\cal E$ as an EPM that creates a variable $i$, initializes its value to $1$, and then keeps performing the following routine forever;
  at 
every step of (our description of)  the latter,  $\Phi$ refers to the {\em then-current} position of the play:

\begin{quote} ROUTINE: Let   
$L_1[\vec{w}_1],\ldots,L_n[\vec{w}_n]$ be the lexicographic list of all $\Phi$-prompts.  
\begin{enumerate}
  \item Do the following while $i\leq n$:
  \begin{enumerate}
    \item Make the numeric move $a$ in $L_i[\vec{w}_i]$, where $a$ is the smallest natural number never made as a numeric move in whatever politeral unit by either player so far in the play (in $\Phi$, that is). Then increment $i$ to $i+1$.\footnote{Note that every iteration of this WHILE loop changes the value of $\Phi$.}
\end{enumerate}
\item Once $i$ becomes $n+1$, reset it to $1$, grant permission, and repeat ROUTINE.
\end{enumerate}
\end{quote}

Note that $\cal E$ is a fair EPM, so that Lemma \ref{lem} applies. This is so because $\cal E$ repeats ROUTINE infinitely many times, and each repetition grants permission. 

Also take a note of the fact that  
\begin{equation}\label{feb23a}
\mbox{\em $\cal E$ always makes ``fresh'' numeric moves},
\end{equation}
in the sense that it never makes a numeric move (in whatever politeral funit) that has already been made in the play (in whatever politeral funit) by either player.  

We now pick  an arbitrary valuation $e$, an arbitrary HPM $\cal H$ and denote by $\Omega$ the $\cal H$ vs. $\cal E$ run on $e$, i.e., the run cospelled by the $({\cal E},e,{\cal H})$-branch. 
We fix these parameters $e$, $\cal H$, $\Omega$  until further notice (in Section \ref{sss}) and agree that our discourse is always in the context of these particular $e$, $\cal H$ and $\Omega$. So, for instance, when we say ``$\cal E$ made the move $\alpha$ at time $t$'', it is to be understood as that such a move was made by $\cal E$ in the $({\cal E},e,{\cal H})$-branch on the $t$'th clock cycle. It is important to point out once again that, in such contexts, $\cal E$ will be viewed as the ``author'' of $\bot$-labeled moves (in $\Omega$), and its adversary $\cal H$ as the ``author'' of $\top$-labeled moves, because this is exactly how the corresponding moves are labeled in $\Omega$, which is the run cospelled (rather than spelled) by the $({\cal E},e,{\cal H})$-branch that we are considering.   

Notice that $\cal E$ never makes illegal moves. We may safely pretend that neither does its adversary, for otherwise $\cal E$ automatically wins and the case is trivial. So, we adopt the assumption that 
\begin{equation}\label{feb23b}
\mbox{\em $\Omega$ is a legal run of $\mathbb{F}_0$,}
\end{equation}
which, in view of our conventions of Section \ref{sseg}, precisely means that, for any constant enumeration interpretation $^*$, \ $\Omega$ is a legal run of $\mathbb{F}_{0}^{*}$ --- or, equivalently, that for any (not necessarily constant) enumeration interpretation $^*$, \ $\Omega$ is a legal run of $e[\mathbb{F}_{0}^{*}]$. 

\section{Unit trees, resolutions and driving}\label{ss17n}

We have been using capital Latin letters as metavariables for formulas, oformulas and cirquents. In what follows, we use the same metavariables for units as well.
When $E$ is used to denote a unit $G[\vec{x}]$, the latter (as an expression) is said to be the {\bf expanded form} of the former.  For a unit  $E$, by the {\bf $\mathbb{F}_{0}$-origin} of $E$, symbolically 
\[\tilde{E},\]
we shall mean the oformula $G$  such that the expanded form of $E$ is $G[\vec{x}]$ for some $\vec{x}$. When $\tilde{E}=G$ and $G$ has the form $\st H$, we may refer to $E$ (resp. $G$) as a {\bf $\st$-unit} (resp. {\bf $\st$-oformula}). Similarly in the case of other connectives instead of $\st$. 

In the sequel we may refer to a unit as $\st E$ to indicate that it is a $\st$-unit. In such a case, the $\mathbb{F}_0$-origin of $\st E$ will be denoted by $\st \tilde{E}$.  Similarly in the case of other connectives instead of $\st$.

By a {\bf unit tree} we mean a nonempty set $S$ of units  such that,  whenever a unit $E$  is in $S$, we have:
\begin{itemize}
  \item all superunits of $E$ are in $S$;
  \item if $E$ is a $\mlc$- or $\mld$-unit, then both of its children are in $S$;
  \item if $E$ is a $\st$- or $\cost$-unit, then at least one of its children is in $S$.
\end{itemize}  
Note that every unit tree is indeed a tree of units formed by the parenthood relation, where the {\bf root} is always the unit $\mathbb{F}_0[\hspace{1pt}]$.  

For unit trees we use the same metavariables as for formulas, oformulas, cirquents or units. The unit tree consisting of all units (the ``biggest'' unit tree) we denote by  
\[\mathbb{F}_1.\]
It may be helpful for one's intuition to think of $\mathbb{F}_1$ as (the parse tree of) a ``formula'' of finite height but infinite --- in fact, uncountable --- width. Such a ``formula'' is similar to $\mathbb{F}_0$, with the only difference that, while every $\st$- or $\cost$-osubformula of $\mathbb{F}_0$ has a single child, a corresponding ``osubformula'' (unit) of $\mathbb{F}_1$ has uncountably many children instead --- one child per  infinite bitstring.

When $F$ is a unit tree and $E$ is an element of $F$ other than the root, by {\bf trimming} $F$ at $E$ we mean deleting from $F$ all subunits of $E$ (including $E$ itself). 

 A {\bf resolution}  is a function $\cal A$ that sends each unit $\st E$  either to the value ${\cal A}(\st E)=\mbox{{\em Unresolved}}$,  or to an infinite
 bitstring ${\cal A}(\st E)=x$. In the former case we say that $\st E$ is {\bf $\cal A$-unresolved}; in the latter case we say that 
$\st E$ is {\bf $\cal A$-resolved} and, where $\st E=\st \tilde{E}[\vec{y}]$, call the unit $\tilde{E}[\vec{y},x]$ the {\bf $\cal A$-resolvent} of $\st E$. 
When $\cal A$ is fixed, we may omit it and simply say ``resolved'', ``resolvent'' etc. 

Intuitively, a resolution $\cal A$ selects a single child --- namely, the resolvent --- for each $\cal A$-resolved $\st$-unit, and does nothing for any other units.  In accordance with this intuition, when $\cal A$ is a resolution, we will be using the expression 
 \[\mathbb{F}_{1}^{\cal A}\] 
to denote the unit tree that is the result of trimming $\mathbb{F}_1$ at every child of every $\cal A$-resolved $\st$-unit except the child which is the $\cal A$-resolvent of that unit. 

The {\bf trivial resolution}, which we denote by 
\(\emptyset,\)
is the one that returns the value {\em Unresolved} for every $\st$-unit.  Thus, $\mathbb{F}_{1}=\mathbb{F}_{1}^{\emptyset}$. 

On the other extreme, a resolution $\cal A$ such that every $\st$-unit of $\mathbb{F}_{1}^{\cal A}$ is $\cal A$-resolved is said to be {\bf total}. 

We say that a resolution ${\cal A}_2$ is an {\bf extension} of a resolution ${\cal A}_1$ iff every ${\cal A}_1$-resolved $\st$-unit is also ${\cal A}_2$-resolved (but not necessarily vice versa), and the  ${\cal A}_2$-resolvent of each such unit is the same as its ${\cal A}_1$-resolvent. 

We say that two resolutions ${\cal A}_1$ and ${\cal A}_2$ are {\bf consistent} iff they agree on every $\st$-unit that is both ${\cal A}_1$-resolved and ${\cal A}_2$-resolved.  Note that if one resolution is an extension of another, then the two resolutions are consistent. 

Let $S$ be a set of pairwise consistent resolutions. Then the {\bf union}  of the resolutions from $S$
is the ``smallest'' common extension $\cal R$ of all elements of $S$. Precisely, for any unit $\st E$, $\st E$ is $\cal R$-resolved 
 iff it is ${\cal A}$-resolved for some ${\cal A}\in S$, in which case ${\cal R}(\st E)={\cal A}(\st E)$.    Note that if the set $S$ is empty, such a  resolution $\cal R$ is trivial.

The {\bf smallest common superunit} of units $E$ and $G$ is the unit $H$ such that:
\begin{enumerate}
  \item $H$ is a  superunit of both $E$ and $G$, i.e., is a {\bf common superunit} of $E$ and $G$,  and
  \item $H$ is a subunit of every common superunit of $E$ and $G$. 
\end{enumerate}  

Note that any two units have a unique smallest common superunit. 

\begin{definition}\label{dec10a}
Below $E,G,H$ are arbitrary units,  and $\cal A$ is an arbitrary resolution. 
\begin{enumerate}
  \item We say that $E$ {\bf drives $G$ through $H$}  iff  the following two conditions are satisfied: 
  \begin{itemize} 
    \item $H$ is the smallest common superunit of $E$ and $G$.
    \item $E$ has no proper $\cost$-superunits that happen to be subunits of $H$. 
  \end{itemize}
\item We say that $E$ {\bf $\cal A$-strictly drives $G$ through $H$}  iff $E$ drives $G$ through $H$ and, in addition, the following  condition is satisfied for any unit $\st F$:
\begin{itemize}
  \item  If $\st F$ is a proper superunit of $E$ that happens to be a subunit of $H$, then $\st F$ is $\cal A$-resolved and $E$ is a subunit of its  $\cal A$-resolvent.
\end{itemize} 
  \item When we simply say ``$E$ drives  $G$'',
 it is to be understood as ``$E$ drives $G$ through $H$ for some $H$''. Similarly for ``$E$  $\cal A$-strictly drives $G$''.
\end{enumerate}
\end{definition}

\begin{lemma}\label{jan3a}
For every resolution $\cal A$ and units $E$ and $G$,  if $G$ is a subunit of $E$, then $E$ $\cal A$-strictly drives $G$ through $E$. 
\end{lemma}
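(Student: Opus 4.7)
The plan is to unpack Definition \ref{dec10a} with $H$ taken to be $E$ itself, and observe that each clause is either immediate from the hypothesis that $G$ is a subunit of $E$, or vacuously true because units form a tree under the parenthood relation.

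First I would check that $E$ is the smallest common superunit of $E$ and $G$. By the definition of subunit/superunit, every unit is a superunit of itself, so $E$ is a superunit of $E$; and by hypothesis $G$ is a subunit of $E$, so $E$ is also a superunit of $G$. Thus $E$ is a common superunit of the pair. For minimality, if $H'$ is any common superunit of $E$ and $G$, then in particular $H'$ is a superunit of $E$, which means $E$ is a subunit of $H'$. So $E$ is indeed the smallest common superunit of $E$ and $G$.

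Next I would dispose of the $\cost$-condition in the definition of ``drives''. We need that $E$ has no proper $\cost$-superunit that is a subunit of $H = E$. Because the parenthood relation organizes units into a tree rooted at $\mathbb{F}_0[\hspace{1pt}]$, any proper superunit of $E$ lies strictly above $E$ in this tree and cannot simultaneously be a subunit of $E$; hence no such $\cost$-unit exists, and the condition holds vacuously. By the same reasoning, the additional strictness clause is vacuous: if $\st F$ were a proper superunit of $E$ and a subunit of $H = E$, then $\st F$ would be both strictly above and at-or-below $E$ in the tree, which is impossible. So the requirements on $\cal A$-resolvedness are also satisfied vacuously.

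Putting these together yields the claim. The argument is essentially bookkeeping; the only point that might trip one up is noticing that the ``strict'' clause is quantified over units $\st F$ satisfying a property no unit can have in the present setting, so its truth does not depend on $\cal A$ at all, which is why the conclusion holds uniformly for every resolution $\cal A$.
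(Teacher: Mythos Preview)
Your proof is correct and follows essentially the same approach as the paper's. The paper's argument is simply a terser version of yours: it notes that $E$ is the smallest common superunit of $E$ and $G$, observes that $E$ has no proper superunits that are subunits of $E$ (which simultaneously dispatches both the $\cost$-condition and the strictness clause), and concludes.
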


\begin{proof} Consider any resolution $\cal A$ and units $E,G$. If $G$ is a subunit of $E$, then $E$ is the smallest common superunit of $E$ and $G$. Of course, $E$ has no proper superunits that happen to be subunits of $E$. So, $E$ $\cal A$-strictly drives $G$ through $E$.  \end{proof}

\begin{lemma}\label{jan2a} \ 

1.
The relation ``drives''  is transitive. That is, for any units $G_1,G_2,G_3$, if $G_1$ drives $G_2$  and $G_2$ drives $G_3$, then $G_1$ drives  $G_3$. 

2. Similarly for ``$\cal A$-strictly drives'' (for whatever resolution $\cal A$).
\end{lemma}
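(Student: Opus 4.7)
My plan is to treat both clauses uniformly, since the strictness condition in clause~2 admits the same case analysis as the bare drives condition in clause~1. Fix units $G_1,G_2,G_3$ with $G_1$ driving $G_2$ through $H_{12}$ and $G_2$ driving $G_3$ through $H_{23}$, and let $H_{13}$ be the (uniquely determined) smallest common superunit of $G_1$ and $G_3$; the goal is to show that $G_1$ drives $G_3$ through this $H_{13}$. The first bullet of Definition~\ref{dec10a}(1) is automatic, so it suffices to show that $G_1$ has no proper $\cost$-superunit which is a subunit of $H_{13}$. The key fact I will invoke is the standard tree-theoretic observation: since the parenthood relation on units forms a tree, any two superunits of a common unit are comparable, and among the three pairwise smallest common superunits of $G_1,G_2,G_3$ two coincide and are superunits of the third.

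This reduces matters to two cases. In Case~I, where $H_{13}\preceq H_{12}$, any subunit of $H_{13}$ is a subunit of $H_{12}$, so a hypothetical proper $\cost$-superunit $\cost F$ of $G_1$ with $\cost F\preceq H_{13}$ would violate the first hypothesis directly. In Case~II, where $H_{12}\prec H_{13}$, the tree fact forces $H_{13}=H_{23}$. A candidate $\cost F$ is then comparable with $H_{12}$, since both are superunits of $G_1$. If $\cost F\preceq H_{12}$, the first hypothesis is again contradicted. Otherwise $H_{12}\prec\cost F$, and then $G_2\preceq H_{12}\prec\cost F$ makes $\cost F$ a proper $\cost$-superunit of $G_2$ with $\cost F\preceq H_{23}$, contradicting the second hypothesis. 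This settles clause~1.

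For clause~2, the plan is to rerun exactly the same case analysis, now for a proper $\st$-superunit $\st F$ of $G_1$ with $\st F\preceq H_{13}$, and derive the positive conclusion that $\st F$ is $\cal A$-resolved with $G_1$ a subunit of its resolvent. Case~I and the sub-case $\st F\preceq H_{12}$ of Case~II deliver this conclusion directly from the strict-drives assumption on $G_1,G_2,H_{12}$. The only genuinely new ingredient appears in the remaining sub-case $H_{12}\prec\st F$ of Case~II: there, applying the second hypothesis to $\st F$ (viewed as a proper $\st$-superunit of $G_2$ with $\st F\preceq H_{23}$) supplies a resolvent $R$ with $G_2\preceq R$, and I must verify that $G_1$ is also a subunit of $R$. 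This will follow because $H_{12}$, being a superunit of $G_2$ and a proper subunit of $\st F$, lies strictly between $G_2$ and $\st F$, so the unique child of $\st F$ on the path from $G_2$ up to $\st F$---which is precisely $R$---is itself a superunit of $H_{12}$; hence $G_1\preceq H_{12}\preceq R$. The main subtle point I anticipate is invoking the three-superunit tree observation cleanly, and, in clause~2, tracking which child-subtree of $\st F$ contains $H_{12}$ (and therefore $G_1$).
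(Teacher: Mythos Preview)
Your argument is correct and follows essentially the same approach as the paper's: a case split on the relative position of the smallest common superunits in the unit tree, using that any two superunits of a common unit are comparable. The paper compares $H_{12}$ with $H_{23}$ (both superunits of $G_2$) and identifies $H_{13}$ with whichever is the higher one, whereas you compare $H_{13}$ with $H_{12}$ (both superunits of $G_1$) and invoke the three-LCA fact to identify $H_{13}$ with $H_{23}$ in your Case~II; these decompositions are equivalent. You also supply the full verification of clause~2 (including the key observation that in the sub-case $H_{12}\prec\st F$ the resolvent $R$ must be a superunit of $H_{12}$), which the paper explicitly leaves to the reader.

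One cosmetic point: your use of $\preceq$ for ``is a subunit of'' clashes with the paper's convention, where $\preceq$ denotes the prefix relation on bitstrings; when integrating your write-up, switch to the paper's ``subunit''/``superunit'' terminology to avoid overloading the symbol.
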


\begin{proof} Here we will only prove clause 1. Clause 2 can be verified in a rather similar way, and we leave this job (if necessary) to the reader. 

Assume $G_1,G_2,G_3$ are units such that $G_1$ drives $G_2$ and $G_2$ drives $G_3$. 
Let $H_{12}$ be the smallest common superunit of $G_1$ and $G_2$, and $H_{23}$ be the smallest common superunit of $G_2$ and $G_3$. Thus, $H_{12}$ and $H_{23}$ have a common subunit (namely, $G_2$). Therefore, since $\mathbb{F}_1$ is a tree,  $H_{12}$ is either a superunit or a subunit of $H_{23}$. 

If $H_{12}$ is a superunit of $H_{23}$, then $H_{12}$ is clearly the smallest common superunit of $G_1$ and $G_3$. We also know that, among the subunits of  $H_{12}$, $G_1$ has no proper $\cost$-superunits. Thus, $G_1$ drives $G_3$ through $H_{12}$.

If $H_{12}$ is a subunit of $H_{23}$, then  $H_{23}$ is clearly the smallest common superunit of $G_1$ and $G_3$. Note that, among the subunits of  
$H_{23}$, $H_{12}$ has no  proper $\cost$-superunits, because then so would have $G_2$, meaning that $G_2$ does not drive $G_3$. At the same time, among the subunits of  $H_{12}$, $G_1$ has no proper $\cost$-superunits. Thus, $G_1$ has no proper $\cost$-superunits that happen to be subunits of $H_{23}$, meaning that $G_1$ drives  $G_3$ through $H_{23}$.
\end{proof}

\begin{lemma}\label{nov8d}
For every resolution $\cal A$ and unit $G$, the set of the units that $\cal A$-strictly drive $G$ is finite.  Furthermore, all units that $\cal A$-strictly drive $G$ have different $\mathbb{F}_0$-origins. 
\end{lemma}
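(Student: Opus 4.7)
My plan is to first establish an auxiliary ``local'' finiteness claim: for any unit $H$ and resolution $\cal A$, let $T^{\cal A}(H)$ denote the set of subunits $E$ of $H$ such that no proper $\cost$-superunit of $E$ is a subunit of $H$, and every proper $\st$-superunit of $E$ that is a subunit of $H$ is $\cal A$-resolved with $E$ a subunit of its $\cal A$-resolvent. I will show that $T^{\cal A}(H)$ is finite and its elements have pairwise distinct $\mathbb{F}_0$-origins, by induction on the structure of $\tilde{H}$. If $\tilde{H}$ is a literal, then $H$ has no children, so $T^{\cal A}(H)=\{H\}$. If $\tilde{H}$ has the form $G_0\mlc G_1$ or $G_0\mld G_1$, with $H_0,H_1$ the two children of $H$, then $T^{\cal A}(H)=\{H\}\cup T^{\cal A}(H_0)\cup T^{\cal A}(H_1)$; the induction hypothesis plus the observation that $\tilde{H_0},\tilde{H_1}$ are osubformulas rooted at distinct sibling positions in $\mathbb{F}_0$ (both distinct from $\tilde{H}$) gives the claim. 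If $\tilde{H}=\cost G$, then any proper subunit of $H$ has $H$ as a proper $\cost$-superunit within $H$, forcing $T^{\cal A}(H)=\{H\}$. If $\tilde{H}=\st G$, then $T^{\cal A}(H)=\{H\}$ when $H$ is $\cal A$-unresolved; otherwise $T^{\cal A}(H)=\{H\}\cup T^{\cal A}(H^*)$, where $H^*$ is the $\cal A$-resolvent of $H$.

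Turning to the lemma itself, I enumerate the superunits of $G$ as $U_0=G,U_1,\ldots,U_k=\mathbb{F}_0[\hspace{1pt}]$, where each $U_{j-1}$ is the child of $U_j$ on the root-to-$G$ path; this is a finite chain. For any unit $E$ that $\cal A$-strictly drives $G$, its smallest common superunit $H$ with $G$ must be some $U_j$, so I partition the drivers by $j$. When $E$ is a subunit of $G$, the associated $H$ is $G$ itself and $E$ is a driver exactly when $E\in T^{\cal A}(G)$. When $E=U_j$ for some $j\geq 1$, then $E$ is vacuously a driver through $H=U_j$. Otherwise $H=U_j$ for some $j\geq 1$ and $E$ sits in the subtree of a child of $U_j$ other than $U_{j-1}$; a case analysis on the type of $U_j$ shows that the $\cost$-case is impossible by the first bullet of the strict-driving definition, the $\st$-case requires $U_j$ to be $\cal A$-resolved with its resolvent $V_j$ distinct from $U_{j-1}$, and the $\mlc$- or $\mld$-case takes $V_j$ to be the unique sibling of $U_{j-1}$. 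In each possible subcase, the set of drivers arising from this $U_j$ and lying off the path to $G$ is precisely $T^{\cal A}(V_j)$.

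Combining, the set of $\cal A$-strict drivers of $G$ equals $T^{\cal A}(G)\cup\{U_1,\ldots,U_k\}\cup\bigcup_{j\geq 1} T^{\cal A}(V_j)$ (with the $j$-th term empty when no admissible $V_j$ exists) --- a finite union of finite sets, hence finite. The distinctness of $\mathbb{F}_0$-origins across this entire union reduces to a disjointness observation in the parse tree of $\mathbb{F}_0$: the osubformulas of $\tilde{G}$, the strict osuperformulas $\tilde{U_1},\ldots,\tilde{U_k}$, and, for each applicable $j$, the osubformulas of $\tilde{V_j}$, occupy pairwise disjoint regions of that tree --- either lying strictly above $\tilde{G}$ along the chain, or lying in subtrees branching off that chain at different points. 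The principal obstacle I foresee is the clean bookkeeping of the case analysis: specifically, in the main argument, checking that the $\st$-case for $U_j$ only produces drivers when the resolvent lies off the root-to-$G$ path, and symmetrically in the auxiliary induction that the entire subtree below $H$ is discarded whenever $H$ is an $\cal A$-unresolved $\st$-unit. Once these case splits are properly laid out, the distinct-origins conclusion is essentially a tree-disjointness observation.
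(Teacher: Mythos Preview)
Your approach mirrors the paper's: both ascend from $G$ along its chain of superunits $U_0=G,U_1,\ldots,U_k$ and then, at each $U_j$, descend into the admissible ``off-path'' child (your $V_j$). Your sets $T^{\cal A}(H)$ and the structural induction on $\tilde{H}$ make the paper's informal ``upward--downward journey'' picture precise, and your finiteness argument is correct.

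The distinctness argument, however, has a genuine gap in exactly the place you flagged as delicate. You assert that the osubformulas of $\tilde{G}$, the strict osuperformulas $\tilde{U}_1,\ldots,\tilde{U}_k$, and, for each applicable $j$, the osubformulas of $\tilde{V}_j$ occupy pairwise disjoint regions of the parse tree of $\mathbb{F}_0$. This fails in the $\st$-case: when $U_j$ is a $\st$-unit with $\cal A$-resolvent $V_j\neq U_{j-1}$, the two units $V_j$ and $U_{j-1}$ are distinct children of the same $\st$-unit and therefore have the \emph{same} $\mathbb{F}_0$-origin, $\tilde{V}_j=\tilde{U}_{j-1}$. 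Hence the subtree of $\tilde{V}_j$ contains $\tilde{U}_{j-1}$ (which is on your path), as well as all of $\tilde{U}_{j-2},\ldots,\tilde{U}_0=\tilde{G}$ and the entire subtree below $\tilde{G}$; the claimed disjointness collapses. Concretely, take $\mathbb{F}_0=\st P$, $G=P[x]$, and ${\cal A}(\st P[\,])=z$ with $z\neq x$: then both $P[x]\in T^{\cal A}(G)$ and $P[z]\in T^{\cal A}(V_1)$ are ${\cal A}$-strict drivers of $G$, yet both have $\mathbb{F}_0$-origin $P$. So the step ``reduces to a disjointness observation'' does not go through as written. The paper's own proof is terse at precisely this juncture (``with a moment's thought''), so this is worth flagging; but in any case your tree-disjointness claim is false whenever a resolved $\st$-unit sits on the root-to-$G$ path with its resolvent off that path.
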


\begin{proof} Consider an arbitrary resolution $\cal A$ and an arbitrary unit $G$. 
Seeing $\mathbb{F}_{1}$ as a downward-growing tree of units induced by the parenthood relation, every unit $E$ that $\cal A$-strictly drives $G$ can be reached from $G$ by first going up the tree to the smallest common superunit $H$ of $E$ and $G$, and then descending to $E$, where $\cost$-units or $\cal A$-unresolved $\st$-units may occur (may be passed through) only during the upward journey, and where, whenever an $\cal A$-resolved $\st$-unit is passed through on the downward journey, the descend from there always happens to the $\cal A$-resolvent of that unit.  There are only finitely many upward journeys from $G$ (as many as the number of all superunits of $G$), and from each upward journey there are only finitely many downward journeys that satisfy the above conditions. 
 Furthermore, to every upward-downward journey (sequence of units) $\vec{U}$ of the above kind obviously corresponds a unique upward-downward journey in the parse tree of $\mathbb{F}_0$ --- namely, the result of replacing every unit of $\vec{U}$ by its $\mathbb{F}_0$-origin. With a moment's thought, this can be seen to imply that the $\mathbb{F}_0$-origin of the last unit of $\vec{U}$ is unique. 
\end{proof}

\begin{lemma}\label{may1b}
For any units $E,G$ and resolutions ${\cal A},{\cal A}'$, if $E$ $\cal A$-strictly drives $G$ and ${\cal A}'$ is an extension of $\cal A$, then $E$ also ${\cal A}'$-strictly drives $G$. 
\end{lemma}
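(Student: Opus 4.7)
The plan is to prove this by a direct unfolding of definitions, observing that essentially all the work has already been packaged into the notions of ``drives'' and ``extension.'' Suppose $E$ $\mathcal{A}$-strictly drives $G$; by the definition of ``$\mathcal{A}$-strictly drives,'' this happens through some specific $H$, namely the smallest common superunit of $E$ and $G$. I will show that $E$ also $\mathcal{A}'$-strictly drives $G$ through the same $H$.

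First I would observe that the defining conditions of ``$E$ drives $G$ through $H$'' (i.e., $H$ being the smallest common superunit, and $E$ having no proper $\cost$-superunits among the subunits of $H$) make no reference to any resolution. So from the assumption that $E$ $\mathcal{A}$-strictly drives $G$ through $H$, we already have that $E$ drives $G$ through $H$ simpliciter, which is the first half of what is needed for $\mathcal{A}'$-strict driving.

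The remaining task is to check the resolution-dependent condition, namely that every $\st$-unit $\st F$ which is a proper superunit of $E$ and a subunit of $H$ is $\mathcal{A}'$-resolved with $E$ a subunit of its $\mathcal{A}'$-resolvent. Fix such an $\st F$. By the hypothesis that $E$ $\mathcal{A}$-strictly drives $G$ through $H$, we know $\st F$ is $\mathcal{A}$-resolved and $E$ is a subunit of its $\mathcal{A}$-resolvent. By the definition of ``extension,'' since $\mathcal{A}'$ extends $\mathcal{A}$, the unit $\st F$ is also $\mathcal{A}'$-resolved and its $\mathcal{A}'$-resolvent coincides with its $\mathcal{A}$-resolvent. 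Hence $E$ is a subunit of the $\mathcal{A}'$-resolvent of $\st F$, which completes the verification.

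There is no real obstacle here; the lemma is a bookkeeping observation that will presumably be invoked later when extending partial resolutions without losing the strict-driving relationships already established. The only thing worth being careful about is to note explicitly that the $H$ witnessing $\mathcal{A}'$-strict driving is the same $H$ as for $\mathcal{A}$-strict driving, so no additional quantifier over $H$ needs to be handled.
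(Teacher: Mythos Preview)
Your proof is correct and is precisely the unfolding-of-definitions argument that the paper has in mind; the paper itself simply states that the lemma is ``rather immediate from the relevant definitions'' without spelling out the details you have provided.
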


\begin{proof} Rather immediate from the relevant definitions. \end{proof}

\section{Visibility}

We say that two politeral units $L$ and $M$ are {\bf opposite} to each other iff $\tilde{L}=\gneg\tilde{M}$ (which is the same as to say that $\tilde{M}=\gneg \tilde{L}$) and, for both $\xx\in\{\top,\bot\}$, the set of all $\xx$-labeled moves found in the projection of $\Omega$ (the run fixed in Section \ref{sscs}) on $L$ coincides with the set of all $\gneg \xx$-labeled moves found in the projection of $\Omega$ on $M$. 

\begin{lemma}\label{feb24a}
Every politeral unit has at most one opposite politeral unit.
\end{lemma}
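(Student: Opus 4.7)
The plan is to argue by contradiction. Suppose $L$ had two distinct opposite politeral units $M_1$ and $M_2$. Oppositeness of each $M_i$ with $L$ forces the set of $\bot$-labeled numeric moves in the projection of $\Omega$ on $M_i$ to coincide with the set of $\top$-labeled numeric moves in the projection on $L$; hence the two $\bot$-sets on $M_1$ and $M_2$ must themselves coincide. I will exhibit a numeric value belonging to one but not the other, yielding the contradiction.

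Write the expanded forms as $M_1=E_1[\vec{x}]$ and $M_2=E_2[\vec{y}]$. Since $M_1\neq M_2$, either the oformulas $E_1$ and $E_2$ differ as occurrences in $\mathbb{F}_0$, or $E_1=E_2$ and $\vec{x}\neq\vec{y}$. In the first case one has at once a regular politeral funit --- namely $E_1$ paired with the sequence of empty bitstrings of length equal to the modal depth of $E_1$ --- that is a funital restriction of $M_1$ but \emph{not} of $M_2$, and it has height zero. In the second case, fix a threshold $h^{*}$ such that for every $h\geq h^{*}$ the componentwise length-$h$ prefix $\vec{x}^{\,h}$ of $\vec{x}$ differs from $\vec{y}^{\,h}$; such an $h^{*}$ exists because $\vec{x}\neq\vec{y}$, and this differing-prefix condition automatically precludes $\vec{x}^{\,h}$ from being a componentwise prefix of $\vec{y}$. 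For every $h\geq h^{*}$, the regular politeral funit $E_1[\vec{x}^{\,h}]$ is then a funital restriction of $M_1$ but not of $M_2$, and it has height exactly $h$.

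Next I verify that $\cal E$ actually plays a fresh numeric move in a funit of the desired shape. Since $\cal E$'s own moves at iteration $t$ make certain funits of height $h_t$ active, we have $h_{t+1}\geq h_t+1$, and so $h_t\to\infty$. In the height-zero case the required funit is already on the prompt list at iteration $1$; in the positive-height case, some iteration $t$ has $h_t\geq h^{*}$, and at that iteration $E_1[\vec{x}^{\,h_t}]$ --- being a regular politeral funit of height $h_t$ --- is enumerated among the prompts and receives a fresh move $a$ from $\cal E$.

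Finally, by the freshness assertion (\ref{feb23a}), $\cal E$ plays the value $a$ at no other politeral funit, and only $\cal E$ can contribute $\bot$-labeled moves to $\Omega$. Consequently $a$ appears in the $\bot$-set on $M_1$ (since our funit is a funital restriction of $M_1$) but not on $M_2$ (since it is not a funital restriction of $M_2$), contradicting the equality of these two sets established in the first paragraph. The main technical subtlety lies in arranging for the chosen funit to appear on $\cal E$'s prompt list at some iteration, which is handled by the strict growth of $h_t$ together with the exhaustive enumeration of every height-$h_t$ regular politeral funit during a single iteration.
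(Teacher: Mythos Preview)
Your proof is correct and follows essentially the same approach as the paper's: both exploit that $\cal E$ makes fresh numeric moves in regular politeral funits of ever-increasing heights, so two distinct politeral units $M_1$ and $M_2$ eventually receive different $\bot$-labeled moves and therefore cannot both be opposite to the same $L$. Your version is simply a more explicit unpacking of what the paper compresses into one sentence (``two different politeral units\ldots have no common regular funital restrictions of heights greater than a certain bound $h$''); in particular, your case split on $E_1\neq E_2$ versus $E_1=E_2,\ \vec{x}\neq\vec{y}$ is unnecessary, since the second case's argument with $h^{*}=0$ already covers the first.
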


\begin{proof} As observed in (\ref{feb23a}), $\cal E$ always makes fresh numeric moves. And it does so in regular funits of ever increasing heights.\footnote{For the exception of funits of the form $E[\hspace{1pt}]$, of course, whose height is always $0$.} Therefore its is obvious that, if $M_1$ and $M_2$ are two different politeral units (and hence have no common regular funital restrictions of heights greater than a certain bound $h$), there is a numeric move (in fact, infinitely many numeric moves) made by $\cal E$ in $M_1$ but not in $M_2$ (and vice versa, of course). That is, the set of the $\oo$-labeled moves found in the projection of $\Omega$ on $M_1$ is different from the set of the $\oo$-labeled moves found in the projection of $\Omega$ on $M_2$. For this reason, $M_1$ and $M_2$ cannot be simultaneously opposite to the same unit. To summarize, no politeral unit $L$ can have two different opposite (to $L$) politeral units $M_1$ and $M_2$.
\end{proof}

\begin{definition}\label{dec10b}
Let $\cal A$ be a resolution.  
\begin{enumerate}
\item A {\bf  visibility chain} in $\cal A$ is a nonempty sequence 
\begin{equation}\label{jan21a}
L_1,\ M_1,\ \ \ldots,\ L_{n},\ M_{n}
\end{equation}
  of politeral units such that, for every $i$ with $1\leq i\leq n$, the following holds:
\begin{enumerate} 
  \item $L_{i}$ and $M_{i}$ are opposite to each other.
  \item If $i<n$, then $M_{i}$ \ $\cal A$-strictly drives $L_{i+1}$.
\end{enumerate}
\item  $L_1$ is said to be the {\bf head} of chain (\ref{jan21a}), and $M_n$ is sad to be the {\bf tail}. 
\item The {\bf type} of chain (\ref{jan21a}) is the sequence 
\[\tilde{L}_1,\ \tilde{M}_1,\ \ldots,\ \tilde{L}_{n}, \ \tilde{M}_{n}\]
 resulting from (\ref{jan21a}) by replacing all units with their $\mathbb{F}_0$-origins.  

\end{enumerate}
\end{definition}

In what follows, by a {\bf visibility chaintype} we mean any nonempty, even-length finite sequence  of politerals of $\mathbb{F}_0$. 
Thus, the type of every visibility chain is a visibility chaintype. Not every visibility chaintype can be the type of an actual visibility chain though, and the reason for introducing the term ``visibility chaintype'' is merely technical.   
 
\begin{lemma}\label{jan26b}
There are only countably many visibility chaintypes.
\end{lemma}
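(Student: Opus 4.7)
The plan is very short: unpack the definition and observe that a visibility chaintype is a finite sequence drawn from a finite alphabet, so countability is immediate.

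First I would note that since $\mathbb{F}_0$ is a fixed formula, the set of its politerals (positive occurrences of literals) is a finite set, say of cardinality $k$. A visibility chaintype is, by definition, a nonempty even-length finite sequence of such politerals. Thus the set of visibility chaintypes is a subset of $\bigcup_{n\geq 1} P^{2n}$, where $P$ denotes the (finite) set of politerals of $\mathbb{F}_0$.

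Next I would invoke the standard fact that a countable union of finite sets is countable: each $P^{2n}$ has exactly $k^{2n}$ elements, and the union is taken over the countable index set $\{n : n \geq 1\}$. Hence the set of visibility chaintypes is at most countable, which is what is claimed.

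There is no real obstacle here; the lemma is essentially a bookkeeping statement, and its purpose (as the paper already hints) is purely to fix terminology for later arguments where one wants to index constructions by chaintypes. The only thing worth being a bit careful about is to make sure the definition indeed refers to politerals of $\mathbb{F}_0$ (a finite stock) rather than to politeral units (of which there are uncountably many, since units involve infinite bitstrings); once this is observed, the proof is a one-line appeal to the countability of $P^{<\omega}$ for finite $P$.
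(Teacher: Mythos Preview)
Your proposal is correct and matches the paper's own proof essentially verbatim: the paper simply observes that $\mathbb{F}_0$ has only finitely many politerals and that every visibility chaintype is a finite sequence of such politerals. Your added remark distinguishing politerals from politeral units is apt and exactly the point one needs to be careful about.
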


\begin{proof} This is so because $\mathbb{F}_0$ only has finitely many politerals, and every visibility chaintype is a finite sequence of such politerals. 
\end{proof}

\begin{lemma}\label{nov8cc}
For every visibility chaintype $\vec{T}$, resolution $\cal A$ and politeral unit $J$, there is at most one $J$-tailed visibility chain in $\cal A$ of type $\vec{T}$.  
\end{lemma}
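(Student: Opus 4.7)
The plan is to prove this by induction on $n$, the length of the chain (which equals half the length of $\vec{T}$), working backwards from the tail. The statement is that once we fix $\vec{T}$, $\cal A$ and the tail $J$, both the sequence of politeral units and the auxiliary driving data are pinned down.

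For the base case $n=1$, a chain of type $\vec{T}=\tilde{L}_1,\tilde{M}_1$ with tail $J$ has $M_1=J$. Since $L_1$ and $M_1$ must be opposite politeral units, Lemma \ref{feb24a} gives at most one candidate for $L_1$, so the chain is uniquely determined.

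For the inductive step, suppose the claim holds for chaintypes of length $2(n-1)$ and consider a chain (\ref{jan21a}) of type $\vec{T}=\tilde{L}_1,\tilde{M}_1,\ldots,\tilde{L}_n,\tilde{M}_n$ with $M_n=J$. First, Lemma \ref{feb24a} forces $L_n$ to be the unique politeral unit opposite to $J$. Next, I have to identify $M_{n-1}$: it must be a unit that $\cal A$-strictly drives $L_n$, and its $\mathbb{F}_0$-origin must equal the prescribed symbol $\tilde{M}_{n-1}$ from $\vec{T}$. By Lemma \ref{nov8d}, all units that $\cal A$-strictly drive $L_n$ have pairwise distinct $\mathbb{F}_0$-origins, so at most one candidate for $M_{n-1}$ exists. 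Having uniquely identified $M_{n-1}$, the prefix $L_1,M_1,\ldots,L_{n-1},M_{n-1}$ is a visibility chain in $\cal A$ of type $\tilde{L}_1,\tilde{M}_1,\ldots,\tilde{L}_{n-1},\tilde{M}_{n-1}$ with tail $M_{n-1}$, so the induction hypothesis yields uniqueness of the rest.

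The only nontrivial point is the step that determines $M_{i-1}$ from $L_i$ and $\tilde{M}_{i-1}$: without Lemma \ref{nov8d} there could in principle be many drivers, but the injectivity of the origin map on $\cal A$-strict drivers of a fixed unit is exactly what we need. Combined with Lemma \ref{feb24a}, each backward step is deterministic, and the induction closes. Since no case analysis beyond ``$L_i$ from $M_i$'' and ``$M_{i-1}$ from $L_i$'' is needed, the proof is short and the only real obstacle --- potential ambiguity in choosing an $\cal A$-strict driver with a prescribed origin --- has already been handled by Lemma \ref{nov8d}.
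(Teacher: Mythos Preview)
Your proposal is correct and follows essentially the same approach as the paper: both work backwards from the tail $J$, alternately invoking Lemma~\ref{feb24a} to pin down each $L_i$ from $M_i$ and the ``Furthermore'' clause of Lemma~\ref{nov8d} to pin down each $M_{i-1}$ from $L_i$ and the prescribed origin $\tilde{M}_{i-1}$. The only difference is cosmetic: you package the argument as an explicit induction on $n$, while the paper compares two hypothetical chains directly and says ``the cases $i=n-2$, $i=n-3$, \ldots\ will be handled similarly.''
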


\begin{proof} Consider any resolution $\cal A$, politeral unit $J$  and visibility chaintype  
\[\vec{T} \ =\ X_1, \ Y_1,\ \ldots, X_n,\ Y_n.\]  
 Assume the following are two $J$-tailed visibility chains in $\cal A$  of type $\vec{T}$: 
\[\vec{N}\ =\ L_1,\ M_1,\ \ldots, L_n,\ M_n;\]
\[\vec{N}'\ =\ L'_1,\ M'_1,\ \ldots, L'_n,\ M'_n.\]
We want to show that $\vec{N}=\vec{N}'$, that is, show that, for every $i\in\{1,\ldots,n\}$, $M_i=M'_i$ and $L_i=L'_i$. 

Since both visibility chains are $J$-tailed, we have $M_n=M'_n=J$. This, in view of Lemma \ref{feb24a}, immediately implies that we also have $L_n=L'_n$, because $L_n$ (resp. $L'_n$) and $M_n$ (resp. $M'_n$)  are opposite.   

Now consider the case $i=n-1$ (the cases $i=n-2$, $i=n-3$, \ldots will be handled similarly). Both $M_{n-1}$ and $M'_{n-1}$ $0$-drive $L_n=L'_n$ in $\cal A$, and we  also have $\tilde{M}_{n-1}=\tilde{M}'_{n-1}=Y_{n-1}$. Therefore, by the ``Furthermore'' clause of Lemma \ref{nov8d},  $M_{n-1}=M'_{n-1}$. This, in turn, as in the previous case, immediately implies that  $L_{n-1}=L'_{n-1}$. 
\end{proof}

\begin{lemma}\label{nov8c}
For every resolution $\cal A$ and politeral unit $J$, the following set is countable:
\[\mbox{\{$I$ \ $|$ \ $I$ is a politeral unit such that there is a visibility chain in $\cal A$ with head $I$ and tail $J$\}}.\] 
\end{lemma}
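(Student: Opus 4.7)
The plan is to bound the set in question by exhibiting a surjection onto it from a countable set, namely the set of pairs consisting of a visibility chaintype together with the (at most one) $J$-tailed chain of that type in $\cal A$. All the needed ingredients are already available: Lemma \ref{jan26b} provides countability of chaintypes, and Lemma \ref{nov8cc} provides the uniqueness that makes the surjection well-defined on each chaintype.

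Concretely, fix the resolution $\cal A$ and politeral unit $J$, and denote by $S$ the set of politeral units $I$ such that some visibility chain in $\cal A$ has head $I$ and tail $J$. For each visibility chaintype $\vec{T}$, Lemma \ref{nov8cc} tells us that there is at most one $J$-tailed visibility chain in $\cal A$ of type $\vec{T}$; let $I_{\vec{T}}$ denote the head of that chain when it exists, and leave $I_{\vec{T}}$ undefined otherwise. Then every $I \in S$ is the head of some $J$-tailed visibility chain in $\cal A$, whose type $\vec{T}$ is a visibility chaintype (by definition of the latter term), and for this $\vec{T}$ we have $I = I_{\vec{T}}$. Consequently
\[
S \ \subseteq \ \{I_{\vec{T}} \mid \vec{T} \text{ is a visibility chaintype and } I_{\vec{T}} \text{ is defined}\}.
\]
Since the right-hand side is the image of a partial function whose domain is the set of visibility chaintypes, and that domain is countable by Lemma \ref{jan26b}, the right-hand side is countable; hence so is $S$.

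I do not expect any real obstacle here; the lemma is essentially a bookkeeping consequence of the two preceding lemmas, and the only thing to be careful about is that the appeal to Lemma \ref{nov8cc} is applied with the correct parameters (same $\cal A$, same tail $J$) so that the ``at most one chain'' conclusion legitimately turns the indexing-by-chaintypes into a countable bound on $S$.
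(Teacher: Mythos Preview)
Your proof is correct and takes essentially the same approach as the paper, which simply states that the lemma follows immediately from Lemmas \ref{jan26b} and \ref{nov8cc}. You have merely spelled out in detail the bookkeeping that the paper leaves implicit.
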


\begin{proof} Immediately from Lemmas \ref{jan26b} and \ref{nov8cc}.
\end{proof}

\begin{definition}\label{dec10c}
Let $\cal A$ be a resolution, and $E$ and  $G$ any units. We say that $E$ is {\bf visible} to $G$ in $\cal A$ iff  either $E$ $\cal A$-strictly drives $G$  or there is a visibility chain $\vec{N}$ in $\cal A$ such that $E$ $\cal A$-strictly drives the head of $\vec{N}$, and the tail of $\vec{N}$ $\cal A$-strictly drives $G$.
\end{definition}

\begin{lemma}\label{nov8cf}
For every resolution $\cal A$ and unit $G$, the set of the units that are visible to $G$  in $\cal A$ is countable. 
\end{lemma}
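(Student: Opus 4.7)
The plan is to unfold Definition \ref{dec10c} and cover the set of units visible to $G$ by a finite/countable nesting of sets whose sizes are controlled by the earlier lemmas. Write $V$ for the set of units visible to $G$ in $\cal A$. By definition, $V = V_{0}\cup V_{1}$, where $V_{0}$ is the set of units that $\cal A$-strictly drive $G$, and $V_{1}$ is the set of units $E$ for which there exists a visibility chain $\vec{N}$ in $\cal A$ with $E$ $\cal A$-strictly driving the head of $\vec{N}$ and the tail of $\vec{N}$ $\cal A$-strictly driving $G$.

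First I would handle $V_{0}$: it is finite by Lemma \ref{nov8d} applied directly to $G$. The real work is in bounding $V_{1}$, and the idea is to stratify it via (tail, head). Let $T$ be the set of politeral units $J$ that $\cal A$-strictly drive $G$; by Lemma \ref{nov8d}, $T$ is finite. For each $J\in T$, let $H_{J}$ be the set of politeral units $I$ for which there exists a visibility chain in $\cal A$ with head $I$ and tail $J$; by Lemma \ref{nov8c}, each $H_{J}$ is countable. Finally, for each $I\in H_{J}$, let $D_{I}$ be the set of units that $\cal A$-strictly drive $I$; by Lemma \ref{nov8d}, $D_{I}$ is finite.

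Then, directly from Definition \ref{dec10c}, I would argue
\[
V_{1}\ \subseteq\ \bigcup_{J\in T}\ \bigcup_{I\in H_{J}}\ D_{I},
\]
since any $E\in V_{1}$ comes equipped with a chain whose tail lies in $T$ and whose head lies in $H_{J}$ for that tail $J$, while $E$ itself lies in $D_{I}$ for that head $I$. The right-hand side is a finite union of countable unions of finite sets, hence countable. Adding the finite set $V_{0}$ keeps the total countable, giving the lemma.

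There is no real obstacle here; the only point that requires a bit of attention is checking that the tail of a qualifying visibility chain must itself $\cal A$-strictly drive $G$ (so that $T$ really is given by Lemma \ref{nov8d}) and that a head of such a chain is automatically politeral (so that Lemma \ref{nov8c} applies) — both of which are immediate from Definitions \ref{dec10b} and \ref{dec10c}. Everything else is bookkeeping: a finite $\times$ countable $\times$ finite product of index sets.
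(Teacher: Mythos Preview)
Your proposal is correct and is exactly the unpacking of the paper's one-line proof ``Immediately from Lemmas \ref{nov8d} and \ref{nov8c}.'' You have simply made explicit the stratification by tail, head, and driver that those two lemmas jointly control; there is no substantive difference in approach.
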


\begin{proof} Immediately from Lemmas \ref{nov8d} and \ref{nov8c}. 
\end{proof}

\begin{lemma}\label{jan6cc}
Assume $\cal A$ is a resolution, $E$ is a unit and $G$ is a subunit of $E$. Then:  

1. $E$ is visible to $G$ in $\cal A$.

2. For any unit $H$, if $G$ is visible to $H$ in $\cal A$, then so is $E$. 
\end{lemma}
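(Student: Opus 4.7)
The proof should be short and reduce to earlier lemmas, chiefly Lemma \ref{jan3a} and the transitivity of $\mathcal{A}$-strict driving (clause 2 of Lemma \ref{jan2a}).

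For clause 1, my plan is to note that since $G$ is a subunit of $E$, Lemma \ref{jan3a} immediately gives that $E$ $\mathcal{A}$-strictly drives $G$ through $E$. By the first disjunct of Definition \ref{dec10c}, this already witnesses that $E$ is visible to $G$ in $\mathcal{A}$.

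For clause 2, I would split into cases according to which clause of Definition \ref{dec10c} realizes the visibility of $G$ to $H$. If $G$ $\mathcal{A}$-strictly drives $H$ directly, then combining this with the fact, just established in clause 1, that $E$ $\mathcal{A}$-strictly drives $G$, transitivity of $\mathcal{A}$-strict driving (Lemma \ref{jan2a}, clause 2) yields that $E$ $\mathcal{A}$-strictly drives $H$, so $E$ is visible to $H$ via the first disjunct of Definition \ref{dec10c}. Otherwise, there is a visibility chain $\vec{N}$ in $\mathcal{A}$ such that $G$ $\mathcal{A}$-strictly drives the head of $\vec{N}$, and the tail of $\vec{N}$ $\mathcal{A}$-strictly drives $H$. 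Again using clause 1 together with transitivity, $E$ $\mathcal{A}$-strictly drives the head of $\vec{N}$, so the very same chain $\vec{N}$ witnesses that $E$ is visible to $H$ via the second disjunct of Definition \ref{dec10c}.

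I do not foresee any real obstacle: the whole argument is an immediate assembly of Lemmas \ref{jan3a} and \ref{jan2a}(2) with the definition of visibility. The only point worth verifying carefully is that the case analysis in clause 2 is exhaustive (which it is, by Definition \ref{dec10c}) and that transitivity of ``$\mathcal{A}$-strictly drives'' is indeed what is needed — but that is exactly the content of clause 2 of Lemma \ref{jan2a}, which we are free to use.
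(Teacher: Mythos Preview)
Your proposal is correct and follows essentially the same approach as the paper: clause 1 is immediate from Lemma \ref{jan3a}, and clause 2 is handled by the same two-case split on Definition \ref{dec10c}, using Lemma \ref{jan3a} together with the transitivity of $\mathcal{A}$-strict driving (clause 2 of Lemma \ref{jan2a}) in each case.
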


\begin{proof} Assume the conditions of the lemma.  Below, ``visible'' means ``visible in $\cal A$''. Similarly for  ``visibility chain''. 

Clause 1 is immediate from Lemma \ref{jan3a}. 
For clause 2, assume $H$ is a unit  such that $G$ is visible to $H$. If the reason for this visibility is that $G$ $\cal A$-strictly drives $H$, then the same reason makes $E$ also visible to $H$, because, by Lemma \ref{jan3a} and clause 2 of Lemma \ref{jan2a}, $E$ $\cal A$-strictly drives $H$. Now assume the reason for $G$ being visible to $H$ is that there is a visibility chain $\vec{N}$ such that $G$ $\cal A$-strictly drives the head of $\vec{N}$, and the tail of $\vec{N}$ $\cal A$-strictly drives $H$. Then, again by Lemma \ref{jan3a} and clause 2 of Lemma \ref{jan2a}, $E$ $\cal A$-strictly drives the head of $\vec{N}$. Hence $E$ is visible to $H$.  
\end{proof}

\begin{lemma}\label{may1a}
If a unit $E$ is visible to a unit $G$ in a resolution $\cal A$, then $E$ remains visible to $G$ in all extensions of $\cal A$ as well. 
\end{lemma}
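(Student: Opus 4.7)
The plan is to unwind the definition of visibility (Definition \ref{dec10c}) and apply Lemma \ref{may1b} componentwise. Assume $E$ is visible to $G$ in $\cal A$, and let ${\cal A}'$ be any extension of $\cal A$.

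First I would handle the easy case where $E$ $\cal A$-strictly drives $G$ directly. Here, Lemma \ref{may1b} immediately gives that $E$ ${\cal A}'$-strictly drives $G$, so $E$ is visible to $G$ in ${\cal A}'$ by the first disjunct of Definition \ref{dec10c}.

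Next I would handle the main case: there is a visibility chain $\vec{N} = L_1, M_1, \ldots, L_n, M_n$ in $\cal A$ such that $E$ $\cal A$-strictly drives $L_1$ and $M_n$ $\cal A$-strictly drives $G$. I would verify that $\vec{N}$ is also a visibility chain in ${\cal A}'$. Condition (a) in Definition \ref{dec10b}, that $L_i$ and $M_i$ are opposite, does not mention any resolution, so it is preserved. For condition (b), for each $i<n$ we have $M_i$ $\cal A$-strictly drives $L_{i+1}$; by Lemma \ref{may1b} this upgrades to $M_i$ ${\cal A}'$-strictly drives $L_{i+1}$. Hence $\vec{N}$ is a visibility chain in ${\cal A}'$. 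Applying Lemma \ref{may1b} once more to the two ``bracketing'' driving relations, $E$ ${\cal A}'$-strictly drives the head $L_1$ of $\vec{N}$ and the tail $M_n$ of $\vec{N}$ ${\cal A}'$-strictly drives $G$. So $E$ is visible to $G$ in ${\cal A}'$ by the second disjunct of Definition \ref{dec10c}.

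There is no real obstacle here --- the lemma is essentially a corollary of Lemma \ref{may1b} together with the observation that opposite-ness of politeral units is a resolution-independent property. The whole proof should fit in a few lines.

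\begin{proof}
Assume $E$ is visible to $G$ in $\cal A$ and ${\cal A}'$ is an extension of $\cal A$. If $E$ $\cal A$-strictly drives $G$, then by Lemma \ref{may1b}, $E$ ${\cal A}'$-strictly drives $G$, and hence $E$ is visible to $G$ in ${\cal A}'$. Otherwise, there is a visibility chain $\vec{N}=L_1,M_1,\ldots,L_n,M_n$ in $\cal A$ such that $E$ $\cal A$-strictly drives $L_1$ and $M_n$ $\cal A$-strictly drives $G$. Since the opposite-ness of $L_i$ and $M_i$ does not depend on any resolution, and since, by Lemma \ref{may1b}, each of the drivings $M_i$ $\cal A$-strictly drives $L_{i+1}$ ($1\leq i<n$) lifts to $M_i$ ${\cal A}'$-strictly drives $L_{i+1}$, the sequence $\vec{N}$ is also a visibility chain in ${\cal A}'$. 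Applying Lemma \ref{may1b} to the driving relations at the two ends of the chain, $E$ ${\cal A}'$-strictly drives $L_1$ and $M_n$ ${\cal A}'$-strictly drives $G$. Hence $E$ is visible to $G$ in ${\cal A}'$.
\end{proof}
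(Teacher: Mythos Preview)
Your proof is correct and takes essentially the same approach as the paper, which simply says ``Immediately from Lemma \ref{may1b}.'' You have merely spelled out what that one-line appeal amounts to: unwinding Definition \ref{dec10c} and applying Lemma \ref{may1b} to each $\cal A$-strict driving relation involved, while noting that opposite-ness is resolution-independent.
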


\begin{proof} Immediately from Lemma \ref{may1b}. \end{proof}

\section{Domination}

\begin{definition}\label{dec10d}
Let $F$ be a unit tree, and $\st E,G\in F$.  
A {\bf $\st E$-over-$G$ domination chain} in $F$ is a nonempty sequence
\[
 L_{1},\ M_{1},\ X_1,\ \ldots,\ L_{n},\ M_{n},\ X_n\]
of units where, with $i$ ranging over $\{1,\ldots,n\}$ and $G$ renamed into $L_{n+1}$ for convenience, the following conditions are satisfied: 
\begin{enumerate}
  \item $L_{i}$ and $M_i$ are opposite politeral units of $F$.
  \item $M_i$ drives $L_{i+1}$ through $X_i$. 
  \item $M_{i}$ does not drive any $L_j$ with $i+2\leq j\leq n+1$.
  \item $M_{i}$  is not a subunit of $\st E$.
  \item $L_{1}$ is a subunit of $\st E$.
\end{enumerate}  
The {\bf type} of such a domination chain is the sequence 
\[
 \tilde{L}_{1},\ \tilde{M}_{1},\ \tilde{X}_1,\ \ldots,\ \tilde{L}_{n},\ \tilde{M}_{n},\ \tilde{X}_n.\]

\end{definition}

When we simply say ``a domination chain'', it is to be understood as ``a $\st E$-over-$G$ domination chain in $F$ for some (whatever) 
unit tree $F$ and units $\st E,G\in F$''. 

By a {\bf domination chaintype} we shall mean any sequence  of oformulas of $\mathbb{F}_0$ of length $3n$ for some positive integer $n$. 

\begin{definition}\label{may1c}
Where $F$ is a unit tree and $\st E$ and $G$ are units,  we say that $\st E$  {\bf dominates}  $G$ in $F$ iff $\st E,G\in F$ and either $G$ is a proper subunit of $\st E$ or there is a $\st E$-over-$G$ domination chain  in $F$.
\end{definition}

\begin{lemma}\label{dec29a}
For any unit tree $F$ and  units $\st E$, $G$, $H$ of $F$, if $\st E$ dominates $G$ in $F$ and $H$ is a subunit of $G$, then $\st E$ also dominates $H$ in $F$.  
\end{lemma}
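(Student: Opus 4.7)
The plan is to case-split on the reason why $\st E$ dominates $G$, as per Definition \ref{may1c}.

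In the easy case, $G$ is a proper subunit of $\st E$. Since $H$ is a subunit of $G$, it is also a subunit of $\st E$. Moreover, we cannot have $H = \st E$, as otherwise $\st E$ would be a subunit of its own proper subunit $G$, impossible in the tree $\mathbb{F}_1$. Hence $H$ is a proper subunit of $\st E$, so $\st E$ dominates $H$.

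In the main case, I would fix a $\st E$-over-$G$ domination chain
\[L_1,\ M_1,\ X_1,\ \ldots,\ L_n,\ M_n,\ X_n\]
in $F$ (with $L_{n+1}$ renamed from $G$), and then build a $\st E$-over-$H$ chain by truncation. First, by Lemma \ref{jan3a}, $G$ strictly drives (hence drives) $H$; combining this with the fact that $M_n$ drives $G$ and with the transitivity of ``drives'' (Lemma \ref{jan2a}(1)), I get that $M_n$ drives $H$. Thus the set of indices $i \in \{1, \ldots, n\}$ for which $M_i$ drives $H$ is nonempty; let $i^*$ be its least element, and pick some $X^*$ through which $M_{i^*}$ drives $H$.

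Next, I would verify that the truncated sequence
\[L_1,\ M_1,\ X_1,\ \ldots,\ L_{i^*},\ M_{i^*},\ X^*\]
(with $L_{i^*+1} := H$) satisfies all five conditions of Definition \ref{dec10d}. Conditions 1, 4, 5 are inherited from the original chain. Condition 2 is inherited for $i < i^*$ and holds for $i = i^*$ by the choice of $X^*$. All units stay in $F$ because $F$ is closed under taking superunits and $X^*$ is a superunit of $M_{i^*} \in F$. The only delicate point is condition 3 at the new tail: for each $i < i^*$, we need $M_i$ not to drive $L_{i^*+1} = H$, which is precisely the minimality of $i^*$.

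The main obstacle --- and the reason for working with the smallest $i^*$ rather than simply relabeling the original $L_{n+1}$ as $H$ --- is that some $M_i$ with $i < n$ might drive $H$ even though it does not drive $G$, which would violate condition 3 of the naively relabeled chain. Truncating at the smallest $i^*$ with $M_{i^*}$ driving $H$ bypasses this, and the other conditions transfer mechanically.
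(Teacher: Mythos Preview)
Your proposal is correct and follows essentially the same route as the paper's proof: case-split on the definition of domination, and in the domination-chain case use Lemma~\ref{jan3a} and Lemma~\ref{jan2a}(1) to see that $M_n$ drives $H$, then truncate at the least index $i$ with $M_i$ driving $H$. Your presentation is in fact slightly more detailed (you spell out why $H\neq\st E$ in the first case and why $X^*\in F$ in the second), but the argument is the same.
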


\begin{proof} Assume the conditions of the lemma. Below ``dominates'' should be understood as ``dominates in $F$''. Similarly for ``domination chain''.
  
If the reason for $\st E$ dominating  $G$ is that $G$ is a proper subunit of $\st E$, then the same reason makes $\st E$ dominate $H$. Now assume  the reason for $\st E$ dominating  $G$ is that there is  a $\st E$-over-$G$ domination chain 
\begin{equation}\label{jan8a}
L_{1},\ M_{1},\ X_1,\ \ldots,\ L_{n},\ M_{n},\ X_n.
\end{equation} 
So, $M_{n}$ drives $G$. In view of Lemma \ref{jan3a}, we also know that $G$ drives $H$.  Hence, by clause 1 of Lemma \ref{jan2a},  $M_{n}$   drives $H$. Thus, at least one of the units $M_1,\ldots,M_n$ drives $H$. Let $i$ be the smallest integer among $1,\ldots,n$ such  that $M_i$ drives $H$, and let $Y$ be the smallest common superunit of $M_i$ and $H$. Then, obviously, the sequence 
\[L_{1},\ M_{1},\ X_1,\ \ldots,\ L_{i-1},\ M_{i-1},\ X_{i-1}, \ L_{i},\ M_{i}, \ Y\]
is a 
 $\st E$-over-$H$ domination chain, so that $\st E$ dominates $H$.
\end{proof}

\begin{lemma}\label{nov10c}
Any two opposite politeral units of any unit tree $F$ are dominated (in $F$) by exactly the same $\st$-units of $F$. 
\end{lemma}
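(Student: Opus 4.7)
The plan is to fix a $\st$-unit $\st E\in F$ and show that $\st E$ dominates $L$ iff $\st E$ dominates $M$, where $L,M$ are the two given opposite politeral units of $F$. Since ``opposite'' is symmetric, it suffices to prove one direction; I will assume $\st E$ dominates $L$ and produce a witness for $\st E$ dominating $M$. Following Definition \ref{may1c}, this witness is either a proper subunit relationship or a $\st E$-over-$M$ domination chain, and the argument splits on which alternative witnesses $\st E$ dominating $L$.

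First, suppose $L$ is a proper subunit of $\st E$. If $M$ is also a proper subunit of $\st E$, we are done. Otherwise, since $M$ is a leaf politeral and distinct from $\st E$, the unit $M$ is not a subunit of $\st E$ at all, and I exhibit the length-one $\st E$-over-$M$ domination chain $L,M,M$ (with $X_1:=M$, which is the smallest common superunit of $M_1=M$ and $L_2:=G=M$). All five clauses of Definition \ref{dec10d} are immediate: $L$ and $M$ are opposite politerals; the driving clause $M$ drives $M$ through $M$ is an instance of Lemma \ref{jan3a}; condition 3 is vacuous; $M$ is not a subunit of $\st E$; and $L$ is.

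The main case is when $L$ is not a proper subunit of $\st E$, so the hypothesis gives a $\st E$-over-$L$ domination chain $L_1,M_1,X_1,\ldots,L_n,M_n,X_n$. If $M$ happens to be a proper subunit of $\st E$ we are done; otherwise I try to append the triple $L,M,M$ at the end, producing a candidate $\st E$-over-$M$ chain of length $n+1$. Conditions 1, 2, 4, 5 transfer routinely (again using Lemma \ref{jan3a} for the new driving instance $M$ drives $M$ through $M$). The sticking point is condition 3: we need that no $M_i$ with $i\leq n$ drives the new endpoint $L_{n+2}=M$. If that already holds, the extended chain is valid. Otherwise, I let $i_0$ be the smallest index with $M_{i_0}$ driving $M$, and truncate the original chain at $i_0$, replacing $X_{i_0}$ by the smallest common superunit $Y$ of $M_{i_0}$ and $M$; the choice of $i_0$ as minimum preserves condition 3 for all smaller indices, and clauses 1, 2 (using the definition of ``drives'' together with our choice of $Y$), 4, 5 are inherited from the original chain.

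The main obstacle is precisely this non-redundancy clause (condition 3 of Definition \ref{dec10d}), which forces the extend-or-truncate dichotomy rather than a uniform construction; the rest of the case analysis is bookkeeping, facilitated by the fact that $F$, being a unit tree, contains every superunit of any of its members (so the new $X$'s lie in $F$ automatically).
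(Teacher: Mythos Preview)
Your proof is correct and follows essentially the same approach as the paper's: the key move in both is to either append the triple $L,M,M$ to an existing $\st E$-over-$L$ chain or, if some $M_i$ already drives $M$, truncate at the earliest such index and replace $X_i$ by the smallest common superunit of $M_i$ and $M$. The only difference is organizational: the paper first disposes of the case where $M$ is a subunit of $\st E$, then the case where $L$ is, and then handles the chain case, whereas you split first on the witness for $\st E$ dominating $L$; the resulting case analyses coincide.
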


\begin{proof} Consider any unit tree $F$, any two opposite politeral units $L$ and $M$ of $F$, any unit $\st E$ of $F$ and assume that $\st E$ dominates $L$, with ``dominates'' here and below meaning ``dominates in $F$'', and similarly for ``domination chain''. Our goal to show that $\st E$ also dominates $M$.

If $M$ is a subunit of $\st E$, then it is so in the proper sense, so $\st E$ dominates $M$. Otherwise, if $L$ is a subunit of $\st E$, then obviously the three-element sequence $L,M,M$ is a $\st E$-over-$M$ domination chain and, again, $\st E$ dominates $M$.  Now, for the rest of this proof, assume neither $M$ nor $L$ is a subunit of $\st E$.

Let  \[ L_{1},\ M_{1},\ X_1, \  \ldots, \ L_{n},\ M_{n},\ X_n\] be a $\st E$-over-$L$ domination chain. Thus, $M_{n}$  drives $L$. If one of the 
units $M_1,\ldots,M_n$ drives $M$, then, where $M_i$ is the leftmost of such units and $Y$ is the smallest common superunit of $M_i$ and $M$, the following sequence is obviously a $\st E$-over-$M$ domination chain:
\[L_{1},\ M_{1}, \ X_1,\  \ldots, \ L_{i-1},\ M_{i-1}, \ X_{i-1},  \ L_{i},\ M_{i},\ Y.\]
And if none of the 
units $M_1,\ldots,M_n$ drives $M$, then
the sequence 
\[L_{1},\ M_{1},\ X_1,\ \ldots,\ L_{n},\ M_{n},\ X_n,\ L, \ M,\ M\]
is a $\st E$-over-$M$ domination chain. In either case we thus have that  $\st E$ dominates $M$, as desired.  
\end{proof}

\begin{lemma}\label{apr6a}
Assume $F$ is a unit tree, and $\st E,G,G'$ are units of $F$ such that $\st E$ dominates $G$ in $F$ and $G'$ is the parent of $G$.  
Then we have:

(a) If $G'$ is a $\mlc$- or $\mld$-unit, then $\st E$ dominates $G'$ in $F$.

(b) If $G'$ is a $\st$-unit,  then either $\st E=G'$ or $\st E$ dominates $G'$ in $F$.
\end{lemma}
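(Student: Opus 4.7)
The plan is a unified argument for both clauses, organized around the observation that $G$ itself drives its parent $G'$. First I would verify this preliminary: the smallest common superunit of $G$ and $G'$ is $G'$, and the only proper superunit of $G$ that happens to be a subunit of $G'$ is $G'$ itself. Since under both hypotheses $G'$ is a $\mlc$-, $\mld$-, or $\st$-unit --- but never a $\cost$-unit --- the unit $G$ has no proper $\cost$-superunits among the subunits of $G'$, so $G$ drives $G'$ through $G'$.

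In clause (b), if $\st E = G'$ the disjunction is already satisfied, so for the remainder I would assume $\st E \neq G'$ (this is automatic in clause (a), since a $\st$-unit cannot coincide with a $\mlc$- or $\mld$-unit). Unpacking ``$\st E$ dominates $G$'' via Definition \ref{may1c} gives two subcases. If $G$ is a proper subunit of $\st E$, the upward chain in $\mathbb{F}_1$ from $G$ to $\st E$ begins by stepping to $G'$; since $\st E \neq G'$ lies strictly further up this chain, $G'$ is itself a proper subunit of $\st E$, whence $\st E$ dominates $G'$.

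The substantive subcase is when there is a $\st E$-over-$G$ domination chain
\[L_1,\ M_1,\ X_1,\ \ldots,\ L_n,\ M_n,\ X_n.\]
By condition~2 of Definition \ref{dec10d}, $M_n$ drives $L_{n+1}=G$; combined with the preliminary and clause~1 of Lemma \ref{jan2a}, this yields that $M_n$ drives $G'$. Let $i$ be the smallest index in $\{1,\ldots,n\}$ such that $M_i$ drives $G'$, and let $Y$ be the smallest common superunit of $M_i$ and $G'$. I would then verify that
\[L_1,\ M_1,\ X_1,\ \ldots,\ L_{i-1},\ M_{i-1},\ X_{i-1},\ L_i,\ M_i,\ Y\]
is a $\st E$-over-$G'$ domination chain (with $G'$ renamed to $L_{i+1}$). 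Conditions~1, 4, 5 transfer verbatim from the original chain; condition~2 at index $i$ holds by the choice of $Y$; and condition~3 is the one requiring real care.

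The anticipated obstacle is precisely condition~3 under truncation. For the truncated indices $k\leq i$ the required non-driving is inherited from the original chain, but for the new top index $k=i+1$, corresponding to $G'$, what rules out $M_j$ driving $G'$ for $j<i$ is exactly the minimality in the choice of $i$. This verification, modelled on the proof of Lemma \ref{dec29a} (where the roles are, however, dual since there $H$ descends from $G$ while here $G'$ ascends from $G$), closes the argument.
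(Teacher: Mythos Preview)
Your proof is correct and follows the same overall strategy as the paper. The one noteworthy difference is that the paper avoids your truncation step: rather than picking the smallest $i$ with $M_i$ driving $G'$, the paper observes directly that \emph{no} $M_i$ with $i<n$ can drive $G'$, since $G'$ drives $G$ (Lemma~\ref{jan3a}) and hence by transitivity any such $M_i$ would drive $G=L_{n+1}$, contradicting condition~3 of the original chain. Thus in your notation $i=n$ always holds, and the full chain $L_1,M_1,X_1,\ldots,L_n,M_n,Y$ already works; your detour through minimality, while correct, is unnecessary here (it mirrors Lemma~\ref{dec29a}, where the target is a subunit rather than a superunit and truncation may genuinely shorten the chain).
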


\begin{proof} Assume the conditions of the lemma. In what follows, ``dominates'' and ``domination chain'' should be understood in the context of 
$F$. 

{\em Clause (a)}: Assume $G'$ is a $\mlc$- or $\mld$-unit.
If the reason for  $\st E$ dominating $G$ is that $G$ is a proper subunit of $\st E$, then obviously $G'$ is also a proper subunit of $\st E$ and is thus also dominated by the latter.
Suppose now the reason for $\st E$ dominating $G$ is that  there is a $\st E$-over-$G$ domination chain 
\[L_1,\ M_1,\ X_1,\ \ldots,\ L_n,\ M_n,\ X_n.\]
Let $Y$ be $X_n$ if the latter is a superunit of $G'$, and be $G'$ otherwise. In other words, $Y$ is the smallest common superunit of $X_n$ and $G'$. As such, $Y$ is also the smallest common superunit of $M_n$ and $G'$. Further, obviously $M_n$ has no $\cost$-superunit that happens to be a subunit of $Y$. Thus, $M_n$ drives $G'$. Further, no $M_i$ with $1\leq i<n$ drives $G'$, for otherwise, in view of Lemma \ref{jan3a}, $M_i$ would also drive $G$. To summarize, $\st E$ dominates $G'$ because the following is a $\st E$-over-$G'$ domination chain: 
\[L_1,\ M_1,\ X_1,\ \ldots,\ L_n,\ M_n,\ Y.\]

{\em Clause (b)}: Assume $G'$ is a $\st$-unit.  
If the reason for $\st E$ dominating $G$ is that the latter is a proper subunit of the former, then it is clear that either $\st E=G'$, or $G'$ is a proper subunit of $\st E$ and hence is dominated by $\st E$. 
And if the reason for $\st E$ dominating $G$ is that there is a $\st E$-over-$G$ domination chain, we use an argument similar to the one employed in the proof of clause (a) and, again, find that $\st E$ dominates $G'$.  
\end{proof}

\section{Main lemma}
 
\begin{lemma}\label{nov5}
There is a total resolution 
$\cal T$ such that, with 
``dominates'' meaning ``dominates in   $\mathbb{F}_{1}^{\cal T}$'', the following conditions are satisfied:

(i) No $\st$-unit of $\mathbb{F}_{1}^{\cal T}$ dominates the root of $\mathbb{F}_{1}^{\cal T}$.\footnote{Remember that the root of $\mathbb{F}_{1}^{\cal T}$, just as the root of any other unit tree, is nothing but the unit $\mathbb{F}_0[\hspace{1pt}]$.}

(ii) The relation of domination on $\mathbb{F}_{1}^{\cal T}$ is asymmetric. That is, no two (not necessarily distinct)  $\st$-units of $\mathbb{F}_{1}^{\cal T}$ dominate  each other.  

(iii) The relation of domination on $\mathbb{F}_{1}^{\cal T}$ is transitive. That is, for any units $\st E$, $\st G$, $H$ of $\mathbb{F}_{1}^{\cal T}$, if $\st E$ dominates $\st G$ and $\st G$ dominates $H$, then $\st E$ dominates $H$. 
\end{lemma}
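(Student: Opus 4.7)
The plan is to construct $\cal T$ by transfinite recursion along a well-ordering $\{\st E_\alpha : \alpha < \kappa\}$ of the $\st$-units of $\mathbb{F}_1$, where $\kappa$ is an initial ordinal of cardinality $2^{\aleph_0}$. We define partial resolutions $\cal T_\alpha$ with $\cal T_\beta$ extending $\cal T_\alpha$ for $\alpha \leq \beta$: start with $\cal T_0 = \emptyset$, take unions at limit stages (the consistency demanded by the union-of-resolutions construction is automatic by monotonicity), and at each successor step assign to $\st E_\alpha$ a carefully selected infinite bitstring $x_\alpha$. The final $\cal T := \bigcup_{\alpha < \kappa} \cal T_\alpha$ is total by construction.

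The heart of the proof is the successor step. When extending $\cal T_\alpha$ to $\cal T_{\alpha+1}$ by picking $x_\alpha$, the effect is to trim the tree by restricting $\st E_\alpha = \st\tilde E_\alpha[\vec y_\alpha]$ to the single child $\tilde E_\alpha[\vec y_\alpha, x_\alpha]$. The only way such a trimming can ``create'' a violation of conditions (i)--(iii) in the final tree $\mathbb{F}_1^{\cal T}$ is by bringing into the committed region a politeral subunit whose unique opposite (via $\Omega$, guaranteed by Lemma \ref{feb24a}) sits in some already-committed part, in a way that either closes a $\st$-over-root chain or produces a symmetric or non-transitive pair among committed $\st$-units. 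For each already-committed $\st E_\beta$ with $\beta < \alpha$, Lemmas \ref{nov8cf}, \ref{jan26b} and \ref{nov8cc} jointly imply that only countably many politeral units can be visibility-linked to the resolvent of $\st E_\beta$ in any extension of the current resolution, and each such link corresponds to a single bitstring $x$. Thus the bad choices of $x$ contributed by $\st E_\beta$ form a countable set. Summing over the $|\alpha|$ committed units gives at most $\max(|\alpha|, \aleph_0) < 2^{\aleph_0}$ bad choices of $x_\alpha$, leaving $2^{\aleph_0}$ safe choices; we pick any such.

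The main obstacle is condition (iii), transitivity. Conditions (i) and (ii) are monotone under further trimming --- once no $\st$-unit dominates the root, no additional trimming can introduce such a domination, and similarly for asymmetry --- so they follow from the corresponding invariant maintained on the committed part of $\cal T_\alpha$. Transitivity, however, asserts the \emph{existence} of witnessing chains, and a chain may be destroyed by further trimming. The argument to recover transitivity in $\mathbb{F}_1^{\cal T}$ is a concatenation-and-pruning construction in the style of Lemmas \ref{apr6a} and \ref{nov10c}: given a $\st E$-over-$\st G$ chain and a $\st G$-over-$H$ chain both living in $\mathbb{F}_1^{\cal T}$, we concatenate them, using clause 1 of Lemma \ref{jan2a} to verify the ``drives'' conditions at the juncture, and then prune to enforce clause 3 of Definition \ref{dec10d}, obtaining a $\st E$-over-$H$ chain whose units still lie in the trimmed tree. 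Showing that the pruned sequence remains nonempty and satisfies clauses 3--5 of Definition \ref{dec10d}, and that the successor-stage choice of $x_\alpha$ never accidentally resurrects a (i)- or (ii)-violation through the splicing, will be the most delicate verification.
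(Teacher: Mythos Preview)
Your cardinality argument has a genuine gap: it conflates \emph{visibility} with \emph{domination}. The lemmas you invoke (\ref{nov8cf}, \ref{jan26b}, \ref{nov8cc}) bound the number of units \emph{visible} to a fixed target, and visibility is built from $\cal A$-\emph{strict} driving (Definition~\ref{dec10a}(2)): on the downward leg from a smallest common superunit one may pass only through $\cal A$-resolvents. Domination chains (Definition~\ref{dec10d}), by contrast, use plain driving, and the downward leg to each $L_{i+1}$ may pass through \emph{any} child of a $\cost$-unit --- of which there are $2^{\aleph_0}$ in $\mathbb{F}_1^{\cal T}$, since resolutions never trim $\cost$-children. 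Consequently, for a fixed target $\st E_\beta$ there can be continuum-many politeral units $M_1$ (scattered under distinct $\cost$-branches) whose unique opposites $L_1$ lie below $\st E_\alpha$, each forcing a different branch constraint on $x_\alpha$. Nothing in your argument rules out the bad set of $x_\alpha$ having full cardinality $2^{\aleph_0}$.

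There is also a structural mismatch in how you handle (i) and (ii). Monotonicity is correct --- a domination chain in a subtree is one in the ambient tree --- but it cuts the wrong way: at stage $0$ the tree is $\mathbb{F}_1$ itself, and (i), (ii) may already fail there (any $M$ with no $\cost$-ancestors that is opposite to some $L$ below a $\st$-unit already witnesses a domination of the root). So there is no invariant to ``maintain''; the construction must actively \emph{destroy} offending chains, and your proposal supplies no mechanism for this beyond the unjustified count.

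The paper's construction is not a cardinality argument at all. It exploits the specific dynamics of the counterstrategy $\cal E$ --- fresh numeric moves in regular funits of strictly increasing height --- to prove (via the operation $\one$ and its termination, Lemma~\ref{jan6a}) that for any fixed $\st E$, $G$ and domination \emph{chaintype} $\vec T$ one can heighten the pillaring so that no $\st E$-over-$G$ chain of type $\vec T$ survives. Iterating over all types and targets ($\two$, $\three$, $\four$) and then along a well-ordering of \emph{politeral} units yields the total $\cal T$; Lemma~\ref{jan17b} is the engine behind (i) and (ii). Your treatment of (iii) --- concatenate two chains and prune, using (ii) to dispose of the case where some $M'_i$ falls under $\st E$ --- is essentially the paper's Section~9.11 and is fine, but it presupposes (ii), which is precisely where your construction does not go through.
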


The rest of this fairly long section is exclusively devoted to a proof of the above lemma. Throughout that proof, we assume that $\mathbb{F}_0$ is not $\st$-free, for otherwise the lemma holds trivially --- there are no $\st$-units to dominate anything. 

\subsection{Pruning and pillaring}\label{sstod}

We agree that a {\bf pruner} means a set $S$ of finite bitstrings containing the empty string $\epsilon$ and satisfying the condition that, whenever $u$ is a  prefix of some string of $S$, there is an extension $v$ of $u$ (i.e. $u\preceq v$) such that both $v0$ and $v1$ are in $S$. The shortest of such extensions $v$ of $u$ (possibly $v=u$) is denoted by 
\[[u]^{S},\] which can simply be written as $[u]$ when $S$ is fixed in a context.

Next, when $S$ is a pruner, the {\bf limit set} of $S$, denoted by 
\[\limitset(S),\]
is the set of all infinite bitstrings $x$ such that there are infinitely many elements $w$ of $S$ with $w\preceq x$. Thus,  where $w_1,w_2,w_3,\ldots$ are all such elements of $S$ listed according to their lengths, we have \[w_1\preceq w_2\preceq w_3\preceq\ldots\preceq x.\] This explains the word ``limit'' in the name of $\limitset(S)$: every element of $\limitset(S)$ is the limit of an above kind of a series of elements of $S$.

It is obvious that, for every pruner $S$, $\limitset(S)$  has a continuum of elements: there is a straightforward one-to-one mapping from the set $2^\omega$ of all infinite bitstrings to  $\limitset(S)$. Namely, to any element $b_1b_2b_3\ldots$ (each $b_i$ is either $0$ or $1$) of $2^\omega$ corresponds the element of $\limitset(S)$ whose initial segments are \[\epsilon, \ \ [\epsilon]^Sb_1, \ \ [[\epsilon]^Sb_1]^Sb_2,\ \  [[[\epsilon]^Sb_1]^Sb_2]^Sb_3,\ \  \ldots\] 
(remember that $\epsilon$ is the empty bitstring).

Now, we define a  {\bf pruning}   as a function  ${\cal B}$  that sends every unit $\st E$  to some pruner, which (the pruner) is denoted by ${\cal B}(\st E)$. 

When $\cal A$ is a resolution and $\cal B$ is a pruning,  the expression \[\mathbb{F}_{1}^{\cal AB}\] 
 denotes the unit tree that is the result of trimming $\mathbb{F}_{1}^{\cal A}$ at every child $E[\vec{x},y]$  of every $\cal A$-unresolved unit $\st E[\vec{x}]$ such that $y\not\in \limitset({\cal B}(\st E[\vec{x}]))$. 

The {\bf trivial pruning}, denoted (just like the trivial resolution) by 
\(\emptyset,\)   
is the one that,  for each $\st$-unit, returns the pruner consisting of all finite bitstrings. Note that the limit set of such a pruner 
 is the set of all infinite bitstrings. Hence it is clear that $\mathbb{F}_{1}^{{\cal A}\emptyset}=\mathbb{F}_{1}^{\cal A}$. Thus, $\mathbb{F}_{1}=\mathbb{F}_{1}^{\emptyset\emptyset}$.

We say that two (finite or infinite) bitstrings $x$ and $y$ are {\bf comparable} iff $x\preceq y$ or $y\preceq x$; otherwise they are {\bf incomparable}.   

 A {\bf pillaring}  is a function  ${\cal C}$  that sends every 
unit $\st E$  to a nonempty set ${\cal C}(\st E)=\{w_1,\ldots,w_r\}$, where   
$w_1,\ldots,w_r$ are pairwise incomparable finite bitstrings, called the {\bf $\cal C$-pillars of} $\st E$. We additionally require that the following two conditions be satisfied:
\begin{itemize}
  \item There is a number $r$ such that no  $\st$-unit has more than $r$ $\cal C$-pillars; the smallest of such numbers $r$ is said to be the {\bf width} of the pillaring $\cal C$.
  \item  There is a number $s$ such that no $\cal C$-pillar of any $\st$-unit is longer than $s$; the smallest of such numbers $s$ is said to be the {\bf height} of the pillaring $\cal C$.
\end{itemize}

 When $\cal A$ is a resolution, $\cal B$ is a pruning 
 and $\cal C$ is a pillaring, the expression
\[\mathbb{F}_{1}^{\cal ABC}\] denotes the unit tree that is the result of trimming $\mathbb{F}_{1}^{\cal AB}$ at every child $E[\vec{x},y]$  of every $\cal A$-unresolved unit $\st E[\vec{x}]$ such that $y$ is not comparable with any of the $\cal C$-pillars of $\st E[\vec{x}]$.

We say that a pillaring $\cal C$ is {\bf concordant} with a pruning $\cal B$ iff, for every unit $\st E$, every $\cal C$-pillar of $\st E$ is an element of the pruner ${\cal B}(\st E)$.

The {\bf trivial pillaring}, denoted (just like the trivial resolution and the trivial pruning) by 
\(\emptyset,\)   
is the one that returns $\{\epsilon\}$  for each $\st$-unit. Note that, for any resolution $\cal A$ and pruning $\cal B$, $\mathbb{F}_{1}^{{\cal AB}\emptyset}=\mathbb{F}_{1}^{\cal AB}$. Thus, $\mathbb{F}_{1}=\mathbb{F}_{1}^{\emptyset\emptyset\emptyset}$.

A {\bf hightening} of a pillaring $\cal C$ is a pillaring ${\cal C}'$ such that, for any unit $\st E$, 
 where ${\cal C}(\st E)=\{w_1,\ldots,w_r\}$, we have  ${\cal C}'(\st E)=\{w'_1,\ldots,w'_r\}$, where, for each $i\in\{1,\ldots,r\}$, $w_i\preceq w'_i$. 
Thus, ${\cal C}'$ only differs from $\cal C$ in that it makes some (maybe all, maybe none) of the pillars ``taller''.  

\subsection{The $\uparrow$, $\downarrow$ and $\hat{\in}$ notations; unit incomparability; maturity}
Let $a$ be a natural number, and $E$ a unit (resp. funit). By ``{\bf numerically making the move $a$ in $E$}'' we shall mean making the numeric move $a$ in 
some politeral subunit (resp. politeral subfunit) of $E$. Of course, if $E$ is a politeral (f)unit, numerically making the move $a$ in $E$ simply 
means making the move $a$ in $E$. Otherwise, numerically making the move $a$ in $E$ means making a move $\beta$ in $E$ such that the string ``$.a$'' is a suffix of $\beta$.\footnote{Strictly speaking, this is so only provided that the move we are talking about is legal. However, it should be remembered that, as we agreed  in Section \ref{sscs}, our discourse is always $\Omega$, which, by (\ref{feb23b}), is a legal run.}

Let $a$ be a natural number, $\st E$ a unit, and $w$ a finite bitstring.  By ``{\bf numerically making the move $a$ in branch $w\hspace{-2pt}\uparrow$ of $\st E$}'' we shall mean making a move $\beta$ in $\st E$  such that the string ``$w.$'' is a prefix of $\beta$ and the string ``$.a$'' is a suffix of $\beta$.  The same terminology, but without ``$\uparrow$'', extends from finite bitstrings $w$ to infinite bitstrings $x$. Namely, where $x$ is an infinite bitstring,  by ``{\bf numerically making the move $a$ in branch $x$ of $\st E$}'' we mean numerically making the move $a$ in branch $w\hspace{-2pt}\uparrow$ of $\st E$ for some finite prefix $w$ of $x$. Observe that  numerically making the move $a$ in branch $w\hspace{-2pt}\uparrow$ of $\st E$ signifies numerically making the move $a$ in every branch $x$ of $\st E$ with $w\preceq x$. 
Also note that, where $\st \tilde{E}[\vec{y}]$ is the extended form of $\st E$,  numerically making the move $a$ in branch $x$ of $\st E$ means nothing but numerically making the move $a$ --- in the sense of the preceding paragraph --- in the child $\tilde{E}[\vec{y},x]$ of $\st E$.  

For a natural number $m$, we shall write \[\Omega_m\] for the initial segment of $\Omega$ (the run fixed in Section \ref{sscs}) that consists of the moves made at times (clock cycles) not exceeding $m$.
Next, where $E$ is a unit, by \[E\hspace{-2pt}\downarrow\hspace{-2pt} m\] we denote the regular funital restriction of $E$ of height $h$, where $h$ be the greatest of the heights of the $\Omega_m$-active funits (see Section \ref{ssufp}). An exception here is the case when the modal depth of $\tilde{E}$ is $0$ (and hence all funital restrictions of $E$ coincide with $E$ and are of height $0$); in this case,  \(E\hspace{-2pt}\downarrow\hspace{-2pt} m\) simply means $E$.

We say that two units $E$ and $G$ are {\bf incomparable} iff $E$ is neither a subunit nor a superunit of $G$.
Where $m$ is a natural number, we say that $E$ and $G$ are {\bf $\downarrow\hspace{-2pt} m$-incomparable} iff $E\hspace{-2pt}\downarrow\hspace{-2pt}m$ is neither a subfunit nor a superfunit of $G\hspace{-2pt}\downarrow\hspace{-2pt}m$. Note that this is the same as to say that the address of $E\hspace{-2pt}\downarrow\hspace{-2pt}m$ is neither a prefix  nor an extension of the address of $G\hspace{-2pt}\downarrow\hspace{-2pt}m$.

\begin{lemma}\label{mar16a} Below $m$ and $m'$ range over natural numbers.

1. Whenever two units are $\downarrow\hspace{-2pt} m$-incomparable, they are also $\downarrow\hspace{-2pt} m'$-incomparable for every $m'\geq m$. 

2. For any incomparable units $E$ and $G$ there is an $m$ such that $E$ and $G$ are $\downarrow\hspace{-2pt} m$-incomparable. 
\end{lemma}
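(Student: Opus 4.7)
My plan is to first set up an address calculus for funital restrictions, then prove a token-level incomparability criterion, and apply it uniformly to both clauses. The address of $E\hspace{-2pt}\downarrow\hspace{-2pt}m$ decomposes uniquely as $T_1.T_2.\ldots T_p.$ where $p$ is the number of parse-tree steps from $\mathbb{F}_0[\hspace{1pt}]$ to $\tilde{E}$: the $j$-th token $T_j$ is the single bit $0$ or $1$ if the $j$-th step descends through a $\mlc$- or $\mld$-osubformula, and is the length-$h(m)$ prefix of $x_i$ if that step descends through the $i$-th $\st/\cost$-ancestor of $\tilde{E}$ on the path (with $E=\tilde{E}[x_1,\ldots,x_n]$). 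Here $h(m)$ denotes the height used in the definition of $\hspace{-2pt}\downarrow\hspace{-2pt}m$. Two key numerical facts drive the rest: (a) $h(m)$ is non-decreasing in $m$, since $\Omega_m$ is an initial segment of $\Omega_{m'}$ for $m\leq m'$; and (b) $h(m)\to\infty$, because $\cal E$'s ROUTINE infinitely often makes moves in $\Phi$-prompts whose heights are forced to strictly exceed those of all currently active funits.

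The workhorse is a token-level characterization of address-incomparability: writing $\alpha=T_1.T_2.\ldots T_p.$ and $\beta=S_1.S_2.\ldots S_q.$, neither is a string-prefix of the other iff there is a smallest index $i\leq\min(p,q)$ with $T_j=S_j$ for $j<i$ and $T_i\neq S_i$. The ``$\Leftarrow$'' direction is checked inside the $i$-th token region: if one of $T_i,S_i$ is a strict bitstring-prefix of the other, one address has a ``$.$'' at the position where the other has a bit $0$ or $1$; if $T_i,S_i$ are bit-incomparable, they disagree at a bit position within the token; in both cases the addresses disagree at a position lying inside both their lengths. The ``$\Rightarrow$'' direction is immediate, since agreement of all common-index tokens forces the shorter address to be a prefix of the longer.

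For Clause~1, let $i$ be the smallest token-difference index for $\alpha_m,\beta_m$. Passing to $m'\geq m$, all $\mlc/\mld$-tokens are unchanged, while $\st/\cost$-tokens at index $j$ become the length-$h(m')$ prefixes (extending the length-$h(m)$ prefixes) of the \emph{same} $x_i$ and $y_i$. At index $i$ the tokens still differ: either trivially if $i$ is an $\mlc/\mld$-step, or because a bit-disagreement at position $\leq h(m)$ is preserved under any longer prefix if $i$ is $\st/\cost$. A previously equal $\st/\cost$-token at some $j<i$ could conceivably break agreement at $m'$, but that only makes the smallest-difference index smaller; either way the token characterization yields that $E,G$ are $\downarrow\hspace{-2pt}m'$-incomparable.

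For Clause~2, with $E=\tilde{E}[x_1,\ldots,x_n]$ and $G=\tilde{G}[y_1,\ldots,y_k]$, incomparability as units forces a well-defined smallest parse-tree step $i^*$ at which the tokens eventually differ: this is either a $\mlc/\mld$ step where the paths to $\tilde{E}$ and $\tilde{G}$ branch into different children (tokens differ for every $m$), or the first common $\st/\cost$ step whose associated modal-ancestor bitstrings $x_{j'}$ and $y_{j'}$ are distinct as infinite bitstrings (tokens differ once $h(m)$ reaches their first disagreement position). By minimality of $i^*$, for each $j<i^*$ the tokens agree at every $m$: $\mlc/\mld$-tokens by identical parse-tree choices, and $\st/\cost$-tokens because the corresponding $x_{j'}$ and $y_{j'}$ are equal as \emph{infinite} bitstrings. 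Since $h(m)\to\infty$, some $m$ witnesses the disagreement at $i^*$ while keeping all earlier tokens equal, and the characterization yields $\downarrow\hspace{-2pt}m$-incomparability. The only substantive obstacle is verifying that unit-incomparability really produces such an $i^*$ when the parse-tree paths of $\tilde{E}$ and $\tilde{G}$ are comparable --- there some $x_i\neq y_i$ must occur along the common modal prefix, which is essentially the content of unit-incomparability in that case.
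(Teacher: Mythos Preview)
Your proof is correct. The underlying idea --- locate the first point where the two parse-tree paths (or their associated modal bitstrings) disagree, and use the fact that the height parameter $h(m)$ is non-decreasing and unbounded --- is exactly what the paper uses as well. The difference is one of packaging: you build a uniform address/token calculus and a general incomparability criterion that handles both clauses simultaneously, whereas the paper dismisses Clause~1 as a ``straightforward observation'' and proves Clause~2 by going directly to the smallest common superunit $H$ of $E$ and $G$, splitting into the two cases $H$ is a $\mlc/\mld$-unit (where $m=0$ already works) and $H$ is a $\st/\cost$-unit (where one waits for $h(m)$ to reach the first disagreement position of the relevant infinite bitstrings). Your framework is more systematic and makes the monotonicity argument for Clause~1 fully explicit; the paper's version is terser and avoids setting up the token machinery, at the cost of leaving a few verifications (e.g., why the addresses actually become incomparable in the $\st/\cost$ case) to the reader. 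Both routes arrive at the same place with no substantive gap.
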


\begin{proof} Clause 1 is a straightforward observation. 
For clause 2, consider any incomparable units $E,G$. Let $H$ be the smallest common superunit of $E$ and $G$. 

First, assume $H$ has the form $H_1\mlc H_2$ or $H_1\mld H_2$. 
We may assume that $E$ is a subunit of $H_1$. Then $G$ is a  subunit of $H_2$, for otherwise the smallest common superunit of $E$ and $G$ would be $H_1$ --- or some proper subunit of the latter --- rather than $H$. Let $\alpha$ be the address of $H\hspace{-2pt}\downarrow\hspace{-2pt} 0$. Note that then $\alpha 0.$ is a prefix of the address of $E\hspace{-2pt}\downarrow\hspace{-2pt} 0$, and $\alpha 1.$ is a prefix of the address of $G\hspace{-2pt}\downarrow\hspace{-2pt} 0$. So, the address of $E\hspace{-2pt}\downarrow\hspace{-2pt} 0$ is neither a prefix nor an extension of the address of $G\hspace{-2pt}\downarrow\hspace{-2pt} 0$. In other words, $E$ and $G$ are $\downarrow\hspace{-2pt} 0$-incomparable.

Now assume $H$ has the form $\st H_1$ or $\cost H_1$. Note that then, in its extended form, $E$ looks like $\tilde{E}[\vec{x},y_1,\vec{z}_1]$ and $G$ looks like $\tilde{E}[\vec{x},y_2,\vec{z}_2]$, where $\vec{x}$ is a sequence of as many infinite bitstrings as the modal depth of $\tilde{H}$, and where $y_1\not=y_2$. Let $h$ be the length of the shortest bitstring that is not a common prefix of $y_1$ and $y_2$. Since $\cal E$ keeps making moves in prompts of increasing heights, by a certain time $m$, it will have made a move in a (regular) funit of height $\geq h$.  This clearly makes  $E$ and $G$\  $\downarrow \hspace{-2pt}m$-incomparable.
\end{proof}

Let $m$ be a natural number. We say that a domination chain 
\[L_1,\ M_1,\ X_1,\ \ldots,\ L_n,\ M_n,\ X_n\]
is {\bf mature} at time $m$  iff, whenever $i,j\in\{1,\ldots,n\}$, $H_i$ (resp. $H_j$) is a superunit of $M_i$ (resp. $M_j$) and $H_i$ is incomparable with $H_j$, we have that $H_i$ is also $\downarrow\hspace{-2pt} m$-incomparable with $H_j$.

\begin{lemma}\label{mar18c}
For every domination chain $\vec{D}$ there is an integer $m$ such that, for any $m'\geq m$, $\vec{D}$ is mature at time $m'$. 
\end{lemma}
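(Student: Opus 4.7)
The plan is to reduce this to a finite maximum over applications of Lemma \ref{mar16a}. The key observation is that although the quantifier over $(H_i,H_j)$ in the definition of maturity ranges over arbitrary superunits of $M_i$ and $M_j$, this quantifier is in fact over a finite set. Indeed, a unit $M_i$ with expanded form $\tilde{M}_i[\vec{x}]$ has superunits of the form $\tilde{H}[\vec{y}]$ where $\tilde{H}$ ranges over the osuperformulas of $\tilde{M}_i$ in $\mathbb{F}_0$ and $\vec{y}$ is the corresponding prefix of $\vec{x}$; since $\mathbb{F}_0$ is a finite formula, there are only finitely many such osuperformulas, hence only finitely many superunits of each $M_i$.

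Next I would assemble the proof as follows. First, let $\vec{D} = L_1, M_1, X_1, \ldots, L_n, M_n, X_n$ be the given domination chain. Consider the finite set
\[
\mathcal{P} \;=\; \bigl\{(H_i, H_j) \,\bigl|\bigr.\, i,j\in\{1,\ldots,n\},\ H_i \text{ is a superunit of } M_i,\ H_j \text{ is a superunit of } M_j,\ H_i \text{ and } H_j \text{ are incomparable}\bigr\}.
\]
By the remark above, $\mathcal{P}$ is finite. For each pair $(H_i,H_j)\in\mathcal{P}$, clause 2 of Lemma \ref{mar16a} provides some natural number $m_{H_i,H_j}$ such that $H_i$ and $H_j$ are $\downarrow\hspace{-2pt} m_{H_i,H_j}$-incomparable. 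Let
\[
m \;=\; \max\bigl\{\, m_{H_i,H_j} \,\bigl|\bigr.\, (H_i,H_j)\in\mathcal{P}\,\bigr\},
\]
with the convention that $m=0$ if $\mathcal{P}$ is empty.

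Now I would verify that this $m$ works. Consider any $m'\geq m$ and any $i,j$ and superunits $H_i, H_j$ of $M_i, M_j$ respectively such that $H_i$ and $H_j$ are incomparable. Then $(H_i,H_j)\in\mathcal{P}$, so $H_i$ and $H_j$ are $\downarrow\hspace{-2pt} m_{H_i,H_j}$-incomparable. Since $m'\geq m\geq m_{H_i,H_j}$, clause 1 of Lemma \ref{mar16a} gives that $H_i$ and $H_j$ are also $\downarrow\hspace{-2pt} m'$-incomparable. This is exactly the maturity condition at time $m'$, completing the proof.

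The main (and essentially only) conceptual point is the finiteness observation about superunits; the rest is a routine max-of-finitely-many argument combining the two clauses of Lemma \ref{mar16a}. I do not foresee any real obstacle.
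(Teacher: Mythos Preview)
Your proposal is correct and takes essentially the same approach as the paper: the paper's proof simply says ``Immediately from Lemma \ref{mar16a} and the obvious fact that any unit has only finitely many superunits,'' and your argument is precisely the natural unpacking of that sentence.
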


\begin{proof} Immediately from Lemma \ref{mar16a} and the obvious fact that any unit has only finitely many superunits. \end{proof} 

For a unit $E$ and natural numbers $a$ and $m$, we write 
\[\mbox{\em $\bot a\hat{\in}_m E$ (resp. $\top a\hat{\in}_m E$)} \]
to mean that $a$ has been numerically made in $E$ by $\cal E$ (resp. $\cal E$'s adversary) at time $m$. When we omit the subscript $m$ and simply write $\bot a\hat{\in} E$, we mean ``$\bot a\hat{\in}_m E$ for some $m$''. Similarly for $\top a\hat{\in} E$. 
As expected, $\hat{\not\in}$ means ``not $\hat{\in}$''. 

\begin{lemma}\label{mar16b}
Assume  
$F$ is a unit tree, $H,\st E,G\in F$,  $m_{0}$ is a natural number,
\begin{equation}\label{mar18d}
L_1,\ M_1,\ X_1,\ \ldots,\ L_n,\ M_n,\ X_n\end{equation}
is a $\st E$-over-$G$ domination chain in $F$ mature at time $m_{0}$, and $b_1,\ldots,b_n$ are natural numbers such that, for each $k\in\{1,\ldots,n\}$, 
$\bot b_k\hat{\in}_{m_k}M_k$\vspace{2pt} (and hence also $\bot b_k\hat{\in}_{m_k}X_k$) for some $m_k\geq m_{0}$. Then there is at most one $k\in\{1,\ldots,n\}$ such that $\tilde{H}$ is an osubformula of $\tilde{X}_k$ and $\bot b_k\hat{\in} H$.
\end{lemma}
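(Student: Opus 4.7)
The plan is to argue by contradiction. Suppose $k_1<k_2$ in $\{1,\ldots,n\}$ both satisfy the stated conditions. I will derive a contradiction with clause 3 in the definition of a $\st E$-over-$G$ domination chain.

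First I localize the moves. By the freshness observation (\ref{feb23a}), each $b_k$ was made by $\cal E$ at a unique politeral funit $P_k$. Since $M_k$ is a politeral unit and hence has no proper subunits, $\bot b_k\hat{\in} M_k$ forces $P_k$ to be a funital restriction of $M_k$ itself, so the politeral origin of $P_k$ equals $\tilde{M}_k$. From $\bot b_k\hat{\in} H$, the same $P_k$ is also a funital restriction of some politeral subunit of $H$, whose origin must therefore be $\tilde{M}_k$. Hence $\tilde{M}_k$ is an osubformula of $\tilde{H}$. I now show that, in fact, $M_k$ is a subunit of $H$. Letting $Y_k$ be the smallest common superunit of $M_k$ and $H$, either $Y_k=H$ and we are done, or the parameters of $M_k$ and $H$ first diverge at some $\st$- or $\cost$-ancestor of $\tilde{H}$, making $Y_k$ a proper $\st$- or $\cost$-superunit of $M_k$. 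Because $\tilde{H}$ is an osubformula of $\tilde{X}_k$ and $M_k$ is a subunit of $X_k$, $Y_k$ must be a subunit of $X_k$. Since $M_k$ drives $L_{k+1}$ through $X_k$, $M_k$ has no proper $\cost$-superunit inside $X_k$, ruling out the $\cost$ case for $Y_k$. The $\st$ case is excluded by invoking the chain's maturity at $m_0$: such a divergence arising for both $k=k_1$ and $k=k_2$ would produce incomparable superunits of $M_{k_1}$ and $M_{k_2}$, which by maturity must be $\downarrow\hspace{-2pt}m_0$-incomparable, but this conflicts with the address-prefix relations that the politeral funits $P_{k_i}$ must simultaneously satisfy with both $M_{k_i}$ and $H$. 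Consequently $Y_k=H$ and $M_k$ is a subunit of $H$.

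Having established that both $M_{k_1}$ and $M_{k_2}$ are subunits of $H$, note that $M_{k_1}$ drives $H$ through $H$: the smallest common superunit of $M_{k_1}$ and $H$ is $H$, and $M_{k_1}$ has no proper $\cost$-superunit inside $H$ because it has none inside $X_{k_1}$ (from chain clause 2) and $H$ is itself a subunit of $X_{k_1}$ (both being superunits of $M_{k_1}$, with $\tilde{H}$ an osubformula of $\tilde{X}_{k_1}$). By Lemma \ref{jan3a}, $H$ drives $M_{k_2}$ through $H$, and by chain clause 2, $M_{k_2}$ drives $L_{k_2+1}$ through $X_{k_2}$. Two applications of clause 1 of Lemma \ref{jan2a} (transitivity of ``drives'') yield first that $H$ drives $L_{k_2+1}$, and then that $M_{k_1}$ drives $L_{k_2+1}$. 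Since $k_2+1\geq k_1+2$, this contradicts clause 3 of the definition of a domination chain, completing the argument. The main obstacle is the intermediate step establishing that $M_k$ is a subunit of $H$: the $\cost$-divergence subcase is dispatched by the chain's driving hypothesis, but ruling out the $\st$-divergence subcase is precisely where the maturity hypothesis earns its keep.
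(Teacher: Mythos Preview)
Your argument has a genuine gap at the step where you try to establish that $M_k$ is a subunit of $H$. The claim ``Because $\tilde{H}$ is an osubformula of $\tilde{X}_k$ and $M_k$ is a subunit of $X_k$, $Y_k$ must be a subunit of $X_k$'' is unjustified and can fail. Nothing in the hypotheses forces $H$ itself to be a subunit of $X_k$; we only know that $\tilde{H}$ is an osubformula of $\tilde{X}_k$. If the coordinates of $H$ and $X_k$ diverge at some $\st$- or $\cost$-ancestor of $\tilde{X}_k$, then the smallest common superunit $Y_k$ of $M_k$ and $H$ is a \emph{proper superunit} of $X_k$, not a subunit of it. Once $Y_k$ escapes $X_k$, your $\cost$-exclusion collapses (the ``no proper $\cost$-superunit inside $X_k$'' fact about $M_k$ no longer applies), and your $\st$-exclusion via maturity is stated too loosely to carry weight: the ``incomparable superunits of $M_{k_1}$ and $M_{k_2}$'' you allude to are never identified, and in fact $Y_{k_1}$ and $Y_{k_2}$, being superunits of the same unit $H$, are automatically comparable, so maturity as you invoke it has no bite. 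In short, the intermediate conclusion ``$M_k$ is a subunit of $H$'' is stronger than what the hypotheses guarantee and need not hold.

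The paper avoids this trap by \emph{not} insisting that $M_{k_1},M_{k_2}$ sit under $H$. Instead, from $\bot b_{k_i}\hat{\in}_{m_{k_i}}M_{k_i}$ and $\bot b_{k_i}\hat{\in}H$ it extracts that $M_{k_i}\hspace{-2pt}\downarrow\hspace{-2pt}m_0$ is a subfunit of $H\hspace{-2pt}\downarrow\hspace{-2pt}m_0$, and hence there exist unital extensions $H',H''$ of $H\hspace{-2pt}\downarrow\hspace{-2pt}m_0$ (with $\tilde{H}'=\tilde{H}''=\tilde{H}$) containing $M_{k_1}$ and $M_{k_2}$ respectively. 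These $H',H''$ are not $\downarrow\hspace{-2pt}m_0$-incomparable (they share the funital restriction $H\hspace{-2pt}\downarrow\hspace{-2pt}m_0$), so maturity forces $H'=H''$. Now $H'$ is a genuine common superunit of $M_{k_1}$ and $M_{k_2}$; since $H'$ and $X_{k_1}$ are both superunits of $M_{k_1}$ and $\tilde{H}'$ is an osubformula of $\tilde{X}_{k_1}$, one gets $H'\subseteq X_{k_1}$, whence $M_{k_1}$ drives $M_{k_2}$ and, by transitivity with $M_{k_2}$ driving $L_{k_2+1}$ (or $G$), the desired contradiction with clause~3. The point is that maturity is used to identify the two proxy units $H',H''$ with each other, not to force either of them to equal $H$.
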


\begin{proof} Assume the conditions of the lemma.  For a contradiction, additionally assume that $1\leq k_1<k_2\leq n$, $ \tilde{H}$ is an osubformula of both $\tilde{X}_{k_1}$ and $\tilde{X}_{k_2}$, and both $\bot b_{k_1}\hat{\in}H$ and $\bot b_{k_2}\hat{\in} H$. 

We know that $\cal E$ made the numeric move $a_{k_1}$ in $M_{k_1}$ at time $m_{k_1}$. Analyzing the work of $\cal E$, this can easily be seen to imply that such a move was, in fact, made in the funital restriction $M_{k_1}\hspace{-2pt}\downarrow\hspace{-2pt}m_{k_1}$ of $M_{k_1}$. This, in turn, implies that $M_{k_1}\hspace{-2pt}\downarrow\hspace{-2pt}m_{k_1}$ is  
a subfunit of $H\hspace{-2pt}\downarrow\hspace{-2pt}m_{k_1}$ (otherwise we would not have $\bot a_{k_1}\hat{\in}H$). The latter  further implies that $M_{k_1}\hspace{-2pt}\downarrow\hspace{-2pt}m_{0}$ is a subfunit of $H\hspace{-2pt}\downarrow\hspace{-2pt}m_{0}$ (because $m_0\leq k_{1}$). It is therefore obvious that there is a unital extension $H'$ of 
$H\hspace{-2pt}\downarrow\hspace{-2pt} m_{0}$ such that $M_{k_1}$ is a subunit of $H'$. Using similar reasoning, we also find that  there is a unital extension $H''$ of $H\hspace{-2pt}\downarrow\hspace{-2pt} m_{0}$ such that $M_{k_2}$ is a subunit of $H''$. Also, of course, $\tilde{H}'=\tilde{H}$, so that $\tilde{H}'$ is an osubformula of $\tilde{X}_{k_1}$.

Since both $H'$ and $H''$ are unital extensions of $H\hspace{-2pt}\downarrow\hspace{-2pt} m_0$, we have $H'\hspace{-2pt}\downarrow\hspace{-2pt} m_0=H''\hspace{-2pt}\downarrow\hspace{-2pt} m_0$ ($=H\hspace{-2pt}\downarrow\hspace{-2pt} m_0$). This means that $H'$ and $H''$ are not $\downarrow\hspace{-2pt} m_0$-incomparable. Hence, due to the maturity of (\ref{mar18d}) at time $m_0$, $H'$ and $H''$ are simply the same. Thus, $H'$ is a superunit of both $M_{k_1}$ and $M_{k_2}$. 

Since both $H'$ and $X_{k_1}$ are superunits of $M_{k_1}$, $H'$ and $X_{k_1}$  have to be comparable. The latter, in view of the fact that  $\tilde{H}'$  is an osubformula of $\tilde{X}_{k_1}$, obviously  means that 
$H'$ is a subunit of $X_{k_1}$. But then, since $H'$ is a superunit of both $M_{k_1}$ and $M_{k_2}$, either $H'$ or one of its proper subunits is the smallest common superunit $Y$ of $M_{k_1}$ and $M_{k_2}$, where no proper $\st$-superunit of $M_{k_1}$ is a subunit of $Y$. 
So, $M_{k_1}$ drives $M_{k_2}$ through $Y$. In turn, since (\ref{mar18d}) is a $\st E$-over-$G$ domination chain, $M_{k_2}$ drives $L_{{k_2}+1}$ (if $k_2<n$) or $G$ (if $k_2=n$). Thus, by clause 1 of Lemma \ref{jan2a}, $M_{k_1}$ drives $L_{{k_2}+1}$ 
(if $k_2<n$) or $G$ (if $k_2=n$). This, however, is impossible according to the definition of domination chain. 
\end{proof}

\subsection{The operation \one}\label{ssone}

We say that a tuple \(({\cal A}, {\cal B}, {\cal C},\vec{T}, \st E,G)\) is {\bf $\one$-appropriate} iff 
$\cal A$ is a resolution, $\cal B$ is a pruning,  $\cal C$ is a pillaring concordant with $\cal B$, $\vec{T}$ is a domination chaintype,  $\st E,G\in\mathbb{F}_{1}^{\cal ABC}$, and 
 ${\cal C}(\st E)$ is a singleton.

For the rest of this section, we fix an arbitrary \one-appropriate tuple \(({\cal A}, {\cal B}, {\cal C},\vec{T}, \st E,G)\). The operation \one\ that we are going to describe takes such a tuple and returns a pillaring, denoted by
\[\one({\cal A}, {\cal B}, {\cal C},\vec{T}, \st E,G).\]

\subsubsection{The construction of \one}

Let us rename the pillaring $\cal C$ into ${\cal C}^{-1}$ (yes, the superscript is ``minus $1$''). 
We do our construction ---  henceforth referred to as the ``{\bf construction of \one}'' --- of the pillaring 
$\one({\cal A}, {\cal B}, {\cal C}, \vec{T}, \st E, G)$ in consecutive steps, starting from step $\#0$. 
Each step $\# i$ returns a 
pillaring ${\cal C}^{i}$ such that ${\cal C}^i$ is a hightening of ${\cal C}^{i-1}$, and ${\cal C}^{i}(\st E)$ is (remains) a singleton.  At each step $\# i$, the construction of \one\ is either {\bf complete} (terminates)  or  {\bf incomplete}. If it is complete, then step $\#i$ is the last step, and the pillaring ${\cal C}^i$ is declared to be  the sought value of $\one({\cal A}, {\cal B},{\cal C}, \vec{T}, \st E, G)$. Otherwise, we proceed to the next, $\#(i+1)$ step.\vspace{10pt}

{\bf Step $\# i$}: Let  $v_i$ be the (unique) ${\cal C}^{i-1}$-pillar of $\st E$, and let $S={\cal B}(\st E)$. 
Let ${\cal D}^i$ be the pillaring  such that 
${\cal D}^{i}(\st E)=\{[v_i]^S0\}$ and, on all other $\st$-units, ${\cal D}^{i}$ agrees with ${\cal C}^{i-1}$.

If there is no $\st E$-over-$G$ domination chain of type $\vec{T}$ in $\mathbb{F}_{1}^{{\cal ABD}^i}$, then we declare the construction to be {\em complete}, and let ${\cal C}^i={\cal D}^i$ ($= \one({\cal A}, {\cal B},{\cal C}, \vec{T}, \st E, G)$).

Otherwise we declare the construction to be {\em incomplete}, and pick --- according to some fixed choice function --- 
a $\st E$-over-$G$ domination chain
\begin{equation}\label{mar15a} 
L_{1}^{i},\ M_{1}^{i},\ X_{1}^{i},\ \ldots,\ L_{n}^{i},\ M_{n}^{i},\ X_{n}^{i}
\end{equation}
(fix it!) of type $\vec{T}$ in    $\mathbb{F}_{1}^{{\cal ABD}^i}$.

Let $m^{i}_{0}$ be the smallest natural number satisfying the following conditions:
\begin{description}
  \item[(i)] the chain (\ref{mar15a}) is mature at time $m^{i}_{0}$;\footnote{Lemma \ref{mar18c} guarantees that $m_{0}^{i}$ can always be chosen so as to satisfy this condition.} 
  \item[(ii)] there is a $\Omega_ {{m}^{i}_{0}}$-active funit whose height exceeds the height of the pillaring ${\cal D}^i$;\footnote{Since $\cal E$ makes numeric moves in funits of ever increasing heights, $m_{0}^{i}$ can always be chosen so as to satisfy this condition. Of course, our earlier assumption that $\mathbb{F}_0$ is not $\sti$-free (see the paragraph following Lemma \ref{nov5}) is relevant here.}
  \item[(iii)] unless $i=0$, we have $m^{i}_{0}> m^{i-1}_{n}$, where $m_{n}^{i-1}$ is the value obtained at step $\#(i-1)$ in the same way as we are going to obtain the value $m_{n}^{i}$ at the present step shortly. 
\end{description}

For each $k\in\{1,\ldots,n\}$, we now define the natural numbers $l^{i}_{k},a^{i}_{k},t^{i}_{k},m^{i}_{k},b^{i}_{k}$ as follows:
\begin{itemize}
  \item $l_{k}^{i}$ is the smallest number  
with $l_{k}^{i}>m^{i}_{k-1}$ such that $\cal E$ moved in $L_{k}^{i}$ at time $l_{k}^{i}$, and  
$a_{k}^{i}$ is the numeric move made by $\cal E$ in $L_{k}^{i}$ at that time  (that is, $\bot a_{k}^{i}\hat{\in}_{l_{k}^{i}}L_{k}^{i}$);
\item $t_{k}^{i}$\vspace{2pt} is the time with $t_{k}^{i}>l_{k}^{i}$ at which 
 the adversary of $\cal E$ made the same numeric move $a_{k}^{i}$ in $M_{k}^{i}$ (that is, $\top a_{k}^{i}\hat{\in}_{t_{k}^{i}}M_{k}^{i}$);\footnote{$\cal E$'s adversary  indeed must have made the move $a_{k}^{i}$ in $M_{k}^{i}$, for otherwise $L_{k}^{i}$ and $M_{k}^{i}$ would not be opposite; however, this event could only have occurred after $\cal E$ made the move $a_{k}^{i}$ because,  as we remember from (\ref{feb23a}), $\cal E$ always chooses fresh numbers for its numeric moves.}
\item $m_{k}^{i}$ is the smallest 
number with $m_{k}^{i}> t^{i}_{k}$ such that $\cal E$ moved in $M_{k}^{i}$ at time $m_{k}^{i}$, and 
$b_{k}^{i}$ is the numeric move made by $\cal E$ in $M_{k}^{i}$ at that time  (that is, $\bot b_{k}^{i}\hat{\in}_{m_{k}^{i}}M_{k}^{i}$).
\end{itemize}

We now let ${\cal C}^{i}$ be the (unique) pillaring  satisfying the following conditions:
\begin{enumerate}
  \item ${\cal C}^{i}(\st E)=\{[v_i]^S1\}$. 
  \item Consider any unit $\st H$ other than $\st E$. Let $w_1,\ldots,w_p$ be the ${\cal C}^{i-1}$-pillars of 
$\st H$. Then the ${\cal C}^{i}$-pillars of $\st H$ are $w'_1,\ldots,w'_p$ where, for each $j\in\{1,\ldots,p\}$, $w'_j$ is defined as follows: 
\begin{enumerate}
  \item  Assume, for one of $k\in\{1,\ldots,n\}$, $\bot b_{k}^{i}\hat{\in}\st H$  and $\st \tilde{H}$ is an osubformula of $\tilde{X}_{k}^{i}$ (in view of Lemma \ref{mar16b}, such a $k$ is unique). Further assume that the move $b_{k}^{i}$  made by $\cal E$ at time $m^{i}_{k}$ was numerically  made in branch $u\hspace{-2pt}\uparrow$ of $\st H$. Then $w'_j=[u]^S1$. 
  \item Otherwise $w'_j=w_j$.
\end{enumerate}
  \end{enumerate}

This completes our description of step $\# i$.\vspace{5pt}

\begin{lemma}\label{mar27c}
Where all parameters are as in the description of Step $\# i$, the pillaring ${\cal C}^i$ is a hightening of the pillaring ${\cal C}^{i-1}$ and, as long as the former is concordant with $\cal B$, so is the latter.  

Consequently (by induction on $i$), ${\cal C}^i$ is a hightening of $\cal C$ and is (remains) concordant with $\cal B$. 
\end{lemma}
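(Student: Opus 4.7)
The plan is to split into two checks---hightening and concordance---argue them unit-by-unit, and then peel off the ``consequently'' clause by a routine induction on $i$.

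\emph{Hightening.} For $\st E$, the sole ${\cal C}^{i-1}$-pillar $v_i$ satisfies $v_i\preceq [v_i]^S\preceq [v_i]^S1$ directly from the definition of $[\cdot]^S$, so the new ${\cal C}^i$-pillar $[v_i]^S1$ extends $v_i$. For any $\st H$ distinct from $\st E$, pillars treated by case (b) of the construction are left literally unchanged. In case (a) the new pillar has the form $[u]^S1$, and what must be shown is that the corresponding old pillar $w_j$ satisfies $w_j\preceq u$. Two ingredients will be used. First, because $M_k^i$ lives in $\mathbb{F}_{1}^{{\cal ABD}^i}$, the child $\tilde{H}[\vec{y},x]$ of $\st H$ on the path down to $M_k^i$ must survive the $\cal C$-trimming, so the infinite bitstring $x$ extends some pillar $w_j$ of $\st H$; and since the move $b_k^i$ was numerically made in branch $u\hspace{-2pt}\uparrow$ of $\st H$, we also have $u\preceq x$. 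Second, condition (ii) in the choice of $m_0^i$ guarantees that already at time $m_0^i$ some active funit has height strictly exceeding that of ${\cal D}^i$; because $\cal E$ moves exclusively in regular prompts of strictly increasing heights and $m_k^i>m_0^i$, the portion $u$ of the address attributable to $\st H$ has length strictly greater than the height of ${\cal D}^i$, which itself is at least $|w_j|$. Combining these, $w_j$ and $u$ are two prefixes of the same infinite string $x$ with $|w_j|<|u|$, forcing $w_j\preceq u$ and hence $w_j\preceq [u]^S1$.

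\emph{Concordance.} Pillars left untouched by case (b) remain in the relevant pruner by the concordance of ${\cal C}^{i-1}$. The newly assigned pillars $[u]^S1$ (for $\st H$) and $[v_i]^S1$ (for $\st E$) lie in the appropriate pruner $S$ by the very definition of $[\cdot]^S$, which stipulates that both $[u]^S0$ and $[u]^S1$ are elements of $S$.

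\emph{The ``consequently'' clause.} By induction on $i$, with base ${\cal C}^{-1}={\cal C}$ (concordant by the \one-appropriateness hypothesis) and inductive step given exactly by the conjunction of the two claims just verified, combined with transitivity of the hightening relation.

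The main obstacle is establishing $w_j\preceq u$ in the hightening argument of case (a); once the length bound supplied by condition (ii) and the common-extension fact $w_j,u\preceq x$ coming from the survival of $M_k^i$ in $\mathbb{F}_{1}^{{\cal ABD}^i}$ are in hand, the remainder is a matter of unwinding the definitions of $[\cdot]^S$ and of $\one({\cal A},{\cal B},{\cal C},\vec{T},\st E,G)$.
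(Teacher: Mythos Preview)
Your proposal is correct and takes essentially the same approach as the paper: the paper's own proof is a one-sentence appeal to ``our choice of ${\cal C}^i$ and the condition (ii) that $m^{i}_{0}$ is required to satisfy,'' and you have simply unpacked those two ingredients---condition (ii) supplying the length bound $|w_j|<|u|$ via the prompt-height mechanism of $\cal E$, and the definition of $[\cdot]^S$ supplying both the prefix relation $u\preceq[u]^S$ and membership of $[u]^S1$ in the pruner. Your write-up is considerably more detailed than the paper's, but the underlying argument is the same.
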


\begin{proof} Rather obvious in view of our choice of ${\cal C}^i$ and the condition (ii) that $m^{i}_{0}$ is required to satisfy.  
\end{proof}

Of course, the value of $\one$ remains undefined on $({\cal A}, {\cal B}, {\cal C},\vec{T}, \st E,G)$ if the construction of $\one$ never terminates.  However, the following lemma, proven in Section \ref{ssjan}, rules out this possibility, so that $\one$ is defined on all $\one$-appropriate tuples:   

\begin{lemma}\label{jan6a}
There is a number $i$  such that the construction of {\em \one}\ is complete at step $\# i$. 
\end{lemma}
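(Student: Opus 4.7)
I would argue by contradiction, supposing that the construction of $\one$ runs forever and produces, for each $i\geq 0$, a pillaring ${\cal C}^i$, a $\st E$-over-$G$ domination chain $L_1^i, M_1^i, X_1^i, \ldots, L_n^i, M_n^i, X_n^i$ of the fixed type $\vec{T}$ in $\mathbb{F}_{1}^{{\cal A}{\cal B}{\cal D}^i}$, together with the associated times $m_0^i<l_1^i<t_1^i<m_1^i<\cdots<m_n^i$ and moves $a_k^i, b_k^i$ for $k\in\{1,\ldots,n\}$.

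First I would record two easy consequences of the setup. The pillar $v_i$ of $\st E$ in ${\cal C}^i$ satisfies $v_{i+1}=[v_i]^S 1$, where $S={\cal B}(\st E)$, so the sequence $v_0\prec v_1\prec v_2\prec\cdots$ strictly increases in $S$ and converges to an infinite bitstring $x_\infty\in\limitset(S)$. Simultaneously, condition (iii) on $m_0^i$ forces $m_0^0<m_n^0<m_0^1<m_n^1<\cdots$, and hence, by (\ref{feb23a}), the numeric moves $b_k^i$ that $\cal E$ makes in $M_k^i$ are all fresh and pairwise distinct across $i$ and $k$.

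Next I would exploit the rigidity imposed by the fixed chaintype $\vec T$. Because the $\mathbb{F}_0$-origins $\tilde L_k, \tilde M_k, \tilde X_k$ are the same at every step, the ``Furthermore'' clause of Lemma \ref{nov8d}, together with maturity at time $m_0^i$ and Lemma \ref{mar16b}, pin the location of each $M_k^i$ inside $\tilde M_k$ very tightly. The core of the argument is that, by clause 2(a) of the construction of ${\cal C}^i$, whenever $\cal E$'s move $b_k^i$ lies in a branch $u\hspace{-2pt}\uparrow$ of a $\st$-unit $\st H$ with $\tilde H$ an osubformula of $\tilde X_k^i$, the $j$-th pillar of $\st H$ is raised to $[u]^S 1$, and by Lemma \ref{mar27c} this raise persists through every subsequent hightening. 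Each such raise genuinely shrinks the subtree of $\mathbb{F}_{1}^{{\cal A}{\cal B}{\cal C}^j}$ available at steps $j>i$, cutting off exactly the branches in which a competing $(\tilde L_k,\tilde M_k)$-spoke could subsequently reside.

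The main obstacle, and the technical heart of the proof, is the bookkeeping that shows these shrinkings eventually eliminate every configuration of type $\vec T$. I would attack this by a pigeonhole argument on the finitely many $\mathbb{F}_0$-origins combined with an induction on the chain length $n$: for each position $k$, the fresh moves $b_k^i$ can only be deposited finitely often in any $\st$-unit $\st H$ with $\tilde H$ an osubformula of $\tilde X_k$ before the pillar of $\st H$ is forced past every branch where an admissible $M_k^i$-candidate could sit, and then the chain as a whole disappears. An alternative route, which I would keep in reserve, is to form the limit pillaring ${\cal C}^\infty$ by taking the componentwise limits of the $v_i$-style hightenings and argue, by the continuity of the trimming operation in the pillars, that $\mathbb{F}_{1}^{{\cal A}{\cal B}{\cal C}^\infty}$ contains no $\st E$-over-$G$ chain of type $\vec T$, contradicting the existence of such a chain at every finite stage.
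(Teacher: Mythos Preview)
Your plan has the right overall shape --- contradiction, pigeonhole, induction along the chain --- but it is missing the two specific ideas that actually make the argument close, and your alternative route does not work.

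The paper's proof is quantitative. One fixes a bound $r$ on both the width of $\cal C$ and the modal depth of $\mathbb{F}_0$, assumes steps $0,\ldots,r^{rn}$ are all incomplete, and then builds a descending chain of index sets $\mathbb{S}_0\supseteq\mathbb{S}_1\supseteq\cdots\supseteq\mathbb{S}_n$ with $|\mathbb{S}_k|>r^{r(n-k)}$. The engine is a counting lemma (Lemma~\ref{dec20a}): if a set $\mathbb{S}$ of steps satisfies $\bot a_k^j\hat{\not\in}L_k^i$ for all $j<i$ in $\mathbb{S}$, then at most $r^r$ of the moves $b_k^j$ can land in the single unit $X_k^{i_0}$ (where $i_0=\max\mathbb{S}$). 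Its proof descends through the at most $r$ $\st$-superunits of $M_k^{i_0}$ below $X_k^{i_0}$, applying pigeonhole on the at most $r$ pillars at each level; this is where maturity and Lemma~\ref{mar16b} are used. Your sketch never isolates this lemma or any analogue, and the sentence ``the fresh moves $b_k^i$ can only be deposited finitely often \ldots\ before the pillar of $\st H$ is forced past every branch where an admissible $M_k^i$-candidate could sit'' does not describe the actual mechanism: the pillar-raise $w_j\mapsto[u]^S1$ does not cut off the branch $u$ in which the move was made --- it \emph{follows} it.

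The second missing idea is the endgame. Once $|\mathbb{S}_n|\geq 2$, one has $i\neq j$ in $\mathbb{S}_n$ with $\bot b_n^j\hat{\not\in}X_n^i$ while $\bot b_n^j\hat{\in}X_n^j$, so $X_n^i$ and $X_n^j$ are incomparable. But every $X_n^{\cdot}$ is a superunit of the fixed target $G$ (this is where the chain being ``over $G$'' is finally used), and two superunits of the same unit cannot be incomparable. Your plan never invokes the anchor $G$, so it has no way to reach a contradiction.

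Finally, the reserve route through a ``limit pillaring'' ${\cal C}^\infty$ does not make sense in this framework: pillars are finite bitstrings by definition, and even if you passed to the limit set, the existence of a type-$\vec{T}$ chain at each finite stage gives you no chain in the limit, since the chains at different stages live in pairwise disjoint branches of $\st E$ (the pillars $[v_i]^S0$ versus $[v_i]^S1$ were chosen precisely for this).
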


\subsubsection{Termination of the construction of \one}\label{ssjan}
This section is exclusively devoted to a proof of Lemma \ref{jan6a}. A fixed $\one$-appropriate tuple  we are dealing with is the same \(({\cal A}, {\cal B}, {\cal C},\vec{T}, \st E,G)\) as before. 
So are the parameters $v_i,S,L_{k}^{i},M_{k}^{i},X_{k}^{i},a^{i}_{j},b^{i}_{j},l^{i}_{j},t^{i}_{j},m^{i}_{j}$ from the construction of $\one$. Remember that the length of $\vec{T}$ is $3n$, and that $S={\cal B}(\st E)$. 

We additionally fix a sufficiently large integer $r$ such that 
\begin{equation}\label{jan5a}
\mbox{\em the width of $\cal C$ (and hence, by Lemma \ref{mar27c}, of each ${\cal C}^{i}$)  does not exceed $r$;}
\end{equation}
\begin{equation}\label{jan5b}
\mbox{\em the modal depth of no osubformula of $\mathbb{F}_0$ exceeds $r$.}
\end{equation}


For a contradiction, deny the assertion of Lemma \ref{jan6a}. In particular, assume that none of the steps $\#0$ through $\# r^{rn}$ of the construction of \one\ is final.

For a finite set $K$, we write $|K|$ to denote the cardinality of $K$.

\begin{lemma}\label{dec20a}
Assume $k\in\{1,\ldots,n\}$,  $\mathbb{S}$ is a nonempty subset of $\{0,\ldots,r^{rn}\}$, $i_0$ is the greatest element of $\mathbb{S}$, and the following condition is satisfied:
\begin{equation}\label{jan23a} 
\mbox{For any  $i,j\in\mathbb{S}$ with $j<i$, $\bot a_{k}^{j}\hat{\not\in}L_{k}^{i}$.}
\end{equation}
Further, let
\[\mathfrak{S}_0\ =\ \{j\ |\ j\in \mathbb{S},\ \bot b_{k}^{j}\hat{\in}X_{k}^{i_0}\}.\] 
Then $|\mathfrak{S}_0|\leq r^r$. 
\end{lemma}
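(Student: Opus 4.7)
\textbf{Plan for Lemma \ref{dec20a}.} My strategy is to exhibit an injection from $\mathfrak{S}_0$ into a set whose cardinality is visibly bounded by $r^r$.

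The first step is to pin down the only immediate consequence of hypothesis (\ref{jan23a}): the politeral units $L_k^j$ (for $j\in\mathbb{S}$) are pairwise distinct. Indeed, since $\cal E$ made the numeric move $a_k^j$ in $L_k^j$ at time $l_k^j$, we have $\bot a_k^j\hat{\in} L_k^j$; if $L_k^j=L_k^i$ for some $j<i$ in $\mathbb{S}$, this would directly contradict (\ref{jan23a}). By freshness (\ref{feb23a}), the ``matching'' adversary moves $\top a_k^j$ in $M_k^j$ then force, via Lemma \ref{feb24a} (uniqueness of opposites), that the units $M_k^j$ are pairwise distinct as well.

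Next, for each $j\in\mathfrak{S}_0$, I analyse the unique regular politeral funit $F_k^j$ in which $\cal E$ numerically made $b_k^j$ at time $m_k^j$. This funit is a funital restriction of $M_k^j$ and has $\mathbb{F}_0$-origin $\tilde{M}_k$. The hypothesis $\bot b_k^j\hat{\in} X_k^{i_0}$ means precisely that the address of $F_k^j$ factors as $\alpha\beta_j$, where $\alpha$ is the address of a funital restriction of $X_k^{i_0}$ and $\beta_j$ corresponds to the (unique, in $\mathbb{F}_0$'s parse tree) descent from $\tilde{X}_k$ down to the politeral oformula $\tilde{M}_k$. I assign to $j$ a signature $\sigma(j)$ given by the sequence of finite bitstring choices recorded in $\beta_j$ at each $\st$- and $\cost$-operator on this descent, but considered only modulo comparability with $\cal C^{i_0-1}$-pillars. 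The path from $\tilde{X}_k$ to $\tilde{M}_k$ traverses at most $r$ modal operators (by (\ref{jan5b})), and at each $\st$-position the relevant funit lies in $\mathbb{F}_1^{{\cal ABD}^{i_0}}$, forcing its bitstring to be comparable with one of at most $r$ pillars (by (\ref{jan5a})); the $\cost$-positions contribute a canonical choice inherited from $X_k^{i_0}$'s expansion. Thus the number of possible signatures is at most $r^r$.

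Injectivity of $j\mapsto\sigma(j)$ will follow from the distinctness of the $M_k^j$ established in the first step together with freshness: if $\sigma(j)=\sigma(j')$ for distinct $j,j'\in\mathfrak{S}_0$, then the politeral subunits $P_j,P_{j'}$ of $X_k^{i_0}$ containing $b_k^j,b_k^{j'}$ coincide, which (using the opposition with $L_k^j,L_k^{j'}$ and Lemma \ref{feb24a}) collapses $M_k^j$ and $M_k^{j'}$, contradicting the first step.

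The main obstacle I anticipate is making the signature honestly well-defined and ensuring it genuinely takes only $r^r$ values: the pillars may be of varying heights, and the relation ``comparable with some pillar'' need not partition bitstrings into an at-most-$r$ set of equivalence classes unless one argues carefully using the pairwise incomparability of pillars within a single $\st$-unit. A secondary subtlety is the role of the $\cost$-positions on the descent path, which in principle admit continuum-many choices; showing that these positions are in fact pinned down by the geometry of $X_k^{i_0}$ (i.e., that $\beta_j$ threads a unique ``$\cost$-skeleton'' inside $X_k^{i_0}$'s expansion) is what allows the count to collapse to exactly $r$ possibilities per modal position on the path.
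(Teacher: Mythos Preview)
Your signature map $\sigma$ is not injective, and the argument you give for injectivity does not work. Suppose $\sigma(j)=\sigma(j')$. All this tells you is that at each $\st$-level the finite branches $w_\ell^{j}$ and $w_\ell^{j'}$ are comparable with the \emph{same pillar}; it does not say the branches are equal, nor that the politeral subunits of $X_k^{i_0}$ in which $b_k^{j}$ and $b_k^{j'}$ land coincide. Even if those subunits did coincide, that would still not give $M_k^{j}=M_k^{j'}$: the move $b_k^{j}$ is made in a \emph{funit}, and a single funit is a funital restriction of continuum-many distinct units, so ``$\bot b_k^{j}\hat\in P$ and $\bot b_k^{j}\hat\in M_k^{j}$'' does not force $P=M_k^{j}$. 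Your first step (distinctness of the $M_k^{j}$) is correct but is not what (\ref{jan23a}) is really used for; the contradiction the paper derives is not $M_k^{j}=M_k^{j'}$ but rather $\bot a_k^{j}\hat\in L_k^{i}$ for some $j<i$ in $\mathbb S$.

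There is also a structural problem with defining $\sigma$ beyond the first level. At level~$1$ the $\st$-unit $\st\tilde Y_1[\vec x_0^{\,i_0}]$ is fixed by $i_0$, but at level~$2$ the relevant $\st$-unit is $\st\tilde Y_2[\vec x_0,x_1]$, and the infinite string $x_1$ is \emph{not} determined by $i_0$; different $j$'s see different units here, so ``which of the at most $r$ pillars'' is not a well-posed question across $j$'s. (Incidentally, there are no $\cost$-positions on the descent from $\tilde X_k$ to $\tilde M_k$, since $M_k$ drives through $X_k$; that part of your plan is moot.) The paper handles this by an \emph{iterated} pigeonhole: after filtering $\mathfrak S_0$ to a subset $\mathfrak S_1$ all lying under one pillar at level~$1$, it takes $i_1=\max\mathfrak S_1$ and uses the unit $\st\tilde Y_2[\vec x_0^{\,i_1},x_1^{\,i_1}]$ at level~$2$, proving first that $\bot b_k^{j}\hat\in \tilde Y_1[\vec x_0^{\,i_1},x_1^{\,i_1}]$ for all $j\in\mathfrak S_1$. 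The reason each $w_1^{j}$ is a prefix of a ${\cal C}^{i_0}$-pillar is not that the funit lies in $\mathbb F_1^{{\cal ABD}^{i_0}}$ (it need not), but clause~2(a) in the definition of ${\cal C}^{j}$, which forces the new pillar to extend the branch $u$ where $b_k^{j}$ was made, combined with the hightening property of Lemma~\ref{mar27c}. After $d\le r$ rounds one gets $|\mathfrak S_d|>1$ and $\bot b_k^{j}\hat\in M_k^{i_d}$ for all $j\in\mathfrak S_d$; then $\top a_k^{j}\hat\in M_k^{i_d}$ (since ${\cal E}$'s move $b_k^{j}$ was made only in funits where the adversary had already made $a_k^{j}$), whence by opposition $\bot a_k^{j}\hat\in L_k^{i_d}$, contradicting (\ref{jan23a}). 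That chain, not an injection, is the missing idea.
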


\begin{proof} 
Let $k$, $\mathbb{S}$, $i_0$, $\mathfrak{S}_0$ be as in the conditions of the lemma. In addition, we rename $X_{k}^{i_0}$  into $Y_0$ and assume that the extended form of the latter is $\tilde{Y}_{0}[\vec{x}_{0}^{i_0}]$.  Thus, 
\[X_{k}^{i_0}=Y_0=\tilde{Y}_0[\vec{x}_{0}^{i_0}].\]
For a contradiction, assume  
\begin{equation}\label{jan24a}
|\mathfrak{S}_0|>r^r.
\end{equation}

Let $d$ be the number of the $\st$-superunits of $M_{k}^{i_0}$ that happen to be subunits of 
$Y_0$. Namely, let $\st Y_1,\ \ldots,\st Y_d$ be all such superunits, with  $\st Y_1$  being a proper superunit of $\st Y_2$, $\st Y_2$  a proper superunit of $\st Y_3$, and so on. At the same time, since $M_{k}^{i_0}$ drives $L_{k+1}^{i_0}$ (if $k<n$) or $G$ (if $k=n$) through $Y_0$, 
$M_{k}^{i_0}$ has 
no $\cost$-superunit that happens to be a subunit of  $Y_0$. This means that, for some infinite bitstrings $x_{1}^{i_0},\ldots,x_{d}^{i_0}$ (fix them), we have 
\[\begin{array}{rcl}
\st Y_1 & = & \st \tilde{Y}_1[\vec{x}^{i_0}],\\
 \st Y_2& = & \st \tilde{Y}_2[\vec{x}^{i_0},x_{1}^{i_0}],\\
 \st Y_3 & = &\st \tilde{Y}_3[\vec{x}^{i_0},x_{1}^{i_0},x_{2}^{i_0}],\\
 & \cdots &  \\
  \st Y_d & = &\st \tilde{Y}_d[\vec{x}^{i_0},x_{1}^{i_0},\ldots,x_{d-1}^{i_0}],\\ 
M_{k}^{i_0} & = & \tilde{M}_{k}^{i_0}[\vec{x}^{i_0},x_{1}^{i_0},\ldots,x_{d}^{i_0}].
\end{array}
\]

Similarly, for any $j\in\mathfrak{S}_0$, let $\vec{x}_{0}^{j},x_{1}^{j},\ldots,x_{d}^{j}$ be such that $X_{k}^{j}=\tilde{Y}_{0}[\vec{x}_{0}^{j}]$, the latter is a  superunit of  $\st \tilde{Y}_1[\vec{x}^{j}_{0}]$, the latter is a proper superunit of  
$\st \tilde{Y}_2[\vec{x}^{j}_{0},x_{1}^{j}]$, the latter  is a proper
superunit of $\st \tilde{Y}_3[\vec{x}^{j}_{0},x_{1}^{j},x_{2}^{j}]$,   \ldots, 
$\st \tilde{Y}_{d-1}[\vec{x}^{j}_{0},x_{1}^{j},\ldots,x_{d-2}^{j}]$ is a proper superunit of  $\st \tilde{Y}_{d}[\vec{x}^{j}_{0},x_{1}^{j},\ldots,x_{d-1}^{j}]$, and the latter is a superunit of $M_{k}^{j}=\tilde{M}_{k}^{j}[\vec{x}^{j}_{0},x_{1}^{j},\ldots,x_{d}^{j}]$.

According to our definition of $\mathfrak{S}_0$, we have: 
\[
\mbox{\em for any $j\in\mathfrak{S}_0$,  \ $\bot {b}^{j}_{k} \hat{\in}\tilde{Y}_{0}[\vec{x}^{i_0}_{0}]$.}\]

 For each $j\in\mathfrak{S}_0$,  let $w^{j}_{1}$  be the finite bitstring such that the corresponding move 
$b_{k}^{j}$  was numerically made (at time $m_{k}^{j}$) by $\cal E$ in branch $ w^{j}_{1}\hspace{-2pt}\uparrow$ of $\st \tilde{Y}_1[\vec{x}_{0}^{i_0}]$. Our choice of the pillaring ${\cal C}^{j}$ in (step $\#j$ of) the construction of $\one$ guarantees that each $w^{j}_{1}$ is a prefix of one of the ${\cal C}^{j}$-pillars of $\st \tilde{Y}_1[\vec{x}_{0}^{i_0}]$.\footnote{A relevant fact here is that $\sti \tilde{Y}_{1}[\vec{x}_{0}^{i_0}]\not=\sti E$. The latter is guaranteed by condition 4 of Definition \ref{dec10d}.} And since, by Lemma \ref{mar27c}, every subsequent pillaring (namely, the pillaring ${\cal C}^{i_0}$) is a hightening of the previous ones (namely, of the pillaring ${\cal C}^{j}$), we find that each $w^{j}_{1}$ is a prefix of one of the ${\cal C}^{i_0}$-pillars of $\st \tilde{Y}_1[\vec{x}_{0}^{i_0}]$. 
Next, by (\ref{jan5a}), we know that the number of such pillars does not exceed $r$. Thus, by the 
pigeonhole principle, there is a  ${\cal C}^{i_0}_{k}$-pillar $p_1$ of $\st Y_1[\vec{x}_{0}^{i_0}]$ and a subset $\mathfrak{S}_{1}$ of $\mathfrak{S}_0$ with 
\(|\mathfrak{S}_{1}|\geq \frac{|\mathfrak{S}_0|}{r}\) such that, for every $j\in\mathfrak{S}_{1}$, $w^{j}_{1}\preceq p_1$. But, by (\ref{jan24a}), 
$|\mathfrak{S}_0|>r^r$. So, \(|\mathfrak{S}_{1}|> r^{r-1}\). In view of (\ref{jan5b}), we also have $d\leq r$. Consequently,  
\[|\mathfrak{S}_1|> r^{d-1}.\]
Let $i_1$ be the greatest element of $\mathfrak{S}_{1}$. Consider an arbitrary $j\in \mathfrak{S}_{1}$. 
 Since  $b_{k}^{j}\hat{\in}\tilde{Y}_0[\vec{x}^{i_0}_{0}]$ and $\cal E$ makes every numeric move only once, the move $b_{k}^{j}$ was numerically made by $\cal E$ at time $m_{k}^{j}$  in both $\tilde{Y}_0[\vec{x}^{j}_{0}]$ and $\tilde{Y}_0[\vec{x}^{i_0}_{0}]$.  As observed in a similar situation in the proof of Lemma \ref{mar16b}, the move $b_{k}^{j}$ was, in fact, numerically made by $\cal E$ in the funital restrictions $\tilde{Y}_0[\vec{x}^{j}_{0}]\hspace{-2pt}\downarrow \hspace{-2pt}m_{k}^{j}$ and $\tilde{Y}_0[\vec{x}^{i_0}_{0}]\hspace{-2pt}\downarrow \hspace{-2pt}m_{k}^{j}$ of $\tilde{Y}_0[\vec{x}^{j}_{0}]$ and $\tilde{Y}_0[\vec{x}^{i_0}_{0}]$, respectively. Clearly this implies that 
 $\tilde{Y}_0[\vec{x}^{j}_{0}]\hspace{-2pt}\downarrow \hspace{-2pt}m_{k}^{j}=\tilde{Y}_0[\vec{x}^{i_0}_{0}]\hspace{-2pt}\downarrow \hspace{-2pt}m_{k}^{j}$. Applying the same reason to $i_1$ in the role  of $j$, we also find  $\tilde{Y}_0[\vec{x}^{i_1}_{0}]\hspace{-2pt}\downarrow \hspace{-2pt}m_{k}^{i_1}=\tilde{Y}_0[\vec{x}^{i_0}_{0}]\hspace{-2pt}\downarrow \hspace{-2pt}m_{k}^{i_1}$ and hence, as $j\leq i_1$,  $\tilde{Y}_0[\vec{x}^{i_1}_{0}]\hspace{-2pt}\downarrow \hspace{-2pt}m_{k}^{j}=\tilde{Y}_0[\vec{x}^{i_0}_{0}]\hspace{-2pt}\downarrow \hspace{-2pt}m_{k}^{j}$. Thus,  $\tilde{Y}_0[\vec{x}^{j}_{0}]\hspace{-2pt}\downarrow \hspace{-2pt}m_{k}^{j}=\tilde{Y}_0[\vec{x}^{i_1}_{0}]\hspace{-2pt}\downarrow \hspace{-2pt}m_{k}^{j}$. Since $j<i_1$, we also have $w^{j}_{1}\preceq w^{i_1}_{1}\preceq x^{i_1}_{1}$ (this can be seen from the fact that the heights of funits in which $\cal E$ makes moves keep increasing). In view of this observation, the equation 
$\tilde{Y}_0[\vec{x}^{j}_{0}]\hspace{-2pt}\downarrow \hspace{-2pt}m_{k}^{j}=\tilde{Y}_0[\vec{x}^{i_1}_{0}]\hspace{-2pt}\downarrow \hspace{-2pt}m_{k}^{j}$ obviously implies $\tilde{Y}_1[\vec{x}^{j}_{0},x^{j}_{1}]\hspace{-2pt}\downarrow \hspace{-2pt}m_{k}^{j}=\tilde{Y}_1[\vec{x}^{i_1}_{0},x^{i_1}_{1}]\hspace{-2pt}\downarrow \hspace{-2pt}m_{k}^{j}$. But the move $b_{k}^{j}$ was numerically made by $\cal E$ in $\tilde{Y}_1[\vec{x}^{j}_{0},x^{j}_{1}]$ at time $m_{k}^{j}$. Hence it was also numerically made (at the same time) in $\tilde{Y}_1[\vec{x}^{i_1}_{0},x^{i_1}_{1}]$. To summarize, we have:
\[
\mbox{\em  for any $j\in\mathfrak{S}_1$, \ $\bot b^{j}_{k} \hat{\in}\tilde{Y}_{1}[\vec{x}^{i_1}_{0},x^{i_1}_{1}]$.}\]

For each $j\in\mathfrak{S}_1$,  let $w^{j}_{2}$  be the finite bitstring such that the corresponding move 
$b_{k}^{j}$  was numerically made by $\cal E$ in branch $w^{j}_{2}\hspace{-2pt}\uparrow $ of $\st \tilde{Y}_2[\vec{x}_{0}^{i_1},x_{1}^{i_1}]$. As in the previous case, we find that each $w^{j}_{2}$ has to be a prefix of one of the ${\cal C}^{i_1}$-pillars of $\st \tilde{Y}_2[\vec{x}_{0}^{i_1},x_{1}^{i_1}]$. And, since 
the number of such pillars does not exceed $r$,  as in the previous case,  
 there is a  ${\cal C}^{i_1}$-pillar $p_2$ of $\st \tilde{Y}_2[\vec{x}_{0}^{i_1},x_{1}^{i_1}]$ and a subset $\mathfrak{S}_{2}$ of $\mathfrak{S}_1$ with 
\[|\mathfrak{S}_2|> r^{d-2}\]
 such that, for every $j\in\mathfrak{S}_{2}$, $w^{j}_{2}\preceq p_2$. 
Let $i_2$ be the greatest element of $\mathfrak{S}_{2}$. 
As in the previous case, we find that 
\[
\mbox{\em  for any $j\in\mathfrak{S}_2$, \ $\bot b^{j}_{k} \hat{\in}\tilde{Y}_{2}[\vec{x}^{i_2}_{0},x^{i_2}_{1},x^{i_2}_{2}]$.}\]

Continuing this way,
we eventually find that there is a subset $\mathfrak{S}_{d}$ of $\mathfrak{S}_{0}$ with 
\begin{equation}\label{jan24c}
|\mathfrak{S}_{d}|>r^{d-d}=1
\end{equation}  
such that 
\begin{equation}\label{jan15e}
\mbox{\em  for any $j\in\mathfrak{S}_d$, \ $\bot b^{j}_{k}\hat{\in}\tilde{Y}_{d}[\vec{x}^{i_d}_{0},x^{i_d}_{1},\ldots, x^{i_d}_{d}]$.}
\end{equation}

Remembering that 
$M_{k}^{i_d}=\tilde{M}_{k}[\vec{x}^{i_d}_{0},x^{i_d}_{1},\ldots, x^{i_d}_{d}]$, and that the move $b^{j}_{k}$ was made by $\cal E$ in $\tilde{M}_{k}^{j}$ (rather than in any other politeral subunits of $\tilde{Y}_{d}[\vec{x}^{j}_{0},x^{j}_{1},\ldots, x^{j}_{d}]$), with a little thought one can see that 
(\ref{jan15e}) implies the following: 
\begin{equation}\label{jan15f}
\mbox{\em  for any $j\in\mathfrak{S}_d$, \ $\bot b^{j}_{k}\hat{\in} M_{k}^{i_d}$.}
\end{equation}
Let us rename $i_d$ into $i$. Further, let $j$ be an element of $\mathfrak{S}_d$ different from $i$, meaning that $j<i$ (the existence of such a $j$ is guaranteed by (\ref{jan24c})).  Obviously (\ref{jan15f}) implies that $\top a^{j}_{k}\hat{\in}M_{k}^{i}$, because $\cal E$ made its move $b_{k}^{j}$ only in units in which its adversary already had made the move $a_{k}^{j}$. Hence, as $M_{k}^{i}$ and $L_{k}^{i}$ are opposite, we have 
$\bot a^{j}_{k}\hat{\in}L_{k}^{i}$. 
 This, however, contradicts (\ref{jan23a}).  
\end{proof}

By induction, we now establish the existence of $n+1$ sets \[\mathbb{S}_{n}\subseteq \mathbb{S}_{n-1}\subseteq\ldots\subseteq\mathbb{S}_{1}\subseteq\mathbb{S}_{0}\subseteq \{0,\ldots,r^{rn}\}\] such that, for each $k\in\{0,\ldots,n\}$, the following three conditions are satisfied: 

\begin{equation}\label{dec3aa}
 \mbox{\em $|\mathbb{S}_k|> r^{r(n-k)}$.}
\end{equation}
\begin{equation}\label{dec3ab}
\mbox{\em  If $k>0$, then, for any  $i,j\in\mathbb{S}_{k}$ with $j<i$, $\bot b_{k}^{j}\hat{\not\in} X_{k}^{i}$.}
\end{equation}
\begin{equation}\label{dec3ac}
\mbox{\em  If $k<n$, then, for any  $i,j\in\mathbb{S}_{k}$ with $j<i$, $\tlg a^{j}_{k+1}\hat{\not\in} L_{k+1}^{i}$.}
\end{equation}

We define $\mathbb{S}_{0}$ by 
\[\mathbb{S}_{0}= \{0,\ldots,r^{rn}\}.\]
The cardinality of this set  set is $r^{rn}+1>r^{r(n-0)}$, so condition (\ref{dec3aa}) is  satisfied for $k=0$. Condition (\ref{dec3ab}) for $k=0$ is satisfied vacuously.  As for condition (\ref{dec3ac}), consider any $i,j\in \mathbb{S}_{0}$ such that $j<i$.
 With some analysis of the construction of \one\ we can  see that $\cal E$ numerically made the move $ a^{j}_{1}$ only in branches of  $\st E$ that extend $[v_j]^S0$, while it numerically made the move $ a^{i}_{1}$  only in branches of $\st E$ that extend $[v_j]^S1$. To summarize, $\cal E$ has never numerically made both moves  $a^{i}_{1}$ and $a^{j}_{1}$ in the same child of $\st E$, and hence in the same politeral unit $L_{1}^{e}$ for whatever $e\in\mathbb{S}_0$. Taking into account that $\cal E$ however {\em did}  make the move $a_{1}^{i}$ in $L_{1}^{i}$, we now find that it could not have  made the move $a_{1}^{j}$ in $L_{1}^{i}$. That is, $a_{1}^{j}\hat{\not\in}L_{1}^{i}$, so that 
 condition (\ref{dec3ac}) is also satisfied for $k=0$.

Now consider any $k$ with $0\leq k< n$ and assume that we have already established the existence of a set $\mathbb{S}_k$ satisfying conditions  (\ref{dec3aa}), (\ref{dec3ab}) and (\ref{dec3ac}). We  want to see that there is a set $\mathbb{S}_{k+1}$ satisfying the following three conditions (the same conditions as  (\ref{dec3aa}),  (\ref{dec3ab}) and (\ref{dec3ac}) but with  $k+1$ instead of $k$):

\begin{equation}\label{dec17a}
\mbox{\em $|\mathbb{S}_{k+1}|>r^{r(n-(k+1))}$.}
\end{equation}
\begin{equation}\label{dec3b}
\mbox{\em  For any  $i,j\in\mathbb{S}_{k+1}$ with $j<i$, $\bot b_{k+1}^{j}\hat{\not\in} X_{k+1}^{i}$.}
\end{equation}
\begin{equation}\label{dec17c}
\mbox{\em  If $k+1<n$, then, for any  $i,j\in\mathbb{S}_{k+1}$ with $j<i$, $\bot a_{k+2}^{j}\hat{\not\in} L_{k+2}^{i}$.}
\end{equation}

Let $e_1$ be the greatest element of $\mathbb{S}_k$, and let 
\[\mathfrak{Q}_1 \ =\ \{j\ |\ j\in \mathbb{S}_k,\  \bot b^{j}_{k+1}\hat{\in}X_{k+1}^{e_1}\}.\]  
 Next, let $e_2$ be the greatest element of $\mathbb{S}_k-\mathfrak{Q}_1$, and let 
\[\mathfrak{Q}_2 \ =\ \{j\ |\ j\in \mathbb{S}_k-\mathfrak{Q}_1,\ \bot b^{j}_{k+1}\hat{\in}X_{k+1}^{e_2}\}.\]
 Next, let $e_3$ be the greatest element of $\mathbb{S}_k-(\mathfrak{Q}_1\cup \mathfrak{Q}_2)$, and let
\[\mathfrak{Q}_3 \ =\ \{j\ |\ j\in \mathbb{S}_k-(\mathfrak{Q}_1\cup \mathfrak{Q}_2),\ \bot b^{j}_{k+1}\hat{\in}X_{k+1}^{e_3}\}.\]
We continue in the above style and define pairs $(e_4,\mathfrak{Q}_4)$, $(e_5,\mathfrak{Q}_5)$, \ldots  until we hit a pair $(e_q,\mathfrak{Q}_q)$ such that selecting the subsequent pair is no longer possible --- namely, we have 
\begin{equation}\label{gaza}
\mbox{\em $\mathbb{S}_{k}-(\mathfrak{Q}_1\cup\ldots\cup\mathfrak{Q}_q)=\emptyset$, i.e. $\mathbb{S}_{k}=\mathfrak{Q}_1\cup\ldots\cup\mathfrak{Q}_q$.}
\end{equation}

Now, we define $\mathbb{S}_{k+1}$ by
\[\mathbb{S}_{k+1}=\{e_1,\ldots,e_q\}.\]

To verify condition (\ref{dec17a}), first note that, since $k<n$ in our present case, (\ref{dec3ac}) can be rewritten as 
\[\mbox{\em  For any  $i,j\in\mathbb{S}_{k}$ with $j<i$, $\bot a_{k+1}^{j}\hat{\not\in} L_{k+1}^{i}$.}\]

The above, of course, implies the following  for any $f\in\{1,\ldots,q\}$: 
\begin{equation}\label{jan27b}
\mbox{\em  For any  $i,j\in\mathbb{S}_{k}-(\mathfrak{Q}_1\cup\ldots\cup \mathfrak{Q}_{f-1})$
 with $j<i$,  $\bot a_{k+1}^{j}\hat{\not\in} L_{k+1}^{i}$}
\end{equation}
(here, if $f=1$, $\mathbb{S}_{k}-(\mathfrak{Q}_1\cup\ldots\cup \mathfrak{Q}_{f-1})$ is to be understood as simply $\mathbb{S}_{k}$).
Now note that (\ref{jan27b}) is nothing but the  condition (\ref{jan23a}) of Lemma \ref{dec20a}, with $\mathbb{S}_{k}-(\mathfrak{Q}_1\cup\ldots\cup \mathfrak{Q}_{f-1})$ in the role of $\mathbb{S}$ and $k+1$ in the role of $k$. Hence, by Lemma \ref{dec20a}, with $e_f$ in the role of $i_0$ and $\mathfrak{Q}_f$ in the role of $\mathfrak{S}_0$, we find that 
\begin{equation}\label{gazb}
\mbox{\em $|\mathfrak{Q}_f|\leq r^r$ (for any $f\in\{1,\ldots,q\}$).}
\end{equation} 

Thus, by (\ref{gaza}) and (\ref{gazb}), $\mathbb{S}_k$ is the union of $q$ sets (the sets $\mathfrak{Q}_1,\ldots,\mathfrak{Q}_q$), with the cardinality  of none of those sets exceeding $r^r$. Therefore 
\[q\geq \frac{|\mathbb{S}_{k}|}{r^r},\]
i.e.,
 \[|\mathbb{S}_{k+1}|\geq \frac{|\mathbb{S}_{k}|}{r^r}.\]
At the same time, from (\ref{dec3aa}), we know that $|\mathbb{S}_{k}|>r^{r(n-k)}$. Thus,
\[|\mathbb{S}_{k+1}|> \frac{r^{r(n-k)}}{r^r}=r^{r(n-k)-r}=r^{r((n-k)-1)}=r^{r(n-(k+1))}.\]
 This proves the truth of condition (\ref{dec17a}).  

Now, consider any $i,j\in \mathbb{S}_{k+1}$ with $j<i$.  From our construction of $\mathbb{S}_{k+1}$, it is immediate that 
$\bot b_{k+1}^{j}\hat{\not\in} X_{k+1}^{i}$, which takes care of condition (\ref{dec3b}). To see that condition (\ref{dec17c}) is also satisfied, assume, for a contradiction, that $k+1<n$ and  $\bot a_{k+2}^{j}\hat{\in}L_{k+2}^{i}$.  With some thought, this can be seen to imply  that 
$X_{k+1}^{j}\hspace{-2pt}\downarrow\hspace{-2pt} l_{k+2}^{j}=X_{k+1}^{i}\hspace{-2pt}\downarrow\hspace{-2pt} l_{k+2}^{j}$.  Since $m_{k+1}^{j}<l_{k+2}^{j}$, we then also have $X_{k+1}^{j}\hspace{-2pt}\downarrow\hspace{-2pt} m_{k+1}^{j}=X_{k+1}^{i}\hspace{-2pt}\downarrow\hspace{-2pt} m_{k+1}^{j}$.
So, the move $b_{k+1}^{j}$ numerically made by $\cal E$ at time $m_{k+1}^{j}$ in ($M_{k+1}^{j}$ and hence in) $X_{k+1}^{j}$ was also simultaneously numerically made in $X_{k+1}^{i}$. That is, $\bot b_{k+1}^{j}\hat{\in} X_{k+1}^{i}$. But, from the already verified condition  (\ref{dec3b}), we know that this is not the case. 
 
This completes our construction of the sets $\mathbb{S}_0,\ldots,\mathbb{S}_n$ and our proof of the fact that each such set $\mathbb{S}_k$ satisfies conditions (\ref{dec3aa}), (\ref{dec3ab})  and (\ref{dec3ac}).\vspace{8pt}

Taking $k=n$ in (\ref{dec3aa}), we find that the set $\mathbb{S}_n$ contains at least two distinct elements $i$ and $j$. Let us say  $j<i$. By condition (\ref{dec3ab}),  $\tlg b^{j}_{n}\hat{\not\in} X_{n}^{i}$. On the other hand, of course, $\tlg b^{j}_{n}\hat{\in} X_{n}^{j}$. This implies that the units $X_{n}^{j}$ and $X_{n}^{i}$ are incomparable. But note that both units are superunits of $G$. We have thus reached a contradiction, because the same unit obviously cannot have two incomparable superunits.   Lemma \ref{jan6a} is now proven.

\subsubsection{The \one-lemma}
\begin{lemma}\label{oct10b}
Assume $({\cal A}, {\cal B},{\cal C}, \vec{T}, \st E, G)$ is a \one-appropriate tuple, and \[{\cal C}'=\one({\cal A}, {\cal B},{\cal C}, \vec{T}, \st E, G).\] Then:

1. ${\cal C}'$ is a hightening of $\cal C$, and it is  concordant with $\cal B$.

2. There is no $\st E$-over-$G$ domination chain of type $\vec{T}$ in $\mathbb{F}_{1}^{{\cal ABC}'}$. 

\end{lemma}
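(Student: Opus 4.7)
The plan is to observe that this lemma essentially records the defining properties of the output of the construction of \one, and no new combinatorial work is needed once termination (Lemma~\ref{jan6a}) is in hand. First I would invoke Lemma~\ref{jan6a} to obtain a number $i$ at which the construction of \one\ terminates; by the description of that construction, this means $\one({\cal A},{\cal B},{\cal C},\vec{T},\st E,G) = {\cal C}^{i} = {\cal D}^{i}$, where ${\cal D}^{i}$ is the pillaring built in step $\#i$ just before the termination test.

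For clause 1, I would argue in two moves. On the one hand, Lemma~\ref{mar27c} (applied inductively through steps $\#0,\ldots,\#(i-1)$) already tells us that ${\cal C}^{i-1}$ is a hightening of $\cal C$ and is concordant with $\cal B$. It therefore suffices to show that ${\cal D}^{i}$ is a hightening of ${\cal C}^{i-1}$ concordant with $\cal B$, since hightening is transitive and concordance with $\cal B$ only depends on each pillar individually. On all $\st$-units other than $\st E$, ${\cal D}^{i}$ literally agrees with ${\cal C}^{i-1}$, so nothing needs to be checked there. On $\st E$ itself, ${\cal C}^{i-1}$ has the single pillar $v_i$, while ${\cal D}^{i}$ has the single pillar $[v_i]^{S}0$, and by definition of $[\cdot]^{S}$ we have $v_i \preceq [v_i]^{S}$, hence $v_i \preceq [v_i]^{S}0$; moreover, by definition of a pruner, both $[v_i]^{S}0$ and $[v_i]^{S}1$ lie in $S={\cal B}(\st E)$, so $[v_i]^{S}0 \in {\cal B}(\st E)$ as required for concordance.

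For clause 2, I would appeal directly to the stopping condition of the construction of \one. Step $\#i$ was, by hypothesis, the terminating step, and step $\#i$ is declared complete precisely when there is no $\st E$-over-$G$ domination chain of type $\vec{T}$ in $\mathbb{F}_{1}^{{\cal ABD}^{i}}$. Since ${\cal D}^{i} = {\cal C}^{i} = {\cal C}'$, this is exactly the statement that no $\st E$-over-$G$ domination chain of type $\vec{T}$ exists in $\mathbb{F}_{1}^{{\cal ABC}'}$.

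There is essentially no obstacle here: clause 1 is routine bookkeeping on top of Lemma~\ref{mar27c}, and clause 2 is a restatement of the termination criterion used in the definition of \one. All of the substantive work — namely showing that the construction halts in the first place — has already been discharged in Lemma~\ref{jan6a} via the pigeonhole-and-pruning argument in Section~\ref{ssjan}.
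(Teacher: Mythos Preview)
Your proposal is correct and follows essentially the same approach as the paper: clause 1 via Lemma~\ref{mar27c} and clause 2 by reading off the termination criterion, with ${\cal C}'={\cal C}^{i}={\cal D}^{i}$ coming from Lemma~\ref{jan6a}. The only difference is that you spell out explicitly why ${\cal D}^{i}$ itself is a hightening of ${\cal C}^{i-1}$ concordant with $\cal B$ (the paper just subsumes this under Lemma~\ref{mar27c}), which is a harmless and arguably helpful clarification.
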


\begin{proof} Assume the conditions of the present lemma. 
Clause 1 immediately follows from  Lemma \ref{mar27c}. 
For clause 2, let $i$ be the number whose existence is claimed in Lemma \ref{jan6a}. Then, where ${\cal C}^i$ and ${\cal D}^i$ are as in (step $\#i$ of) the construction of \one,  we have ${\cal C}'={\cal C}^i={\cal D}^i$. But, looking back at that construction,  we see that the reason why the latter  was complete at step $\# i$ is that there was no $\st E$-over-$G$ domination chain of type $\vec{T}$ in 
$\mathbb{F}_{1}^{{\cal ABD}^i}=\mathbb{F}_{1}^{{\cal ABC}'}$.  
\end{proof}

\subsection{The operation $\one^+$} Remember that we have defined a {\em domination chaintype} as any sequence of $3n$ oformulas where $n\geq 1$. Weakening the condition $n\geq 1$ to $n\geq 0$ yields what we refer to as a {\bf generalized domination chaintype}. That is, the set of generalized domination chaintypes is that of domination chaintypes plus the empty sequence of oformulas. 

We define a {\bf $\one^+$-appropriate tuple} \(({\cal A}, {\cal B}, {\cal C},\vec{T}, \st E,G)\) exactly as we defined a $\one$-appropriate tuple, with the only difference that $\vec{T}$ can now be empty. That is, now $\vec{T}$ is required to be a generalized domination chaintype rather than a domination chaintype. 

When \(({\cal A}, {\cal B}, {\cal C},\vec{T}, \st E,G)\)  is a $\one^+$-appropriate tuple, we define the pillaring 
\[{\cal C}'=\one^+({\cal A}, {\cal B}, {\cal C},\vec{T}, \st E,G)\]
as follows:
\begin{itemize}
  \item If $\vec{T}$ is nonempty, then ${\cal C}'=\one({\cal A}, {\cal B}, {\cal C},\vec{T}, \st E,G)$.
\item If $\vec{T}$ is empty but $G$ is not a proper subunit of $\st E$, then ${\cal C}'={\cal C}$.
\item Suppose now $\vec{T}$ is empty and $G$ is a proper subunit of $\st E$. Let $\st \tilde{E}[\vec{y}]$ be the extended form of $\st E$, and   $\tilde{E}[\vec{y},x]$ be the (unique) child of $\st E$ which is a superunit of $G$. 
Also, let $v$ be the (unique) ${\cal C}$-pillar of $\st E$, and $S={\cal B}(\st E)$. Note that exactly one of the two strings $[v]^S0$ or $[v]^S1$ is a prefix of $x$. If $[v]^S0\preceq x$ (resp. $[v]^S1\preceq x)$, we define ${\cal C}'$ as the pillaring that sends $\st E$ to $\{[v]^S1\}$ (resp. $\{[v]^S0\}$) and agrees with ${\cal C}$ on all other $\st$-units. 
\end{itemize}

\begin{lemma}\label{oct10bplus}
Assume $({\cal A}, {\cal B},{\cal C}, \vec{T}, \st E, G)$ is a $\one^+$-appropriate tuple, and \[{\cal C}'=\one^+({\cal A}, {\cal B},{\cal C}, \vec{T}, \st E, G).\] Then:

1. ${\cal C}'$ is a hightening of $\cal C$, and it is concordant with $\cal B$.

2. If $\vec{T}$ is nonempty, then there is no $\st E$-over-$G$ domination chain of type $\vec{T}$ in $\mathbb{F}_{1}^{{\cal ABC}'}$. 

3. If $\vec{T}$ is empty and $G$ is a proper subunit of $\st E$, then $G\not\in\mathbb{F}_{1}^{{\cal ABC}'}$.

\end{lemma}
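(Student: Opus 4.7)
The plan is to proceed by cases following the three branches of the definition of ${\cal C}' = \one^+({\cal A}, {\cal B}, {\cal C}, \vec{T}, \st E, G)$. Two of the branches are essentially trivial: when $\vec{T}$ is nonempty we have ${\cal C}' = \one({\cal A}, {\cal B}, {\cal C}, \vec{T}, \st E, G)$, so clauses~1 and~2 are delivered verbatim by Lemma~\ref{oct10b} while clause~3 is vacuous; and when $\vec{T}$ is empty with $G$ not a proper subunit of $\st E$, we have ${\cal C}' = {\cal C}$, making clause~1 automatic and clauses~2 and~3 vacuous. All genuine work is concentrated in the third branch, where $\vec{T}$ is empty and $G$ is a proper subunit of $\st E$.

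In that branch, ${\cal C}'$ agrees with $\cal C$ everywhere except on $\st E$, where the unique pillar $v$ is replaced by whichever of $[v]^S 0$ or $[v]^S 1$ is \emph{not} a prefix of the infinite bitstring $x$ singled out by the child $\tilde{E}[\vec{y}, x]$ of $\st E$ that is a superunit of $G$. Clause~1 is then straightforward: hightening reduces to the observation that $v \preceq [v]^S \preceq [v]^S j$ for $j \in \{0,1\}$, and concordance with $\cal B$ is the very content of the definition of $[v]^S$, which demands $[v]^S 0, [v]^S 1 \in S = {\cal B}(\st E)$. Clause~2 is vacuous here.

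The substantive step is clause~3. Since $G \in \mathbb{F}_{1}^{\cal ABC}$ and $G$ is a subunit of $\tilde{E}[\vec{y}, x]$, the latter also lies in $\mathbb{F}_{1}^{\cal ABC}$, which forces $x \in \limitset(S)$ and $v \preceq x$. Invoking the parenthetical fact recorded in the definition of $\one^+$ --- that exactly one of $[v]^S 0, [v]^S 1$ is a prefix of $x$ --- we see that the new pillar is by construction the other string; it thus shares the prefix $[v]^S$ with $x$ but diverges at the very next bit, so it cannot be a prefix of $x$, and since $x$ is infinite while the pillar is finite, $x$ cannot be a prefix of the pillar either. Consequently $x$ is incomparable with the unique ${\cal C}'$-pillar of $\st E$, so $\tilde{E}[\vec{y}, x]$ gets trimmed out of $\mathbb{F}_{1}^{\cal ABC'}$, carrying the subunit $G$ with it. The only potential hazard I foresee is the tacit assumption that $\st E$ is $\cal A$-unresolved --- so that pillar-based trimming of $\st E$'s children actually takes place --- but this is implicit in the $\one^+$-appropriateness hypothesis, and once it is in hand the remainder of the argument is a direct unfolding of the definitions of hightening, concordance, and $\mathbb{F}_{1}^{\cal ABC'}$.
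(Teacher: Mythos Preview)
Your proposal is correct and follows exactly the approach the paper sketches in one line: clauses~1 and~2 reduce to Lemma~\ref{oct10b} and the definition of $\one^+$, while clause~3 is the direct verification you supply. Your caveat about $\st E$ being $\cal A$-unresolved is apt; strictly speaking this is not built into the definition of $\one^+$-appropriateness, but it holds in every actual use of $\one^+$ in the paper (the list~(\ref{dec28a}) in $\three$ enumerates only $\cal A$-unresolved units) and is tacitly presupposed by the paper's own ``Note'' in the definition of $\one^+$.
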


\begin{proof} Clauses 1 and 2 are immediate from the corresponding two clauses of Lemma \ref{oct10b} and the way ${\cal C}'$ is defined. Clause 3 can be verified through straightforward analysis. 
\end{proof}

\subsection{The operation \two}
By a {\bf $\two$-appropriate tuple} we mean $({\cal A}, {\cal B}, {\cal C},\vec{T}, \st E,G,\mathfrak{T})$, where 
\(({\cal A}, {\cal B}, {\cal C},\vec{T}, \st E,G)\) is  a $\one^+$-appropriate tuple and $\mathfrak{T}$ is a set of units such that, for any unit $\st H\in\mathfrak{T}$, ${\cal C}(\st H)$ is a singleton.

Let $({\cal A}, {\cal B}, {\cal C},\vec{T}, \st E,G,\mathfrak{T})$ be a $\two$-appropriate tuple, and let 
  ${\cal C}'=\one^+ ({\cal A}, {\cal B}, {\cal C},\vec{T}, \st E,G)$. We define 
\[\two ({\cal A}, {\cal B}, {\cal C},\vec{T}, \st E,G,\mathfrak{T})\]
as the pillaring ${\cal C}''$ such that, for any unit $\st H$:
\begin{itemize}
  \item if $\st H\in \mathfrak{T}$, then  ${\cal C}''(\st H) = {\cal C}'(\st H)$; 
  \item otherwise,  where ${\cal C}'(\st H) = \{w_1,w_2,\ldots,w_r\}$,  ${\cal B}(\st H)=S$ and $[w]$ means $[w]^{S}$, we have ${\cal C}''(\st H)=\{[w_1]0,\ [w_1]1,\ [w_2]0,\ [w_2]1,\ \ldots,\ [w_r]0,\ [w_r]1\}$.  
 \end{itemize}

\begin{lemma}\label{oct10bb}
Assume $({\cal A}, {\cal B}, {\cal C},\vec{T}, \st E,G,\mathfrak{T})$ is  a \two-appropriate tuple,  and \[{\cal C}''=\two 
({\cal A}, {\cal B}, {\cal C},\vec{T}, \st E,G,\mathfrak{T}).\]   Then:

1. For any unit $\st H\in\mathfrak{T}$, ${\cal C}''(\st H)$ is a singleton.

2. If $\vec{T}$ is nonempty, then there is no  $\st E$-over-$G$ domination chain of type $\vec{T}$ in $\mathbb{F}_{1}^{{\cal ABC}''}$. 

3. If $\vec{T}$ is empty and $G$ is a proper subunit of $\st E$, then $G\not\in\mathbb{F}_{1}^{{\cal ABC}''}$. 

4. ${\cal C}''$ is concordant with $\cal B$. 

5. For any unit $\st H$ and  ${\cal C}''$-pillar $w$ of $\st H$, there is a ${\cal C}$-pillar $v$ of $\st H$ with $v\preceq w$.

\end{lemma}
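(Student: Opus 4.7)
The plan is to reduce each of the five clauses to the corresponding clauses of Lemma \ref{oct10bplus} applied to the intermediate pillaring ${\cal C}'=\one^+({\cal A},{\cal B},{\cal C},\vec{T},\st E,G)$, using only elementary facts about the bracket operation $[\cdot]^S$ and the trimming construction $\mathbb{F}_1^{{\cal ABC}}$.

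First I would dispose of clauses 1, 4, and 5 by direct inspection of the definition of $\two$. For clause 1, if $\st H\in\mathfrak{T}$ then ${\cal C}''(\st H)={\cal C}'(\st H)$, and since ${\cal C}'$ is a hightening of ${\cal C}$ (Lemma \ref{oct10bplus}(1)), $|{\cal C}''(\st H)|=|{\cal C}(\st H)|=1$ by $\two$-appropriateness. For clause 4, concordance holds on $\mathfrak{T}$-units because ${\cal C}'$ is concordant with $\cal B$ (Lemma \ref{oct10bplus}(1)); and on units $\st H\notin\mathfrak{T}$ it holds by the very definition of $[u]^S$, which guarantees $[w_j]^{{\cal B}(\st H)}0$ and $[w_j]^{{\cal B}(\st H)}1$ both lie in ${\cal B}(\st H)$. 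For clause 5, every ${\cal C}''$-pillar $w$ of $\st H$ is either equal to a ${\cal C}'$-pillar (case $\st H\in\mathfrak{T}$) or extends a ${\cal C}'$-pillar (case $\st H\notin\mathfrak{T}$, where $w=[w_j]^Sb\succeq w_j$); chaining this with the fact that ${\cal C}'$ is a hightening of ${\cal C}$ yields a ${\cal C}$-pillar $v\preceq w$.

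Clauses 2 and 3 I would handle together by establishing the inclusion
\[\mathbb{F}_1^{{\cal ABC}''}\ \subseteq\ \mathbb{F}_1^{{\cal ABC}'},\]
after which both clauses follow immediately from the corresponding clauses of Lemma \ref{oct10bplus}: the nonexistence of a $\st E$-over-$G$ domination chain of type $\vec{T}$ in the bigger tree $\mathbb{F}_1^{{\cal ABC}'}$ forces its nonexistence in the smaller subtree $\mathbb{F}_1^{{\cal ABC}''}$, and likewise the absence of $G$ in $\mathbb{F}_1^{{\cal ABC}'}$ forces its absence in $\mathbb{F}_1^{{\cal ABC}''}$. To prove the inclusion, it suffices to show that whenever a child $\tilde{H}[\vec{x},y]$ of an ${\cal A}$-unresolved unit $\st H=\st\tilde{H}[\vec{x}]$ is trimmed by ${\cal C}'$ it is also trimmed by ${\cal C}''$; equivalently, if $y$ is comparable with some ${\cal C}''$-pillar of $\st H$, then $y$ is comparable with some ${\cal C}'$-pillar. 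In the case $\st H\in\mathfrak{T}$ this is trivial (the pillar sets coincide); in the case $\st H\notin\mathfrak{T}$, the ${\cal C}''$-pillar has the form $[w_j]^Sb$ for some ${\cal C}'$-pillar $w_j$, and since $w_j\preceq [w_j]^Sb$, comparability of $y$ with $[w_j]^Sb$ entails comparability of $y$ with $w_j$ by a one-line case split on whether $|y|\leq|w_j|$ or $|y|>|w_j|$.

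The proof is therefore essentially bookkeeping once Lemma \ref{oct10bplus} is in hand; the only place requiring a moment's care is the comparability check in the last paragraph, and even that reduces to the trivial observation that if two strings both $\preceq$-lie below a common string, then the shorter of them is $\preceq$ the longer. I do not expect any genuine obstacle, since all of the combinatorial difficulty has been isolated in the termination argument for $\one$ (Section \ref{ssjan}) that underlies Lemma \ref{oct10b}.
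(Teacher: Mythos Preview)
Your proposal is correct and follows essentially the same approach as the paper's own proof, which simply asserts that clauses 1--3 follow from the corresponding clauses of Lemma~\ref{oct10bplus} and that clauses 4--5 follow from clause~1 of Lemma~\ref{oct10bplus} together with the definition of ${\cal C}''$. You have spelled out the natural details behind this assertion, in particular the inclusion $\mathbb{F}_1^{{\cal ABC}''}\subseteq\mathbb{F}_1^{{\cal ABC}'}$ that makes clauses 2 and 3 go through; one small remark is that in your comparability argument the case split on $|y|$ is unnecessary, since $y$ is an infinite bitstring and hence $[w_j]^Sb\preceq y$ is the only possibility.
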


\begin{proof} Clauses 1, 2 and 3 rather immediately follow from the corresponding three clauses of Lemma \ref{oct10bplus}. Clauses  4 and 5 easily follow from clause 1 of Lemma \ref{oct10bplus} and the way ${\cal C}''$ is defined. 
\end{proof}

\subsection{The operation \three}

By a {\bf \three-appropriate} tuple we mean a triple $({\cal A},{\cal B},\mathfrak{T})$, where $\cal A$ is a resolution, $\cal B$ is a pruning,
and $\mathfrak{T}$ is a countable set of units closed under visibility in $\cal A$ (for any units $G,E$, if $G\in \mathfrak{T}$ and $E$ is visible to $G$ in $\cal A$, then $E\in \mathfrak{T}$).

The operation \three\ takes a \three-appropriate tuple $({\cal A},{\cal B},\mathfrak{T})$  and returns a \three-appropriate tuple 
\[({\cal A}',{\cal B}',\mathfrak{T}') = \three ({\cal A},{\cal B},\mathfrak{T}),\]
 defined as follows.

If there are no $\cal A$-unresolved $\st$-units in $\mathfrak{T}$, then $({\cal A}',{\cal B}',\mathfrak{T}')=({\cal A},{\cal B},\mathfrak{T})$.  

Otherwise, let 
\begin{equation}\label{dec28a}
(\st E_1,G_1,\vec{T}_1),\ (\st E_2,G_2,\vec{T}_2),\ (\st E_3,G_3,\vec{T}_3),\ \ldots
\end{equation}
 be a list --- generated according to some fixed choice function --- of all triples $(\st E,G,\vec{T})$, where $\st E$ is an $\cal A$-unresolved unit of $\mathfrak{T}$, $G\in\mathfrak{T}$,    and $\vec{T}$ is a generalized domination chaintype. 

We first define the pillarings ${\cal C}_0,{\cal C}_1,{\cal C}_2,\ldots$  as follows:
\begin{itemize}
  \item ${\cal C}_0=\emptyset$ (remember from Section \ref{sstod} that $\emptyset$ is the trivial pillaring);
  \item ${\cal C}_{i+1}=\two({\cal A}, {\cal B}, {\cal C}_i,\vec{T}_i, \st E_i,G_i,\mathfrak{T})$.
\end{itemize} 

It is obvious that the trivial pillaring is concordant with every resolution. So, ${\cal C}_0$ is concordant with $\cal B$. In view of clause 4 of Lemma \ref{oct10bb}, by induction on $i$, we further see that each of the above pillarings ${\cal C}_i$ is concordant with $\cal B$ and hence ${\cal C}_{i+1}$ is well defined. 

We now define the resolution ${\cal A}'$ as follows:

\begin{itemize}
  \item For every $\cal A$-unresolved element $\st E$ of $\mathfrak{T}$, ${\cal A}'(\st E)=x$, where $x$ is an element of $\limitset({\cal B}(\st E))$  --- selected according to some fixed choice function --- such that, for each $i\geq 0$, the (unique) ${\cal C}_i$-pillar of $\st E$ is a prefix of $x$. In view of clauses 1, 4 and 5 of Lemma \ref{oct10bb}, such an $x$ exists.
  \item For all other units $\st H$, ${\cal A}'(\st H)={\cal A}(\st H)$.  
\end{itemize}
Observe that
\begin{equation}\label{apr1a}
\mbox{\em ${\cal A}'$ is an extension of $\cal A$.}
\end{equation}

Next, we define the pruning ${\cal B}'$\footnote{That ${\cal B}$ is indeed a pruning will be verified in Lemma \ref{may13a}.}  by stipulating that, for any unit $\st E$, we have:
\begin{itemize}
  \item If $\st E\in \mathfrak{T}$, then ${\cal B}'(\st E)={\cal B}(\st E)$.
  \item Otherwise, ${\cal B}'(\st E)=\{\epsilon\}\cup{\cal C}_1(\st E)\cup {\cal C}_2(\st E)\cup{\cal C}_3(\st E)\cup\ldots$. 
\end{itemize}

Finally, we define the set $\mathfrak{T}'$ to be the closure of $\mathfrak{T}$ under visibility in ${\cal A}'$.  
In view of Lemma \ref{nov8cf}, $\mathfrak{T}'$ is countable. 

\begin{lemma}\label{may13a}
Assume $({\cal A},{\cal B},\mathfrak{T})$ is a \three-appropriate tuple, and $({\cal A}',{\cal B}',\mathfrak{T}')=\three({\cal A},{\cal B},\mathfrak{T})$. Then:

1. ${\cal B}'$ is a pruning such that, for every unit $\st E$, ${\cal B}'(\st E)\subseteq {\cal B}(\st E)$. 

2. Where ${\cal C}_0,{\cal C}_1,{\cal C}_2,\ldots$ are as defined above, for each $i\geq 0$ we have $\mathbb{F}_{1}^{{\cal A}'{\cal B}'}\subseteq \mathbb{F}_{1}^{{\cal ABC}_i}$.  
\end{lemma}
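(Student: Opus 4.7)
My plan is to handle the two clauses separately, combining the inductive structure established through Lemmas \ref{oct10bplus} and \ref{oct10bb} with the explicit definitions of ${\cal B}'$ and ${\cal A}'$ in the construction of \three.

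For clause 1, the inclusion ${\cal B}'(\st E)\subseteq{\cal B}(\st E)$ goes by cases on whether $\st E\in\mathfrak{T}$: if so, the two sets are equal by definition; if not, each element of each ${\cal C}_j(\st E)$ has the form $[w]^{{\cal B}(\st E)}b$ by the ``otherwise'' clause in the definition of \two, and hence lies in ${\cal B}(\st E)$ by definition of $[\cdot]^{{\cal B}(\st E)}$. To see that ${\cal B}'$ is a pruning, $\epsilon\in{\cal B}'(\st E)$ is immediate, and the extension property for $\st E\in\mathfrak{T}$ is inherited from ${\cal B}$. For $\st E\notin\mathfrak{T}$, given any $u$ that is a prefix of some element of ${\cal B}'(\st E)$, I locate $j\geq 1$ and $s\in{\cal C}_j(\st E)$ with $u\preceq s$ and write $s=[w]^{{\cal B}(\st E)}b$ where $w$ is the corresponding pillar produced by $\one^+$ at step $j$; both $[w]^{{\cal B}(\st E)}0$ and $[w]^{{\cal B}(\st E)}1$ then lie in ${\cal C}_j(\st E)\subseteq{\cal B}'(\st E)$. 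If $u\preceq[w]^{{\cal B}(\st E)}$, take $v=[w]^{{\cal B}(\st E)}$. Otherwise $u=s$, and, because ${\cal C}_{j+1}$ is built by applying $\two$ to a hightening of ${\cal C}_j$ (clause 1 of Lemma \ref{oct10bplus}), ${\cal C}_{j+1}(\st E)$ contains a pair $[w']^{{\cal B}(\st E)}0$, $[w']^{{\cal B}(\st E)}1$ with $w'\succeq s$; take $v=[w']^{{\cal B}(\st E)}$.

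For clause 2, fix $i\geq 0$ and $G\in\mathbb{F}_1^{{\cal A}'{\cal B}'}$; I must check that, for every proper $\st$-superunit $\st H$ of $G$, the bitstring $y$ labeling the child of $\st H$ on the path to $G$ satisfies the trimming conditions for $\mathbb{F}_1^{{\cal A}{\cal B}{\cal C}_i}$. If $\st H$ is ${\cal A}$-resolved, then by (\ref{apr1a}) it is ${\cal A}'$-resolved with the same resolvent, so nothing needs checking. If $\st H$ is ${\cal A}$-unresolved and $\st H\in\mathfrak{T}$, then ${\cal A}'(\st H)$ was chosen to be some $x\in\limitset({\cal B}(\st H))$ extending every ${\cal C}_i$-pillar of $\st H$, and $y=x$; hence $y\in\limitset({\cal B}(\st H))$ and is comparable with every ${\cal C}_i$-pillar of $\st H$. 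If $\st H$ is ${\cal A}$-unresolved and $\st H\notin\mathfrak{T}$, then $\st H$ remains ${\cal A}'$-unresolved, so $y\in\limitset({\cal B}'(\st H))\subseteq\limitset({\cal B}(\st H))$ by clause 1.

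The main obstacle is verifying, in this last case, that $y$ is comparable with some ${\cal C}_i$-pillar of $\st H$. The plan is a finiteness argument: ${\cal B}'(\st H)=\{\epsilon\}\cup\bigcup_{j\geq 1}{\cal C}_j(\st H)$, and since every pillaring has finite width, each ${\cal C}_j(\st H)$ is a finite set, so the finite union ${\cal C}_0(\st H)\cup\cdots\cup{\cal C}_{i-1}(\st H)$ contains only finitely many strings. But $y\in\limitset({\cal B}'(\st H))$ means $y$ has infinitely many prefixes in ${\cal B}'(\st H)$, so at least one such prefix $p$ must lie in ${\cal C}_j(\st H)$ for some $j\geq i$. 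Iterating clause 5 of Lemma \ref{oct10bb} $j-i$ times then produces a ${\cal C}_i$-pillar of $\st H$ that is a prefix of $p$, and therefore of $y$, yielding the desired comparability.
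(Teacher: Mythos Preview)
Your proof is correct and follows essentially the same approach as the paper's. The only notable difference is in the last step of Clause 2: where the paper selects a prefix $w_j\in{\cal B}'(\st H)$ of $y$ that is \emph{longer} than every ${\cal C}_i$-pillar (forcing it to live in some ${\cal C}_j$ with $j\geq i$) before invoking clause 5 of Lemma \ref{oct10bb}, you instead use the cardinality observation that $\{\epsilon\}\cup{\cal C}_1(\st H)\cup\cdots\cup{\cal C}_{i-1}(\st H)$ is finite while $y$ has infinitely many prefixes in ${\cal B}'(\st H)$; both routes yield a prefix $p\in{\cal C}_j(\st H)$ with $j\geq i$ and then descend via iterated clause 5 to a ${\cal C}_i$-pillar prefixing $y$.
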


\begin{proof} Assume the conditions of the lemma. Let ${\cal C}_0,{\cal C}_1,{\cal C}_2,\ldots$ be as defined earlier in this section. 

{\em Clause 1}: Consider an arbitrary unit $\st E$. We want to show that ${\cal B}'(\st E$) is a pruner (so that ${\cal B}'$ is indeed a pruning as promised), and that ${\cal B}'(\st E)\subseteq {\cal B}(\st E)$. 
If $\st E\in\mathfrak{T}$, then ${\cal B}'(\st E)$ is the same as ${\cal B}(\st E)$, and (taking into account that ${\cal B}(\st E)$ is a pruner) we are done.  Now assume $\st E\not\in\mathfrak{T}$. By our definition of ${\cal B}'$, ${\cal B}'(\st E)$ contains $\epsilon$. Next, consider any bitstring $u$ such that $u$ is a prefix of some element of ${\cal B}'(\st E)$. Obviously there is a number $i$ and a bitstring $w$ such that $u\preceq w$ and $w\in {\cal C}_i(\st E)$. But observe from the construction of $\two$ that, where $S={\cal B}(\st E)$, for a certain $w'$ with $w\preceq w'$ and hence of $u\preceq w'$, the strings $[w']^S0$ and $[w']^S1$ are among the ${\cal C}_{i+1}$-pillars of $\st E$, and therefore are among the elements of ${\cal B}'(\st E)$. This is exactly what it takes for   ${\cal B}'(\st E)$ to be a pruner. Next, relying on clause 4 of Lemma \ref{oct10bb}, by induction on $j$, we find that each ${\cal C}_{j}$ ($j\geq 0$) is concordant with $\cal B$, meaning that ${\cal C}_j(\st E)\subseteq {\cal B}(\st E)$.   Since every element of ${\cal B}'(\st E)$ other than $\epsilon$ is an element of ${\cal C}_j(\st E)$ for some $j\geq 0$, we find that  ${\cal B}'(\st E)\subseteq {\cal B}(\st E)$.

{\em Clause 2}: Consider an arbitrary $i\geq 0$. For a contradiction, assume $\mathbb{F}_{1}^{{\cal A}'{\cal B}'}$ is not a subset of $\mathbb{F}_{1}^{{\cal ABC}_i}$. Let $G$ be a unit with $G\in \mathbb{F}_{1}^{{\cal A}'{\cal B}'}$ and 
$G\not\in \mathbb{F}_{1}^{{\cal ABC}_i}$. We may assume that all proper superunits of $G$ are in 
$\mathbb{F}_{1}^{{\cal ABC}_i}$.
By (\ref{apr1a}) and clause 1 of the present lemma, $\mathbb{F}_{1}^{{\cal A}'{\cal B}'}\subseteq \mathbb{F}_{1}^{{\cal AB}}$, so that $G\in \mathbb{F}_{1}^{{\cal AB}}$. It is therefore obvious that the only possible reason why we still have 
$G\not\in\mathbb{F}_{1}^{{\cal ABC}_i}$ is that $G=\tilde{E}[\vec{x},y]$ is a child of some $\cal A$-unresolved unit $\st E=\st\tilde{E}[\vec{x}]$ such that $y$ is not comparable with any of the ${\cal C}_i$-pillars of $\st E$.
This immediately rules out the possibility that $\st E\in \mathfrak{T}$, because then, from the way we defined the resolution ${\cal A}'$ it is clear that $G$ cannot be in $\mathbb{F}_{1}^{{\cal A}'{\cal B}'}$. So, $\st E\not\in \mathfrak{T}$. Notice that then $\st E$ is ${\cal A}'$-unresolved. Therefore, 
since $G\in \mathbb{F}_{1}^{{\cal A}'{\cal B}'}$, we have $y\in \limitset({\cal B}'(\st E))$. The latter implies that there are infinitely many prefixes $w_1\preceq w_2\preceq w_3\preceq\ldots$ of $y$ that are elements of ${\cal B}'(\st E)$. Pick a $w_j$ among these prefixes such that $w_j$ is longer than any of the ${\cal C}_{i}$-pillars of $\st E$. In view of clause 5 of Lemma \ref{oct10bb}, it is obvious that one of the ${\cal C}_i$-pillars of $\st E$ should be a prefix of $w_j$. But then $y$ is comparable with such a pillar, which is a contradiction.
\end{proof}

\begin{lemma}\label{jan17a}
Assume $({\cal A},{\cal B},\mathfrak{T})$ is a \three-appropriate tuple, $\st E$ is an $\cal A$-unresolved element of $\mathfrak{T}$, and $G$ is an element of $\mathfrak{T}$.   Then, where $({\cal A}',{\cal B}',\mathfrak{T}') = \three ({\cal A},{\cal B},\mathfrak{T})$, we have:    
\begin{enumerate}
  \item   $\st E$ is ${\cal A}'$-resolved, and ${\cal A}'(\st E)\in\limitset({\cal B}(\st E))$.
  \item  $\st E$ does not dominate $G$ in $\mathbb{F}_{1}^{{\cal A}'{\cal B}'}$. 
\end{enumerate}
\end{lemma}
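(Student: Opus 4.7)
The plan is to dispatch clause 1 immediately from the definition of the resolution ${\cal A}'$ produced by $\three$, and then prove clause 2 by contradiction, exploiting the fact that the enumeration (\ref{dec28a}) must visit the ``bad'' triple and that Lemmas \ref{oct10bb} and \ref{may13a} combine to kill the corresponding witness of domination.

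For clause 1, note that since $\st E$ is an $\cal A$-unresolved element of $\mathfrak{T}$, the construction of $\three$ explicitly defines ${\cal A}'(\st E)$ to be some $x\in\limitset({\cal B}(\st E))$ chosen so that every ${\cal C}_i$-pillar of $\st E$ is a prefix of $x$ (clauses 1, 4 and 5 of Lemma \ref{oct10bb} guarantee such an $x$ exists). In particular $\st E$ is ${\cal A}'$-resolved with ${\cal A}'(\st E)\in\limitset({\cal B}(\st E))$, as required.

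For clause 2, I would assume for contradiction that $\st E$ dominates $G$ in $\mathbb{F}_{1}^{{\cal A}'{\cal B}'}$ and split by Definition \ref{may1c}. Either (a) $G$ is a proper subunit of $\st E$ (with both in $\mathbb{F}_{1}^{{\cal A}'{\cal B}'}$), or (b) there is a $\st E$-over-$G$ domination chain in $\mathbb{F}_{1}^{{\cal A}'{\cal B}'}$. Let $\vec{T}$ be the empty sequence in case (a) and the type of the chain in case (b). In either case $\vec{T}$ is a generalized domination chaintype. Since $\st E\in\mathfrak{T}$ is $\cal A$-unresolved and $G\in\mathfrak{T}$, the triple $(\st E,G,\vec{T})$ appears somewhere in the enumeration (\ref{dec28a}), say as $(\st E_j,G_j,\vec{T}_j)$, and the construction gives ${\cal C}_{j+1}=\two({\cal A},{\cal B},{\cal C}_j,\vec{T},\st E,G,\mathfrak{T})$.

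Now in case (a), clause 3 of Lemma \ref{oct10bb} yields $G\notin\mathbb{F}_{1}^{{\cal ABC}_{j+1}}$; in case (b), clause 2 of the same lemma yields the absence of any $\st E$-over-$G$ domination chain of type $\vec{T}$ in $\mathbb{F}_{1}^{{\cal ABC}_{j+1}}$. Combining with clause 2 of Lemma \ref{may13a}, which supplies the inclusion $\mathbb{F}_{1}^{{\cal A}'{\cal B}'}\subseteq\mathbb{F}_{1}^{{\cal ABC}_{j+1}}$, the same obstruction transports back into $\mathbb{F}_{1}^{{\cal A}'{\cal B}'}$, contradicting our standing assumption. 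The only subtlety I foresee is stylistic rather than mathematical: one must be careful to formulate case (a) using the empty chaintype so that it is uniformly handled by the $\two$/$\one^+$ machinery — which is exactly why the notion of \emph{generalized} domination chaintype was introduced.
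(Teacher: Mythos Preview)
Your proposal is correct and follows essentially the same approach as the paper: clause 1 is immediate from the definition of ${\cal A}'$, and clause 2 is handled by contradiction, locating the triple $(\st E,G,\vec{T})$ in the list (\ref{dec28a}) (with $\vec{T}$ empty in the proper-subunit case), invoking clauses 2 and 3 of Lemma \ref{oct10bb} at that index, and then using the inclusion of clause 2 of Lemma \ref{may13a} to derive the contradiction. The paper phrases the chain case contrapositively---arguing that some unit of the chain would lie in $\mathbb{F}_{1}^{{\cal A}'{\cal B}'}$ but not in $\mathbb{F}_{1}^{{\cal ABC}_{j+1}}$---but this is equivalent to your observation that the inclusion makes the chain survive into the larger tree.
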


\begin{proof} Assume the conditions of the lemma. Clause 1 is obvious from the way the resolution ${\cal A}'$ is defined. 
To prove clause 2, for a contradiction, assume $\st E$ dominates $G$ in $\mathbb{F}_{1}^{{\cal A}'{\cal B}'}$. 

First, assume the reason for $\st E$ dominating $G$ in $\mathbb{F}_{1}^{{\cal A}'{\cal B}'}$ is that $G$ is a proper subunit of $\st E$. Let $i$ be the number such that, for the corresponding triple $(\st E_i,G_i,\vec{T}_i)$ from the list (\ref{dec28a}) we have $\st E_i=\st E$, $G_i=G$ and $\vec{T}_i$ is empty. From clause 3 of Lemma \ref{oct10bb} we find that $\mathbb{F}_{1}^{{\cal ABC}_{i+1}}$ does not contain $G$. Therefore, in view of clause 2 of Lemma \ref{may13a}, $\mathbb{F}_{1}^{{\cal A}'{\cal B}'}$ does not contain $G$ either. But this is a contradiction because, if $G$ is not in a given unit tree, it cannot be dominated (by whatever) in that tree.

Now assume the reason for $\st E$ dominating $G$ in $\mathbb{F}_{1}^{{\cal A}'{\cal B}'}$ is that there is a $\st E$-over-$G$ domination chain $\vec{D}$ in $\mathbb{F}_{1}^{{\cal A}'{\cal B}'}$. Let $\vec{T}$ be the type of $\vec{D}$. Let $i$ be the number such that, for the corresponding triple $(\st E_i,G_i,\vec{T}_i)$ from (\ref{dec28a}) we have $\st E_i=\st E$, $G_i=G$ and $\vec{T}_i=\vec{T}$. From clause 2 of Lemma \ref{oct10bb} we find that there is no $\st E$-over-$G$ domination chain of type $\vec{T}$ in $\mathbb{F}_{1}^{{\cal ABC}_{i+1}}$. That is, $\vec{D}$ is not a domination chain in $\mathbb{F}_{1}^{{\cal ABC}_{i+1}}$.
Since, on the other hand, $\vec{D}$ is a domination chain in $\mathbb{F}_{1}^{{\cal A}'{\cal B}'}$, we find that there is a unit $H$ in $\vec{D}$ such that $H\in\mathbb{F}_{1}^{{\cal A}'{\cal B}'}$ but not $H\in\mathbb{F}_{1}^{{\cal ABC}_{i+1}}$. However, by clause 2 of Lemma \ref{may13a}, this is impossible.  
\end{proof}

\subsection{The operation \four}

The operation \four\ takes a pair $({\cal A},J)$ such that $\cal A$ is a resolution and $J$ is a politeral unit of $\mathbb{F}_{1}^{\cal A}$, and returns a resolution  
\[{\cal R}=\four ({\cal A},J).\]

To construct such a resolution $\cal R$,  
we define the infinite sequence \[({\cal A}_0,{\cal B}_0,\mathfrak{T}_0),\ ({\cal A}_1,{\cal B}_1,\mathfrak{T}_1),\ ({\cal A}_2,{\cal B}_2,\mathfrak{T}_2),\ \ldots\]
of \three-appropriate tuples as follows. 
\begin{itemize}
  \item ${\cal A}_0={\cal A}$, \ ${\cal B}_0$ is the trivial pruning $\emptyset$, and $\mathfrak{T}_0$ is the closure of the set $\{J\}$ under 
visibility in ${\cal A}$. The countability of $\mathfrak{T}_0$ is guaranteed by Lemma \ref{nov8cf}. 
  \item $({\cal A}_{i+1},{\cal B}_{i+1},\mathfrak{T}_{i+1})=\three({\cal A}_i, {\cal B}_i,\mathfrak{T}_{i})$.
\end{itemize} 

Now, we define $\cal R$ as the union (see Section \ref{ss17n}) of the resolutions ${\cal A}_0,{\cal A}_1,{\cal A}_2,\ldots$. Such a union is well defined because, in view of (\ref{apr1a}), each ${\cal A}_{i+1}$ is an extension of ${\cal A}_i$, and hence the resolutions  ${\cal A}_0,{\cal A}_1,{\cal A}_2,\ldots$ are pairwise consistent. It is also clear that 
\begin{equation}\label{apr1b}
\mbox{\em $\cal R$ is an extension of $\cal A$, as well as of each ${\cal A}_i$ ($i\geq 0$).}
\end{equation}

For subsequent references, we also define  $\mathbb{T}$ as the union of the sets $\mathfrak{T}_0,\mathfrak{T}_1,\mathfrak{T}_2,\ldots$.
It is rather obvious that $\mathbb{T}$ is closed under visibility in $\cal R$.  

Henceforth we shall refer to the earlier-stated  assumptions on the parameters ${\cal A}$, $J$ and the definitions of the parameters ${\cal A}_i$, ${\cal B}_i$, $\mathfrak{T}_i$, ${\cal R}$, $\mathbb{T}$ as the {\bf construction of \four}.

\begin{lemma}\label{jan17bnew}
Let all parameters be as in the construction of \four. Further, let $i$ be a natural number, 
 $G$ any element of $\mathfrak{T}_i$, and $\st E$ any  unit that dominates $G$ in 
$\mathbb{F}_{1}^{\cal R}$.  Then we have:    
\begin{enumerate}
  \item $\st E\in \mathbb{T}$;
  \item   $\st E$ is $\cal R$-resolved; 
  \item $\st E$ dominates $G$ in $\mathbb{F}_{1}^{{\cal A}_{i+1}{\cal B}_{i+1}}$.
\end{enumerate}
\end{lemma}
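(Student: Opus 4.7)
The plan is to treat clause 2 as an immediate consequence of clause 1 once the latter is established: granted $\st E\in\mathbb{T}$, say $\st E\in\mathfrak{T}_j$, we get that $\st E$ is either already $\mathcal{A}_j$-resolved or becomes $\mathcal{A}_{j+1}$-resolved via Lemma \ref{jan17a}(1); either way, since $\cal R$ extends every $\mathcal{A}_l$ by (\ref{apr1b}), $\st E$ is $\cal R$-resolved. So the real work is clauses 1 and 3. As a preliminary, I would invoke the observation stated just before this lemma that $\mathbb{T}$ is closed under visibility in $\cal R$; combined with Lemma \ref{jan6cc}, this reduces clause 1 to producing a visibility connection (in $\cal R$) from $\st E$ to a unit already in $\mathbb{T}$.

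For clause 1 I would split on the two ways $\st E$ can dominate $G$ in $\mathbb{F}_{1}^{\cal R}$. If $G$ is a proper subunit of $\st E$, then by Lemma \ref{jan3a}, $\st E$ $\cal R$-strictly drives $G$, so $\st E$ is visible to $G$ in $\cal R$, and closure of $\mathbb{T}$ under visibility together with $G\in\mathfrak{T}_i\subseteq\mathbb{T}$ places $\st E$ in $\mathbb{T}$. If instead there is a $\st E$-over-$G$ domination chain $L_1,M_1,X_1,\ldots,L_n,M_n,X_n$ in $\mathbb{F}_{1}^{\cal R}$, I would argue by reverse induction on $k$ that every $M_k$ and $L_k$ lies in $\mathbb{T}$. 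The base step uses $L_{n+1}=G\in\mathbb{T}$ together with an upgrading of $M_n$ drives $L_{n+1}$ through $X_n$ to $\cal R$-strict driving: every proper $\st$-superunit of $M_n$ that is a subunit of $X_n$ is a superunit of $M_n$, hence visible to $M_n$, hence in $\mathbb{T}$, hence $\cal R$-resolved by Lemma \ref{jan17a}(1). The inductive step then processes the opposite-politeral pair $L_k,M_k$ (closure of visibility through $L_k,M_k$ as a length-one visibility chain) and the strict driving of $L_{k+1}$ by $M_k$. Once $L_1\in\mathbb{T}$ and $\st E$ is a superunit of $L_1$, Lemma \ref{jan6cc}(2) delivers $\st E\in\mathbb{T}$.

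For clause 3, the task is to transfer the witness of domination from $\mathbb{F}_{1}^{\cal R}$ to $\mathbb{F}_{1}^{{\cal A}_{i+1}{\cal B}_{i+1}}$. I would verify that each unit participating in the witness (just $\st E,G$ in the subunit case, or the full domination chain in the other case) remains in $\mathbb{F}_{1}^{{\cal A}_{i+1}{\cal B}_{i+1}}$. Since $\cal R$ extends $\mathcal{A}_{i+1}$, the obligations coming from $\mathcal{A}_{i+1}$-resolved $\st$-ancestors are automatic (resolvents coincide). The remaining obligations concern $\mathcal{A}_{i+1}$-unresolved $\st$-ancestors on the path; such ancestors necessarily sit outside $\mathfrak{T}_i$, so $\mathcal{B}_{i+1}$ assigns them the pruner $\{\epsilon\}\cup\bigcup_k\mathcal{C}_k(\cdot)$ built inside the $\three$-call, and one argues the relevant branch survives by exploiting clause 5 of Lemma \ref{oct10bb} (each ${\cal C}_{k+1}$-pillar lies above a ${\cal C}_k$-pillar) together with clause 2 of Lemma \ref{may13a} (containment in every $\mathbb{F}_{1}^{{\cal A}_i{\cal B}_i\mathcal{C}_k}$).

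The main obstacle I anticipate is the apparent circularity in the reverse induction for the chain case of clause 1: establishing $\cal R$-strict driving for $M_k$ on $L_{k+1}$ asks the intermediate $\st$-superunits to be $\cal R$-resolved, which needs them in $\mathbb{T}$, which seems to need $M_k\in\mathbb{T}$ first. I would break this by sequencing the argument carefully: at each inductive step, first deduce that the finitely many intermediate $\st$-units are in $\mathbb{T}$ by chaining superunit-visibility out of the already-in-$\mathbb{T}$ unit $L_{k+1}$ (via Lemma \ref{jan6cc}(2) repeatedly), then apply Lemma \ref{jan17a}(1) to conclude they are $\cal R$-resolved, and only then upgrade ``drives through $X_k$'' to $\cal R$-strict driving. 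The second source of technical difficulty is the $\mathcal{B}_{i+1}$-bookkeeping in clause 3, where I would systematically case-split each $\st$-ancestor on the path into $\mathcal{A}_{i+1}$-resolved, $\mathcal{A}_{i+1}$-unresolved but eventually in some $\mathfrak{T}_l$, or permanently outside $\mathbb{T}$, handling each according to the relevant definitions.
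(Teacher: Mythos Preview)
Your clause 2 argument is exactly the paper's, and your clause 1 strategy is close to the paper's in spirit (the paper also peels off the intermediate $\st$-superunits of $M_k$ below $X_k$ one by one, though it tracks the finer ${\cal A}_{i+d}$ levels rather than working purely in $\cal R$). One caution on clause~1: your proposed circularity-breaker invokes Lemma~\ref{jan6cc}(2), but that lemma moves visibility \emph{upward along superunits of the already-visible unit}, whereas the intermediate $\st$-superunits of $M_k$ are \emph{not} superunits of $L_{k+1}$ (they sit on the $M_k$ side of $X_k$). What actually works is to process those $\st$-units top-down: the topmost one vacuously $\cal R$-strictly drives $L_{k+1}$ through $X_k$, hence is visible to $L_{k+1}$, hence lies in $\mathbb{T}$, hence is $\cal R$-resolved by your clause~2 reasoning; then the next one down $\cal R$-strictly drives $L_{k+1}$, and so on. This is essentially the paper's argument (establishing in particular that \emph{all} chain units and all their superunits lie in $\mathbb{T}$).

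Clause~3 is where your proposal has a genuine gap. You propose to show each relevant unit $Z$ lies in $\mathbb{F}_{1}^{{\cal A}_{i+1}{\cal B}_{i+1}}$ by case-splitting each $\st$-ancestor $\st H$ of $Z$, and for the ${\cal A}_{i+1}$-unresolved case you appeal to the explicit form of ${\cal B}_{i+1}(\st H)$ together with Lemma~\ref{may13a}(2). But Lemma~\ref{may13a}(2) gives the inclusion $\mathbb{F}_{1}^{{\cal A}_{i+1}{\cal B}_{i+1}}\subseteq \mathbb{F}_{1}^{{\cal A}_i{\cal B}_i{\cal C}_k}$, which is the \emph{wrong direction} for placing $Z$ into $\mathbb{F}_{1}^{{\cal A}_{i+1}{\cal B}_{i+1}}$. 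More fundamentally, your ``permanently outside $\mathbb{T}$'' case cannot be handled ``according to the relevant definitions'': if $\st H$ stayed outside $\mathbb{T}$ forever it would be $\cal R$-unresolved, so $Z\in\mathbb{F}_{1}^{\cal R}$ imposes \emph{no} constraint on the branch $y$, and there is nothing forcing $y\in\limitset({\cal B}_{i+1}(\st H))$. The paper's fix is precisely the piece you are missing: clause~1 (more exactly, the strengthened conclusion that every chain unit and every superunit thereof lies in $\mathbb{T}$) makes your third case vacuous. Then, for an ${\cal A}_{i+1}$-unresolved ancestor $\st H$, one takes the minimal $j$ with $\st H\in\mathfrak{T}_j$, notes $j\geq i+1$, and uses Lemma~\ref{jan17a}(1) to get ${\cal R}(\st H)={\cal A}_{j+1}(\st H)\in\limitset({\cal B}_j(\st H))\subseteq\limitset({\cal B}_{i+1}(\st H))$. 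That is the argument you should supply; your current sketch does not reach it.
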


\begin{proof} Assume the conditions of the lemma.

{\em Clause 1}:
If the reason for $\st E$ dominating $G$ in $\mathbb{F}_{1}^{\cal R}$ is that $G$ is a proper subunit of $\st E$, then, by clause 1 of Lemma \ref{jan6cc}, $\st E$ is visible to $G$ in  ${\cal A}_i$. But $\mathfrak{T}_i$ is closed under visibility in ${\cal A}_i$, so $\st E\in\mathfrak{T}_i$.

Now assume the reason for $\st E$ dominating $G$ in $\mathbb{F}_{1}^{\cal R}$ is that there is a $\st E$-over-$G$ domination chain
\begin{equation}\label{may15b}
L_1,\ M_1,\ X_1,\ \ldots,\ L_{n},\ M_{n},\ X_n
\end{equation}
 in 
$\mathbb{F}_{1}^{\cal R}$. Thus, $M_{n}$ drives $G$. 

Let $\st E_1,\ldots,\st E_d$ be all ${\cal A}_i$-unresolved $\st$-superunits of $M_n$ that happen to be subunits  of $X_n$.  Here we assume that $\st E_1$ is a proper superunit of $\st E_2$, $\st E_2$ is a proper superunit of $\st E_3$, etc.  Obviously 
$\st E_1$ ${\cal A}_i$-strictly drives $G$ (through $X_n$) and hence $\st E_1$ is visible to $G$ in  ${\cal A}_i$. This implies that $\st E_1\in \mathfrak{T}_i$, because $\mathfrak{T}_i$ is closed under visibility in ${\cal A}_i$. Therefore, by clause 1 of Lemma \ref{jan17a}, $\st E_1$ is ${\cal A}_{i+1}$-resolved. Then, by (\ref{apr1b}), $\st E_1$ is also $\cal R$-resolved, and the $\cal R$-resolvent of $\st E_1$ is the same as the ${\cal A}_{i+1}$-resolvent of $\st E_1$. But $\st E_2$ has to be a subunit of the $\cal R$-resolvent of $\st E_1$ (otherwise we would have $M_n\not\in \mathbb{F}_{1}^{\cal R}$ and, for this reason, (\ref{may15b}) would not be a domination chain in $\mathbb{F}_{1}^{\cal R}$). So, $\st E_2$ is a subunit of the ${\cal A}_{i+1}$-resolvent of $\st E_1$.
This obviously makes $\st E_2$ visible to $G$ in
${\cal A}_{i+1}$ (namely, $\st E_2$ ${\cal A}_{i+1}$-strictly drives $G$ through $X_n$). By using similar reasoning to the above (and with Lemma \ref{may1a} in mind), we now find that $\st E_2\in\mathfrak{T}_{i+1}$, $\st E_2$ is ${\cal A}_{i+2}$-resolved and $\st E_3$ is a subunit of the ${\cal A}_{i+2}$-resolvent of $\st E_2$, which makes $\st E_3$ visible to $G$ in
${\cal A}_{i+2}$, and so on. 
 This way, we eventually find that  
$M_{n}$ is visible to $G$ in ${\cal A}_{i+d}$ and thus ($M_n\in\mathfrak{T}_{i+d}$ and hence) $M_n\in\mathbb{T}$. 
 By  Lemma \ref{may1a},  $M_{n}$ is also visible to $G$ in $\cal R$, because $\cal R$ is an extension of ${\cal A}_{i+d}$. The visibility of  $M_{n}$ to $G$ in $\cal R$, of course, also implies the visibility if $L_n$ to $G$ in $\cal R$, because $M_n$ and $L_n$ are opposite. Furthermore, by clause 2 of Lemma \ref{jan6cc}, all  superunits of $M_n,L_n$ (including $M_n$ and $L_n$ themselves) 
are visible to $G$ in ${\cal R}$ and therefore are in $\mathbb{T}$, because $\mathbb{T}$ is closed under visibility in ${\cal R}$. 

Now starting from $L_{n}$ instead of $G$ and applying a similar reason, we find that $M_{n-1}$ and $L_{n-1}$, together with all of their superunits, are  visible to $L_n$ in $\cal R$, and hence are elements of $\mathbb{T}$. Further continuing this way, we  eventually find that 
$M_{n-2},L_{n-2},\ldots,M_1,L_1$ are  also in $\mathbb{T}$. To summarize:
\begin{equation}\label{may17a}
\mbox{\em All units of (\ref{may15b}), together with all of their superunits,  are in $\mathbb{T}$.}
\end{equation}

But $\st E$ is a superunit of $M_1$.  Hence, by (\ref{may17a}),  $\st E\in \mathbb{T}$, as desired.  

{\em Clause 2}: According to the already verified clause 1 of the present lemma, 
$\st E\in\mathbb{T}$. This  means nothing but that $\st E\in \mathfrak{T}_j$ for some $j$. If $\st E$ is ${\cal A}_j$-resolved, then it is also ${\cal R}$-resolved because, by (\ref{apr1b}), ${\cal R}$ is an extension of ${\cal A}_j$. And 
if $\st E$ is ${\cal A}_j$-unresolved, then, in view of clause 1 
of Lemma \ref{jan17a}, $\st E$ is ${\cal A}_{j+1}$-resolved, and hence, again by (\ref{apr1b}), it is also $\cal R$-resolved.

{\em Clause 3}:  According to the conditions of our lemma,  $\st E$ dominates $G$ in $\mathbb{F}_{1}^{\cal R}$, which, by the definition of domination, implies that both $\st E$ and $G$ are in $\mathbb{F}_{1}^{\cal R}$.

 First, assume that the reason for $\st E$ dominating $G$ in $\mathbb{F}_{1}^{\cal R}$ is that $G$ is a 
proper subunit of $\st E$. Then, in order to show that $\st E$ dominates $G$ in $\mathbb{F}_{1}^{{\cal A}_{i+1}{\cal B}_{i+1}}$, 
it is sufficient to simply verify that both $\st E$ and $G$ are (not only in $\mathbb{F}_{1}^{\cal R}$ but also) in  $\mathbb{F}_{1}^{{\cal A}_{i+1}{\cal B}_{i+1}}$. 

We first examine $G$. For a contradiction, assume $G\not\in\mathbb{F}_{1}^{{\cal A}_{i+1}{\cal B}_{i+1}}$.
Since $G\in\mathbb{F}_{1}^{\cal R}$ and $\cal R$ is an extension of ${\cal A}_{i+1}$, we have  
$G\in \mathbb{F}_{1}^{{\cal A}_{i+1}}$. Therefore, the only reason for  $G\not\in \mathbb{F}_{1}^{{\cal A}_{i+1}{\cal B}_{i+1}}$ can be that there is an ${\cal A}_{i+1}$-unresolved unit $\st H=\st \tilde{H}[\vec{x}]$ of ($\mathbb{F}_{1}^{\cal R}$ and) $\mathbb{F}_{1}^{{\cal A}_{i+1}}$ and a child 
$\tilde{H}[\vec{x},y]$ of $\st \tilde{H}[\vec{x}]$ such that $G$ is a subunit of $\tilde{H}[\vec{x},y]$ but $\tilde{H}[\vec{x},y]\not\in \mathbb{F}_{1}^{{\cal A}_{i+1}{\cal B}_{i+1}}$, meaning that $y\not\in\limitset({\cal B}_{i+1}(\st H))$. By clause 1 of Lemma \ref{jan6cc}, $\st H$ is visible to $G$ in ${\cal A}_{i+1}$. Therefore, as $\mathfrak{T}_{i+1}$ is closed under visibility in ${\cal A}_{i+1}$, we have $\st H\in\mathfrak{T}_{i+1}$. Hence, by clause 1 of Lemma \ref{jan17a}, $\st H$ is ${\cal A}_{i+2}$-resolved and ${\cal A}_{i+2}(\st H)\in \limitset({\cal B}_{i+1}(\st H))$. But, since ${\cal R}$ is an extension of  ${\cal A}_{i+2}$ and $y= {\cal R}(\st H)$ (otherwise $G$ would not be in $\mathbb{F}_{1}^{\cal R}$), we have ${\cal A}_{i+2}(\st H)=y$. Thus, $y\in \limitset({\cal B}_{i+1}(\st H))$, contradicting  the earlier established $y\not\in \limitset({\cal B}_{i+1}(\st H))$.

We now examine $\st E$, and assume $\st E\not\in\mathbb{F}_{1}^{{\cal A}_{i+1}{\cal B}_{i+1}}$. As in the preceding case,
since $\st E\in\mathbb{F}_{1}^{\cal R}$ and $\cal R$ is an extension of ${\cal A}_{i+1}$, we have  
$\st E\in \mathbb{F}_{1}^{{\cal A}_{i+1}}$. Therefore, the only reason for  $\st E\not\in \mathbb{F}_{1}^{{\cal A}_{i+1}{\cal B}_{i+1}}$ can be that there is an ${\cal A}_{i+1}$-unresolved unit $\st H=\st \tilde{H}[\vec{x}]$ of ($\mathbb{F}_{1}^{\cal R}$ and) $\mathbb{F}_{1}^{{\cal A}_{i+1}}$ and a child 
$\tilde{H}[\vec{x},y]$ of $\st \tilde{H}[\vec{x}]$ such that $\st E$ is a subunit of $\tilde{H}[\vec{x},y]$ but $\tilde{H}[\vec{x},y]\not\in \mathbb{F}_{1}^{{\cal A}_{i+1}{\cal B}_{i+1}}$, meaning that $y\not\in\limitset({\cal B}_{i+1}(\st H))$. By clause 1 of the present lemma, $\st E\in\mathbb{T}$. This means nothing but that $\st E\in\mathfrak{T}_{j}$ for some $j$.   
By clause 1 of Lemma \ref{jan6cc}, $\st H$ is visible to $\st E$ in ${\cal A}_{j}$. Therefore, as $\mathfrak{T}_{j}$ is closed under visibility in ${\cal A}_{j}$, we have $\st H\in\mathfrak{T}_{j}$. We may assume that $j$ is the smallest number with $\st H\in\mathfrak{T}_{j}$. 
Obviously $\st H$ is ${\cal A}_{j}$-unresolved because, in the process of applying $\four$, an ${\cal A}_e$-unresolved $\st$-unit becomes ${\cal A}_{e+1}$-resolved (if and) only if that unit is in $\mathfrak{T}_{e}$.\footnote{And, since $\sti H$ is ${\cal A}_{i+1}$-unresolved, it is ``originally unresolved'', i.e., is ${\cal A}_0$-unresolved.} Therefore, by 
  clause 1 of Lemma \ref{jan17a}, $\st H$ is ${\cal A}_{j+1}$-resolved.  Remember that, on the other hand,   $\st H$ is ${\cal A}_{i+1}$-unresolved. This clearly implies that $i+1<j+1$. By 
  clause 1 of Lemma \ref{jan17a}, we also have ${\cal A}_{j+1}(\st H)\in {\cal B}_{j}(\limitset(\st H))$.   But, since ${\cal R}$ is an extension of  ${\cal A}_{j+1}$ and $y= {\cal R}(\st H)$, we have ${\cal A}_{j+1}(\st H)=y$. Thus, $y\in \limitset({\cal B}_{j}(\st H))$. At the same time, since $i+1\leq j$, in view of clause 1 of Lemma \ref{may13a}, we have ${\cal B}_{j}\subseteq {\cal B}_{i+1}$ and hence $\limitset({\cal B}_{j}(\st H))\subseteq \limitset({\cal B}_{i+1}(\st H))$. Thus, $y\in \limitset({\cal B}_{i+1}(\st H))$. This, however, contradicts the earlier established $y\not\in \limitset({\cal B}_{i+1}(\st H))$.

Now assume the reason for $\st E$ dominating $G$ in $\mathbb{F}_{1}^{\cal R}$ is that there is a $\st E$-over-$G$ domination chain in $\mathbb{F}_{1}^{\cal R}$. Namely, we may assume that (\ref{may15b}) is such a chain. In order to show that $\st E$ also dominates $G$ in $\mathbb{F}_{1}^{{\cal A}_{i+1}}$, it is sufficient to show that both $\st E$ and $G$ are in $\mathbb{F}_{1}^{{\cal A}_{i+1}}$ and that (\ref{may15b}) is a $\st E$-over-$G$ domination chain in $\mathbb{F}_{1}^{{\cal A}_{i+1}}$. In turn, in order to show that (\ref{may15b}) is a $\st E$-over-$G$ domination chain in $\mathbb{F}_{1}^{{\cal A}_{i+1}}$, it is sufficient to show that all of the units $L_1,M_1,\ldots,L_n,M_n$ are in   $\mathbb{F}_{1}^{{\cal A}_{i+1}}$. So, consider any $Z$ with 
\[Z\in\{\st E, G, L_1,M_1,\ldots,L_n,M_n\}.\]
Re-examining our argument in the previous paragraph, we see that it goes through for $Z$ in the role of $\st E$ as long as $Z$  is in $\mathbb{T}$. But $Z$ is indeed in $\mathbb{T}$: if $Z=G$, then the fact  $Z\in\mathbb{T}$  is immediate from the condition $G\in\mathfrak{T}_i$ of the present lemma; otherwise, the fact $Z\in\mathbb{T}$ follows from (\ref{may17a}).  
\end{proof}

\begin{lemma}\label{jan17b}
Let all parameters be as in the construction of \four. Further, let $i$ be a natural number, 
 $G$ any element of $\mathfrak{T}_i$, and $\st E$ any ${\cal A}_i$-unresolved 
unit. Then 
$\st E$ does not dominate $G$ in 
$\mathbb{F}_{1}^{\cal R}$.  
\end{lemma}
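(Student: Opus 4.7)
The plan is to derive a contradiction by exhibiting, for a suitably chosen index $k$, both ``$\st E$ dominates $G$ in $\mathbb{F}_{1}^{{\cal A}_{k+1}{\cal B}_{k+1}}$'' and its negation. The first will come from clause 3 of Lemma \ref{jan17bnew} (applied at $k$), and the second from clause 2 of Lemma \ref{jan17a} (applied to the \three-appropriate tuple $({\cal A}_k,{\cal B}_k,\mathfrak{T}_k)$). To make both lemmas simultaneously applicable, I need $G\in \mathfrak{T}_k$, $\st E\in\mathfrak{T}_k$, and $\st E$ being ${\cal A}_k$-unresolved.

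So, assuming for contradiction that $\st E$ does dominate $G$ in $\mathbb{F}_{1}^{\cal R}$, I would first invoke clause 1 of Lemma \ref{jan17bnew} to conclude $\st E\in\mathbb{T}$; pick the smallest $k_0$ with $\st E\in\mathfrak{T}_{k_0}$ and set $k=\max(i,k_0)$. Since the sets $\mathfrak{T}_0\subseteq\mathfrak{T}_1\subseteq\mathfrak{T}_2\subseteq\cdots$ are increasing (each $\mathfrak{T}_{j+1}$ is defined as the closure of $\mathfrak{T}_j$ under visibility in ${\cal A}_{j+1}$), both $\st E$ and $G$ lie in $\mathfrak{T}_k$, so the assumptions of both lemmas at index $k$ are met on the membership side.

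The step I expect to take the most care is showing that $\st E$ remains ${\cal A}_k$-unresolved. If $k=i$ this is the hypothesis. If $k=k_0>i$, then by minimality of $k_0$ we have $\st E\notin\mathfrak{T}_j$ for every $j$ with $i\leq j<k_0$; but inspection of the construction of \three\ shows that ${\cal A}_{j+1}$ agrees with ${\cal A}_j$ on every $\st$-unit outside $\mathfrak{T}_j$, so a short induction on $j$ propagates ${\cal A}_i$-unresolvedness of $\st E$ all the way up to ${\cal A}_k$. This is really the only nontrivial bookkeeping in the whole argument --- once it is done, the two citations collide directly.

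With this in hand, the finale is immediate: clause 3 of Lemma \ref{jan17bnew} applied at $k$ yields that $\st E$ dominates $G$ in $\mathbb{F}_{1}^{{\cal A}_{k+1}{\cal B}_{k+1}}$, while clause 2 of Lemma \ref{jan17a} applied to $({\cal A}_k,{\cal B}_k,\mathfrak{T}_k)$ with its ${\cal A}_k$-unresolved element $\st E\in\mathfrak{T}_k$ and $G\in\mathfrak{T}_k$ says the opposite. That is the sought contradiction. All the heavy machinery about visibility, pillarings, and the \one/\two\ combinatorics has already been absorbed into Lemmas \ref{jan17a} and \ref{jan17bnew}, so this lemma is essentially a matching step between them, with the index-choice bookkeeping being the only real obstacle.
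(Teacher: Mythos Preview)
Your proposal is correct and follows essentially the same route as the paper: invoke clause~1 of Lemma~\ref{jan17bnew} to place $\st E$ in some $\mathfrak{T}_{k_0}$, use the observation that ${\cal A}_{j+1}$ agrees with ${\cal A}_j$ outside $\mathfrak{T}_j$ to get ${\cal A}_k$-unresolvedness, and then collide clause~3 of Lemma~\ref{jan17bnew} with clause~2 of Lemma~\ref{jan17a}. The only cosmetic difference is that the paper works directly with the minimal index $j=k_0$ and observes $j\geq i$ (since $\st E$ becomes ${\cal A}_{j+1}$-resolved, which would contradict ${\cal A}_i$-unresolvedness if $j+1\leq i$), whereas you set $k=\max(i,k_0)$ and handle the case $k=i$ separately; in fact that extra case is vacuous, so the two arguments coincide.
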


\begin{proof} Assume the conditions of the lemma. For a contradiction, also assume that   $\st E$ dominates $G$ in 
$\mathbb{F}_{1}^{\cal R}$. By clause 1 of Lemma \ref{jan17bnew}, we have $\st E\in\mathbb{T}$, meaning that $\st E\in\mathfrak{T}_j$ for some $j$. We may assume that $j$ is the smallest number with $\st E\in\mathfrak{T}_j$. Obviously 
$\st E$ is ${\cal A}_j$-unresolved because, as observed in the proof of Lemma \ref{jan17bnew}, in the construction of \four, an  ${\cal A}_e$-unresolved $\st$-unit becomes ${\cal A}_{e+1}$-resolved only when the unit is in $\mathfrak{T}_e$.  By clause 1 of Lemma \ref{jan17a}, $\st E$ is ${\cal A}_{j+1}$-resolved. This implies that $j+1>i$ (otherwise $\st E$ would be ${\cal A}_i$-resolved). So, since $G\in \mathfrak{T}_i$, we also have  $G\in \mathfrak{T}_j$. 
Now, in view of  clause 2 of Lemma \ref{jan17a}, $\st E$ does not dominate $G$ in $\mathbb{F}_{1}^{{\cal A}_{j+1}{\cal B}_{j+1}}$. This, in turn, by clause 3 of Lemma \ref{jan17bnew}, implies that  $\st E$ does not dominate $G$ 
in $\mathbb{F}_{1}^{\cal R}$, which is a contradiction. 
\end{proof}

\subsection{The resolution $\cal T$ and its totality}
It is well known that ({\bf ZFC} proves that) every set can be well-ordered. So, let $\prec$ be some fixed well-ordering of the set of all politeral units.

With  each politeral unit $J$   we associate a resolution ${\cal T}_J$  defined by 
\[{\cal T}_J=   \four({\cal A}_J,J),\]
where ${\cal A}_J$ is the union of all resolutions ${\cal T}_{J'}$ such that $J'\prec J$. 

Now, we define the ultimate resolution $\cal T$ as the union  of all resolutions ${\cal T}_J$ associated with politeral units.  In view of (\ref{apr1b}), such a union is well defined, because every ${\cal T}_J$ is an extension of each ${\cal T}_{J'}$ with $J'\prec J$. Of course, we also have:
\begin{equation}\label{apr1c}
\mbox{\em For each politeral unit $J$, ${\cal T}$ is an extension of ${\cal T}_J$.}
\end{equation}

We claim that $\cal T$ is a total resolution, as promised in Lemma \ref{nov5}. Indeed, for a contradiction, assume $\cal T$ is not total. Let $\st E$ be a $\cal T$-unresolved $\st$-unit of $\mathbb{F}_{1}^{\cal T}$. Pick any politeral unit $J$ among the subunits of $\st E$. Let ${\cal A}$ be the union of all resolutions ${\cal T}_{J'}$ with $J'\prec J$. 
 Since, by (\ref{apr1c}), $\cal T$ is an extension of $\cal A$, $\st E$ is $\cal A$-unresolved. Let, along with our present $\cal A$ and $J$, the parameters ${\cal A}_i,{\cal B}_i,\mathfrak{T}_i,{\cal R}$ be as in the construction of \four.  Of course, $J\in\mathfrak{T}_0$ and $\st E$ dominates $J$ in $\mathbb{F}_{1}^{\cal R}$. But then, by clause 2 of Lemma \ref{jan17bnew}, $\st E$ is $\cal R$-resolved  and thus also $\cal T$-resolved. This is a contradiction.

\subsection{Proof of clause (i) of Lemma \ref{nov5}}
 Let $J$ be the $\prec$-smallest of all politeral units of $\mathbb{F}_{1}^{\cal T}$.  
For a contradiction, assume $\st E$ is a unit that dominates the root $\mathbb{F}_0[\hspace{1pt}]$ of $\mathbb{F}_{1}^{\cal T}$ in $\mathbb{F}_{1}^{\cal T}$. Of course, the reason for this domination cannot be that $\mathbb{F}_0[\hspace{1pt}]$ is a proper subunit of $\st E$, because $\mathbb{F}_0[\hspace{1pt}]$ is not a subunit of anything but itself. 
  Let 
\[L_1,\ M_1,\ X_1,\ \ldots,\ L_{n},\ M_{n},\ X_n\]
be a $\st E$-over-$\mathbb{F}_{0}[\hspace{1pt}]$ domination chain in $\mathbb{F}_{1}^{\cal T}$. 
So, $M_{n}$ drives $\mathbb{F}_{0}[\hspace{1pt}]$. Let $i$ be the smallest number among $1,\ldots,n$ such that $M_i$ drives $\mathbb{F}_{0}[\hspace{1pt}]$. Since $M_i$ drives $\mathbb{F}_{0}[\hspace{1pt}]$,  
$M_i$ has no 
$\cost$-superunits.  Therefore, obviously, $M_i$ drives $J$  as, by the way, it drives any other unit. But then, where $Y$ is the smallest common superunit of $M_i$ and $J$,  the following sequence is  a $\st E$-over-$J$ domination chain in  $\mathbb{F}_{1}^{\cal T}$:
\[L_1,\ M_1,\ X_1,\ \ldots,\ L_{i},\ M_{i},\ Y.\]
Thus,
\begin{equation}\label{janua29a}
\mbox{\em $\st E$ dominates $J$ in  $\mathbb{F}_{1}^{\cal T}$.}
\end{equation}

Let $\cal A$ be the trivial resolution, and let this $\cal A$, along with all other parameters, be as in the construction of \four. Note that  
$\st E$ (as any other $\st$-unit) is ${\cal A}_0$-unresolved, and $J$ is (of course) an element of $\mathfrak{T}_0$. 
Hence, by  Lemma \ref{jan17b},
$\st E$ does not dominate $J$ in $\mathbb{F}_{1}^{{\cal R}}$. Since $\cal T$ is an extension of ${\cal R}$, $\st E$ does not dominate $J$ in $\mathbb{F}_{1}^{{\cal T}}$, either.
This, however, contradicts (\ref{janua29a}).

\subsection{Proof of clause (ii) of Lemma \ref{nov5}}

Consider any units $\st E$ and $\st G$ of $\mathbb{F}_{1}^{\cal T}$. Our goal is to show that they do not dominate each other in $\mathbb{F}_{1}^{\cal T}$. For a contradiction, deny this. 

Since the resolution $\cal T$ is total, both $\st E$ and $\st G$ have to be ${\cal T}_J$-resolved  for some politeral unit $J$. Assume $J$ is the $\prec$-smallest of all politeral units such that at least one of the units $\st E$, $\st G$ is ${\cal T}_J$-resolved. Let $\cal A$  be the union of all resolutions ${\cal T}_{J'}$ with $J'\prec J$.  Let this $\cal A$ ($={\cal A}_0$) and all other parameters be as in the construction of \four. Let $i$ be the smallest number such that one of the units $\st E$, $\st G$  is ${\cal A}_i$-resolved. Namely, we may assume that $\st G$ is the one (or {\em a} one) of the two units that is ${\cal A}_i$-resolved. Note that $i>0$, and that $\st G\in\mathfrak{T}_{i-1}$ (otherwise the ${\cal A}_{i-1}$-unresolved $\st G$ would not have become ${\cal A}_i$-resolved). At the same time, 
$\st E$ is ${\cal A}_{i-1}$-unresolved. 
Therefore, by  Lemma \ref{jan17b}, $\st E$ does not dominate $\st G$ in 
$\mathbb{F}_{1}^{\cal R}$, and hence  $\st E$ does not dominate $\st G$ in 
$\mathbb{F}_{1}^{\cal \cal T}$ either. This is however a contradiction.

\subsection{Proof of clause (iii) of Lemma \ref{nov5}} Throughout this proof, ``dominates'' is to be understood as ``dominates in $\mathbb{F}_{1}^{\cal T}$''. Similarly for ``domination chain''. Assume $\st E$,  $\st G$ and $H$ are units of $\mathbb{F}_{1}^{\cal T}$, $\st E$ dominates $\st G$, and $\st G$ dominates $H$. Our goal is to show that $\st E$ dominates $H$. We need to consider the following cases. 

{\em Case 1}: The reason for $\st E$ dominating   $\st G$ is that there is a $\st E$-over-$\st G$ 
domination chain
\begin{equation}\label{jan4a}
L_{1},\ M_{1},\ X_1,\ \ldots, \ L_{n},\ M_{n},\ X_n.
\end{equation}

{\em Subcase 1.1}: The reason for $\st G$ dominating $H$ is that there is a $\st G$-over-$H$ domination chain
\begin{equation}\label{jan4b}
L'_{1},\ M'_{1},\ X'_1,\ \ldots,\  L'_{n'},\ M'_{n'}, \ X'_{n'}.
\end{equation}
For notational convenience, let us agree that 
\[H=L_{n'+1}.\]

{\em Subsubcase 1.1.1}: No $M'_i$ ($1\leq i\leq n'$) is a subunit of $\st E$.  

{\em Subsubsubcase 1.1.1.1}: 
No $M_i$ ($1\leq i\leq n$) drives  any $L'_j$ ($1\leq j\leq n'+1$).
From the fact that (\ref{jan4a}) is a $\st E$-over-$\st G$ 
domination chain, we know that   $M_{n}$  drives $\st G$. From the fact that (\ref{jan4b}) is a $\st G$-over-$H$ 
domination chain,  we also know that $L'_{1}$ is a subunit of $\st G$. Hence, by Lemma \ref{jan3a} and clause 1 of Lemma \ref{jan2a},  $M_{n}$  drives $L'_{1}$. But then, where 
$Y$ is the smallest common superunit of $M_n$ and $L'_1$, the following is clearly  a 
 $\st E$-over-$H$ domination chain: 
\[L_{1}, \ M_{1},\  X_1,\ \ldots,\  L_{n},\ M_{n},\ Y,\  L'_{1},\  M'_{1},\ X'_1,\ \ldots,\  L'_{n'},\ M'_{n'},\ X'_{n'}.\]

{\em Subsubsubcase 1.1.1.2}: Some $M_i$ ($1\leq i\leq n$) drives one of  $L'_j$ ($1\leq j\leq n'+1$). We may assume that $i$ is the smallest number such that $M_i$ drives one of  $L'_j$, and that $j$ is the greatest number such that $M_i$ drives $L'_j$. But note that then, where $Y$ is the smallest common superunit of $M_i$ and $L'_j$, the following is a 
 $\st E$-over-$H$ domination chain: 
\[L_{1}, \ M_{1},\ X_1,\ \ldots,\  L_{i},\ M_{i},\ Y,\ L'_{j},\ M'_{j},\ X'_j,\ \ldots,\  L'_{n'},\ M'_{n'},\ X'_{n'}.\]

{\em Subsubcase 1.1.2}: Some $M'_i$ ($1\leq i\leq n'$) is a subunit of $\st E$. We may assume that $i$ is the greatest such number. 

First, suppose  $M'_i$  drives $\st E$. Let $j$ be the smallest number among $1,\ldots,i$ such that $M'_j$ drives $\st E$, and $Y$ be the smallest common superunit of $M'_j$ and $\st E$.   Then  
\[L'_{1},\ M'_{1}, \ X'_{1},\ \ldots,\  L'_{j},\ M'_{j}, \ Y\] 
is a $\st G$-over-$\st E$ domination chain, meaning that $\st G$ dominates $\st E$. But this is impossible due to the already proven clause (ii) of Lemma \ref{nov5}. 

Now suppose $M'_i$  does not drive $\st E$, meaning that $M'_i$ has a $\cost$-superunit that happens to be a subunit of $\st E$. 
On the other hand, $M'_i$ drives 
$L'_{i+1}$. This obviously implies that $L'_{i+1}$ is a proper subunit of $\st E$. But then we have: (1) if $i=n'$, then  $H$ ($=L'_{n'+1}$) is a proper subunit of $\st E$ and hence $\st E$ dominates $H$; and (2) if $i<n'$, then the following is a $\st E$-over-$H$ domination chain:
\[L'_{i+1},\ M'_{i+1},\ X'_{i+1}\ \ldots,\  L'_{n'},\ M'_{n'}, \ X'_{n'}.\] 

{\em Subcase 1.2}: The reason for $\st G$ dominating $H$ is that the latter is a proper subunit of the former. 

{\em Subsubcase 1.2.1}: None of the units $M_1,\ldots,M_{n-1}$ drives $H$. By Lemma \ref{jan3a}, $\st G$ drives $H$. From the fact that (\ref{jan4a}) is a $\st E$-over-$\st G$ 
domination chain, we also know that $M_n$ drives $\st G$. Hence, by clause 1 of Lemma \ref{jan2a}, $M_n$ drives $H$.  But then, where $Y$ is the smallest common superunit of $M_n$ and $H$, the following is a $\st E$-over-$H$ domination chain:
\[
L_{1},\ M_{1},\ X_1,\ \ldots, \ L_{n},\ M_{n},\ Y.\]

{\em Subsubcase 1.2.2}: One of the units $M_i$ with $1\leq i<n$  drives $H$. We may assume that $i$ is the smallest such number. But then, where $Y$ is the smallest common superunit of $M_i$ and $H$, the following is a $\st E$-over-$H$ domination chain:
\[L_{1},\ M_{1},\ X_1,\ \ldots, \ L_{i},\ M_{i},\ Y.\]

{\em Case 2}:  The reason for $\st E$ dominating   $\st G$ is that the latter is a proper subunit of the former.

{\em Subcase 2.1}: The reason for $\st G$ dominating $H$ is that there is a $\st G$-over-$H$ domination chain. Let (\ref{jan4b}) be such a chain.  
If none of $M'_i$ ($1\leq i\leq n'$) is a subunit of $\st E$, then the same chain is a $\st E$-over-$H$ domination chain. 
And if some $M'_i$ ($1\leq i\leq n'$) is a subunit of $\st E$, then we reason as in Subsubcase 1.1.2 and find that $\st E$ dominates $H$.

{\em Subcase 2.2}: The reason for $\st G$ dominating $H$ is that the latter is a proper subunit of the former. Then $H$ is a proper subunit of $\st E$, and hence $\st E$ dominates $H$.

\section{Hyperformulas and hypercirquents}

In addition to the language of $\fif$, in this section we consider an infinitary propositional language with the logical vocabulary $\{\gneg,\mlc,\mld\}$ and  a continuum of atoms. Atoms (resp. literals, formulas) of this language will be referred to as ``hyperatoms'' (resp. ``hyperliterals'', ``hyperformulas''). 

Namely, every {\bf hyperatom} is a triple $(P,A,B)$, where $P$ is an atom of the language of $\fif$, and $A$ and $B$ are sets of natural numbers (and vice versa: every such triple is a hyperatom). Next, every {\bf hyperliteral} is either $\alpha$ (a positive hyperliteral) of $\gneg \alpha$ (a negative hyperliteral), where $\alpha$ is a hyperatom.   
Two hyperliterals are said to be {\bf opposite} if one is $\alpha$ and the other is $\gneg \alpha$ for some hyperatom $\alpha$. Finally, {\bf hyperformulas} are defined inductively as follows:  
\begin{description}
  \item[(a)] Every hyperatom is a hyperformula.
  \item[(b)] Whenever $\alpha$ is a hyperformula, so is $\gneg \alpha$. 
  \item[(c)] Whenever $S$ is a (not necessarily finite or even countable) set of hyperformulas, so are $\mlc S$ and $\mld S$.
  \item[(d)] Nothing is a hyperformula unless it can be obtained by repeated applications of (a), (b) and (c). 
  \end{description} 

A {\bf subhyperformula} in the context of hyperformulas means the same as a subformula in the context of formulas. 

What we  call a {\bf hypermodel} is nothing but a truth assignment for hyperformulas, i.e., a function $h$ that sends every hyperatom $\alpha$ to one of the values {\em true} or {\em false}, and extends to all hyperformulas in the standard way. When the value of $h$ at a hyperformula $\alpha$ is {\em true}, we say that $\alpha$ is {\bf true under} $h$, or that $h$ {\bf makes $\alpha$ true}. Similarly for ``false''. A hyperformula $\alpha$ is said to be {\bf satisfiable} iff there is a hypermodel that makes it true; otherwise it is {\bf unsatisfiable}. And a hyperformula $\alpha$ is a {\bf tautology}, or is {\bf tautological}, iff every hypermodel makes it true.

A {\bf positive} (resp. {\bf negative}) {\bf occurrence} of a hyperatom in a hyperformula is an occurrence that is in the scope of even (resp. odd) number of occurrences of $\gneg$. Now, we say that a hyperformula $\alpha$ is {\bf binary} iff every hyperatom has at most one positive and at most one negative occurrence in it. A hyperformula $\alpha$ is a {\bf binary tautology} iff it is both binary and tautological. 
 
In the context of the run $\Omega$ fixed in Section \ref{sscs}, with each politeral unit  $L$ we associate a hyperliteral $L^{\circ}$. 
Namely, let $\Theta_L$ be the projection of $\Omega$ on $L$, we let  $A_L$ and $B_L$ be the following sets of natural numbers:
\[\begin{array}{c}
 A_L\ =\ \{a\  |\ \mbox{\em $\Theta_L$ contains the labmove $\top a$}\};\\
 B_L\ =\ \{b\  |\ \mbox{\em $\Theta_L$ contains the labmove $\bot b$}\}. 
\end{array}\]
In other words, $A_L$ is the set of natural numbers enumerated by $\cal E$'s adversary in $L$, and $B_L$ is the set of natural numbers enumerated by $\cal E$. 
 Then, where $P$ is the atom of the language of $\fif$ such that $\tilde{L}=P$ (resp. $\tilde{L}=\gneg P$),   we stipulate that $L^{\circ}$ is the hyperliteral $(P,A,B)$ (resp. $\gneg (P,B,A))$. 

\begin{lemma}\label{may4a}
Any two politeral units $L$ and $M$ are opposite iff so are the corresponding hyperliterals $L^{\circ}$ and $M^{\circ}$. 
\end{lemma}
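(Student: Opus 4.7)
The plan is to simply unfold both definitions and check that they match up; the argument is pure bookkeeping, and I do not expect any substantive obstacle.

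First I would fix notation. For any politeral unit $L$, write $L^\circ$ out explicitly: if $\tilde{L}$ is a positive literal $P$, then $L^\circ = (P, A_L, B_L)$; and if $\tilde{L} = \gneg P$, then $L^\circ = \gneg (P, B_L, A_L)$. Thus forming $\gneg L^\circ$ flips the positive/negative status of the hyperliteral, and (accounting for the two cases) this flipping interacts with the swap of the two components of the hyperatom exactly so that  $\gneg L^\circ = L'^{\circ}$, where $L'$ is any hypothetical politeral unit with $\tilde{L'} = \gneg\tilde{L}$, $A_{L'} = B_L$ and $B_{L'} = A_L$. This observation will make both directions of the equivalence almost immediate.

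For the forward direction, assume $L$ and $M$ are opposite politeral units. Then $\tilde{L} = \gneg \tilde{M}$, so without loss of generality $\tilde{L} = P$ and $\tilde{M} = \gneg P$ for some atom $P$ of $\mathbb{F}_0$. The second clause in the definition of opposite politeral units says that the set of $\top$-labeled moves in the projection of $\Omega$ on $L$ equals the set of $\bot$-labeled moves in the projection of $\Omega$ on $M$, and symmetrically for the other label. Translating into the sets $A_L, B_L, A_M, B_M$, this reads $A_L = B_M$ and $B_L = A_M$. Substituting, $M^\circ = \gneg (P, B_M, A_M) = \gneg (P, A_L, B_L) = \gneg L^\circ$, so $L^\circ$ and $M^\circ$ are opposite hyperliterals.

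For the converse, suppose $L^\circ$ and $M^\circ$ are opposite hyperliterals. Then, by definition, there is a hyperatom $(P,A,B)$ such that one of the two hyperliterals equals $(P,A,B)$ and the other equals $\gneg(P,A,B)$. The polarities of $L^\circ$ and $M^\circ$ are determined by whether $\tilde{L}$ and $\tilde{M}$ are positive or negative literals, so the fact that $L^\circ$ and $M^\circ$ have opposite polarities above the common base atom $P$ forces $\tilde{L} = \gneg\tilde{M}$. Then splitting on which of $\tilde{L}, \tilde{M}$ is the positive one and reading off the two set-components of the hyperatom, one obtains $A_L = B_M$ and $B_L = A_M$, which is precisely the second clause in the definition of opposite politeral units. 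The only minor care needed is to track the swap of $A$ and $B$ that occurs when $\gneg$ is absorbed into the definition of $L^\circ$ for a negative $\tilde{L}$; this is handled uniformly by the case split and by the observation about $\gneg L^\circ$ made at the start.
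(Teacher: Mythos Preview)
Your proposal is correct and is exactly the kind of definition-unfolding the paper has in mind; the paper's own proof consists of the single word ``Straightforward.'' Your write-up simply makes explicit the case split on the polarity of $\tilde{L}$ and the resulting swap of the $A$- and $B$-components, which is precisely what that word is standing in for.
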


\begin{proof} Straightforward. \end{proof}

We now turn the unit tree $\mathbb{F}_{1}$ into a hyperformula 
\[\overline{\mathbb{F}_1}\]
as follows. $\mathbb{F}_1$ can be viewed as a tree where every node is labeled with (rather than {\em is}) a (the corresponding) unit. Then $\overline{\mathbb{F}_1}$ is a hyperformula which --- more precisely, whose parse tree --- is obtained from $\mathbb{F}_1$ through replacing every $\mlc$- or $\st$-unit by $\mlc$, every $\mld$- or $\cost$-unit by $\mld$, and every politeral unit $L$ with the hyperliteral $L^{\circ}$.   Whenever a hyperformula $\beta$ (together with all of its subhyperformulas) is obtained in this process through replacing a unit $E$ (together with all of its subunits), we say that 
 $E$ is the {\bf unital origin} of $\beta$, and denote $E$ by \[\uorigin(\beta).\]

The hyperformula 
\[\overline{\mathbb{F}_{1}^{\cal T}}\]
and the function of {\bf unital origin} for its subhyperformulas are defined in the same way, with the only difference that now the starting unit tree for applying the above-described replacements is $\mathbb{F}_{1}^{\cal T}$ rather than $\mathbb{F}_{1}$.  Note that, in view of the totality of the resolution $\cal T$, every conjunctive subhyperformula of $\overline{\mathbb{F}_{1}^{\cal T}}$ originating from a $\st$-unit has a single conjunct. Thus, unlike disjunctive subhyperformulas (namely, those originating from $\cost$-units), all conjunctive subhyperformulas of $\overline{\mathbb{F}_{1}^{\cal T}}$ have only finitely many (specifically, one or two) conjuncts.   

In view of Lemmas \ref{feb24a} and \ref{may4a}, it is clear that 
\begin{equation}\label{apr3a}
\mbox{\em $\overline{\mathbb{F}_1}$ is binary, and hence so is $\overline{\mathbb{F}_{1}^{\cal T}}$.}
\end{equation}

In turn, from (\ref{apr3a}), it is clear that different occurrences of subhyperformulas in $\overline{\mathbb{F}_{1}}$ or $\overline{\mathbb{F}_{1}^{\cal T}}$ are always different as hyperformulas; therefore, there is no need to maintain a terminological distinction between ``subhyperformula'' and  ``osubhyperformula''.

\begin{lemma}\label{feb26a}
$\overline{\mathbb{F}_1}$ is not tautological. 
\end{lemma}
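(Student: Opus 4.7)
The plan is to produce a hypermodel $h$ making $\overline{\mathbb{F}_1}$ false. I first reduce to the task of falsifying $\overline{\mathbb{F}_1^{\cal T}}$: since $\mathbb{F}_1^{\cal T}$ is obtained from $\mathbb{F}_1$ by trimming each $\st$-unit to its single $\cal T$-resolvent child, an easy structural induction shows that any hypermodel falsifying $\overline{\mathbb{F}_1^{\cal T}}$ also falsifies $\overline{\mathbb{F}_1}$ --- the key case, at $\st$-units, being that the (false) resolvent conjunct still belongs to the corresponding (larger) $\mlc$-subhyperformula of $\overline{\mathbb{F}_1}$, forcing the latter to be false.

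To falsify $\overline{\mathbb{F}_1^{\cal T}}$, I would build a \emph{refutation set} $R$ of units of $\mathbb{F}_1^{\cal T}$ by the following closure rules: $\mathbb{F}_0[\hspace{1pt}] \in R$; if a $\mld$- or $\cost$-unit is in $R$, then all of its children are in $R$; if a $\st$-unit is in $R$, then its unique child is in $R$; and if a $\mlc$-unit $G_0 \mlc G_1$ is in $R$, then exactly one of $G_0, G_1$ is in $R$, selected by a rule specified below. I then define $h$ by $h((P, A, B))=$false iff some politeral $L\in R$ satisfies $L^{\circ}=(P, A, B)$. A routine structural induction on units shows that every element of $R$ is then false under $h$ (in particular the root), \emph{provided} $R$ satisfies the following consistency condition: no two opposite politeral units both lie in $R$. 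This condition is necessary since, by Lemma~\ref{may4a}, opposite politerals give rise to opposite hyperliterals, and these cannot simultaneously be false under any hypermodel.

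The whole task thus reduces to choosing the $\mlc$-branches so that $R$ is consistent, and this is where Lemma~\ref{nov5} is indispensable. For any pair $L, M$ of opposite politerals, their smallest common superunit $H$ in $\mathbb{F}_1^{\cal T}$ originates from a binary connective of $\mathbb{F}_0$, since $\tilde L$ and $\tilde M$ are opposite oliterals and hence must first diverge at a $\mlc$ or $\mld$. If $\tilde H$ is $\mlc$, the single-branch rule at $H$ already prevents $L$ and $M$ from both entering $R$. The dangerous case is $\tilde H = \mld$, in which both branches of $H$ automatically enter $R$ and we must exclude $H$ itself by a $\mlc$-choice at some ancestor of $H$. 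To coordinate such exclusions simultaneously for all opposite pairs, I rely on Lemma~\ref{nov10c} (opposite politerals are dominated by the same $\st$-units of $\mathbb{F}_1^{\cal T}$) together with the three clauses of Lemma~\ref{nov5}, which make domination on the $\st$-units of $\mathbb{F}_1^{\cal T}$ a strict partial order in which no $\st$-unit dominates the root. I would then perform a well-founded induction along this domination order, using the non-domination of the root as the base case, to fix the $\mlc$-choices locally in a way that eliminates every dangerous $\mld$-meeting. I expect the detailed execution of this coordinated choice procedure to be the main technical obstacle; once $R$ is obtained consistently, the verification that $h$ falsifies $\overline{\mathbb{F}_1^{\cal T}}$ is a routine structural induction, which together with the reduction of the first paragraph establishes the lemma.
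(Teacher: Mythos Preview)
There is a genuine gap: your argument never invokes the hypothesis $\fif\not\vdash\mathbb{F}_0$ from (\ref{feb17a}), yet this hypothesis is indispensable. For a $\fif$-provable formula such as $\mathbb{F}_0=P\mld\gneg P$, the hyperformula $\overline{\mathbb{F}_1}$ \emph{is} tautological: against a copycat $\cal H$ the two politeral units $P[\hspace{1pt}]$ and $\gneg P[\hspace{1pt}]$ are opposite, so $\overline{\mathbb{F}_1}$ reduces to $(P,A,B)\mld\gneg(P,A,B)$. In this example the smallest common superunit of the opposite pair is the $\mld$-root itself, with no $\mlc$-ancestor at which to ``exclude'' it, and your refutation-set construction simply fails. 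Since Lemmas~\ref{nov5} and~\ref{nov10c} are proved without any appeal to unprovability (they hold for every $\mathbb{F}_0$), they cannot by themselves yield the conclusion. More broadly, the domination order you propose to induct along is a relation on $\st$-units, and you never explain a mechanism by which it governs choices at $\mlc$-units; the ``coordinated choice procedure'' you defer as a technical obstacle is not a detail but the entire substance of the lemma, and it is exactly here that unprovability must enter.

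The paper's proof takes a completely different route. It argues by contradiction: if $\overline{\mathbb{F}_1}$ (hence $\overline{\mathbb{F}_1^{\cal T}}$) were tautological, then by compactness some \emph{finite} pruning $\mathbb{F}_4$ of $\overline{\mathbb{F}_1^{\cal T}}$ is already a binary tautology. From $\mathbb{F}_4$ one then builds, bottom-up, an explicit $\fif$-proof of $\mathbb{F}_0$ via a procedure $\first$; Lemma~\ref{nov5} is used there not to construct a countermodel but to show that Corecurrence Introduction can always be applied when needed (Stage~4 of $\first$) and that the resulting cirquent $\mathbb{D}$ reduces to an axiom. This is precisely where (\ref{feb17a}) is discharged. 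If you wish to pursue a direct model-theoretic construction, you must find a way to feed the syntactic assumption $\fif\not\vdash\mathbb{F}_0$ into the building of your refutation set --- and it is far from clear that this can be done without effectively reproducing the proof-search argument.
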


The rest of this section is devoted to a proof of Lemma \ref{feb26a}. For a contradiction, assume 
\begin{equation}\label{apr2b} \mbox{\em $\overline{\mathbb{F}_1}$ is tautological.}
\end{equation}

Observe that $\overline{\mathbb{F}_{1}^{\cal T}}$ is the result of deleting in $\overline{\mathbb{F}_{1}}$ some conjuncts of some conjunctive subhyperformulas. The result of such a deletion obviously cannot destroy tautologicity. So, (\ref{apr2b}) immediately implies that  
\begin{equation}\label{apr2a}
\mbox{\em $\overline{\mathbb{F}_{1}^{\cal T}}$ is tautological.}
\end{equation}

We now convert $\overline{\mathbb{F}_{1}^{\cal T}}$ to its disjunctive normal form, which we denote by 
\(\mathbb{F}_2.\)
Namely, $\mathbb{F}_2$ is obtained from  $\overline{\mathbb{F}_{1}^{\cal T}}$ in the standard way by repeatedly applying distributivity and changing every conjunction of disjunctions to an equivalent disjunction of conjunctions. It is not hard to see that such a hyperformula $\mathbb{F}_2$ has a continuum of disjuncts (unless $\mathbb{F}_1$ did not contain $\cost$), where, however, each disjunct only has {\em finitely many} conjuncts. So, $\mathbb{F}_2$ is in fact a disjunction of ordinary, ``finitary'' formulas of classical logic. $\mathbb{F}_2$ is equivalent (in the standard classical sense) to $\overline{\mathbb{F}_{1}^{\cal T}}$ and hence, just like the latter, is a tautology. The tautologicity of $\mathbb{F}_2$ means that $\gneg \mathbb{F}_{2}$ is unsatisfiable. Applying the standard DeMorgan conversions to $\gneg \mathbb{F}_2$, we turn it into an equivalent --- and hence also unsatisfiable --- hyperformula \(\mathbb{F}_3,\) which is a (probably uncountably long) conjunction of finite disjunctions of hyperliterals. Let $S$ be the set of all conjuncts of $\mathbb{F}_3$. Thus, $S$ is unsatisfiable. Then, by the compactness theorem,\footnote{The compactness theorem is known to hold not only for countable languages, but also for languages with uncountable sets of atoms.} there is a {\em finite} subset $S'$ of $S$ which is unsatisfiable. This, in turn, means that there is an unsatisfiable hyperformula $\mathbb{F}'_3$ which results from $\mathbb{F}_3$ by deleting all but finitely many conjuncts. This, in turn, implies that there is a tautological hyperformula $\mathbb{F}'_2$ which results from $\mathbb{F}_2$ by deleting all but finitely many disjuncts. Applying distributivity to $\mathbb{F}'_2$ in the opposite direction (opposite to the direction used when obtaining $\mathbb{F}_2$ from $\overline{\mathbb{F}_{1}^{\cal T}}$), we get a tautological hyperformula 
\[\mathbb{F}_4\]
 which is the result of deleting in $\overline{\mathbb{F}_{1}^{\cal T}}$ all but finitely many disjuncts in all disjunctive subhyperformulas.  For simplicity, we may (and will) assume that 
\begin{equation}\label{apr9b}
\mbox{\em every disjunctive subhyperformula of $\mathbb{F}_4$ has at least one disjunct}
\end{equation}
(otherwise, if all of the disjuncts of a disjunctive subhyperformula of $\overline{\mathbb{F}_{1}^{\cal T}}$ have been  deleted, restore one --- arbitrary --- disjunct). Similarly, we may (and will) assume that 
\begin{equation}\label{apr9bb}
\mbox{\em whenever $\beta$ is a subhyperformula of $\mathbb{F}_4$ such that $\uorigin({\beta})$ is a $\mld$-unit, $\beta$ has two disjuncts.}
\end{equation}

Thus, $\mathbb{F}_4$ is a tautology of ordinary, ``finitary'' classical propositional logic. In view of (\ref{apr3a}), it is also binary. 

The earlier defined function $\uorigin$ of unital origin extends from subhyperformulas of $\overline{\mathbb{F}_{1}^{\cal T}}$ to those of $\mathbb{F}_{4}$ in an obvious way.  Namely, as noted just a while ago,  every   subhyperformula $\beta$ of $\mathbb{F}_4$ is the result of deleting some 
disjuncts in some subdisjunctions of some subhyperformula $\beta'$ of $\overline{\mathbb{F}_{1}^{\cal T}}$.  Then the unital origin of $\beta$ is the same as that of $\beta'$.

We are now going to show that 
\begin{equation}\label{apr3b}
\fif\vdash \mathbb{F}_0.
\end{equation}
The above statement directly  contradicts our original assumption (\ref{feb17a}) that $\fif\not\vdash \mathbb{F}_0$. And, as we remember, obtaining a contradiction (from assumption (\ref{apr2b})) was the goal of the present proof. Thus, the only remaining task within our proof of Lemma \ref{feb26a} now is to verify (\ref{apr3b}).  

Let us say that a cirquent $C$ is {\bf derivable} (in $\fif$) from a cirquent $C'$ iff there is a sequence $C_1,\ldots,C_n$ of cirquents, called a {\bf derivation} of $C$ from $C'$, such that $C_1=C'$, $C_n=C$ and, for each $i\in\{1,\ldots,n-1\}$, $C_{i+1}$ follows from $C_{i}$ by one of the rules of $\fif$. 

We construct a $\fif$-proof of $\mathbb{F}_0$ bottom-up, starting from $\mathbb{F}_0$ and moving from conclusions to premises. The whole construction consists of two main parts. During the first part, we apply the below-described procedure $\first$, which takes (starts with) $\mathbb{F}_0$ --- more precisely, the cirquent $\mathbb{F}_{0}^{\clubsuit}$ --- and constructs a derivation of it from a certain cirquent $\mathbb{D}$ all of whose oformulas are literals (Lemma \ref{apr6b}). During the second part (Lemma \ref{apr8c}), we continue our bottom-up construction of a proof from $\mathbb{D}$ and hit an axiom, meaning that $\mathbb{D}$ is provable in $\fif$ and hence so is $\mathbb{F}_0$.

Here  we slightly expand the earlier-introduced formalism of cirquents through allowing to prefix some (possibly all, possibly none) $\cost$-oformulas of a cirquent by a $\surd$ (``check''). A $\cost$-oformula of the form $\surd \cost E$ will be said to be {\bf checked}, and all other oformulas said to be {\bf unchecked}. The presence or absence of the prefix $\surd$ has no effect on the applicability of rules: from the perspective of (the rules of) $\fif$, an oformula $\surd \cost E$ is treated as if it simply was $\cost E$.

At every step of our description of the procedure $\first$, $C$ stands for the ``current cirquent'', for which 
(as long as possible) we need to find a premise. $C$ is initialized to $\mathbb{F}_{0}^{\clubsuit}$. 

Next, $\first$ maintains a one-to-one mapping that sends each  oformula $E$ of the ``current cirquent'' $C$ to some subhyperformula $\alpha$ of $\mathbb{F}_4$, with such an $\alpha$ said to be the {\bf $C$-image  of $E$}, or the {\bf image of $E$ in $C$}, or simply the {\bf image of $E$} when $C$ is fixed or clear from the context, and denoted by \[\image^C(E).\] The image of the (only) oformula $\mathbb{F}_0$ of the initial cirquent $\mathbb{F}_{0}^{\clubsuit}$ is the hyperformula $\mathbb{F}_4$ itself. The mapping will be maintained in such a way that --- as will be easily seen from our description of the procedure\footnote{And hence we will not bother to explicitly verify it.} --- we always have:
\begin{conditions}\label{conds}  
For any unchecked oformula $H$ of $C$, the $\mathbb{F}_0$-origin of $\uorigin(\image^C(H))$ is $H$; similarly, 
for any checked oformula $\surd\cost H$ of $C$, the $\mathbb{F}_0$-origin of $\uorigin(\image^C(\surd\cost H))$ is $H$.
Consequently: 
\begin{description}
  \item[(i)] where $L$ is a literal, the image of the oformula $L$ or $\surd\cost L$ is a hyperliteral (namely, a hyperliteral $\alpha$ such that the $\mathbb{F}_0$-origin of $\uorigin(\alpha)$ is $L$);
  \item[(ii)] the image of an oformula of the form $E\mlc G$ or $\surd\cost (E\mlc G)$ is of the form $\mlc\{\alpha,\beta\}$;
  \item[(iii)] the image of an oformula of the form $E\mld G$ or $\surd\cost (E\mld G)$ is of the form $\mld\{\alpha,\beta\}$;
  \item[(iv)] the image of an oformula of the form $\st E$ or $\surd\cost\st E$ is of the form $\mlc\{\alpha\}$;
  \item[(v)] the image of an oformula of the form $\cost E$ or $\surd\cost\cost E$ is of the form $\mld\{\alpha_1,\ldots,\alpha_n\}$ for some $n\geq 1$.
\end{description}
\end{conditions}

We refer to the (only) overgroup of the initial cirquent $\mathbb{F}_{0}^{\clubsuit}$ as the {\bf master overgroup}. Every new  (``non-master'')  overgroup that will emerge during the procedure will be {\bf labeled} with some $\st$-unit of $\mathbb{F}_{1}^{\cal T}$. The rules (in the bottom-up view) applied during the procedure $\first$ never destroy, merge or split overgroups,\footnote{Even if those rules may modify the order, quantity or contents of overgroups.} and every given overgroup, together with its label (as well as the status of being or not being the master overgroup) is  automatically inherited without changes in all subsequent/new cirquents. So, when $O$ is an overgroup of a given cirquent, we will terminologically treat the corresponding --- inherited from $O$ in an obvious sense of ``inheritance'' hardly requiring any formal explanation --- overgroup of a subsequent cirquent as ``the same overgroup'', and refer to the latter using the same name $O$. Similarly in the case of undergroups or oformulas whenever the corresponding inheritance is obvious.  

$\first$ proceeds by consecutively performing the following {\bf stages} 1 through 4 over and over again until none of the four stages results in any changes to the ``current cirquent'' $C$. Every stage will involve zero or more {\bf steps}, with each step changing the value of $C$ to that of one of its legitimate premises through applying (bottom-up) one of the rules of $\fif$.  The images of the oformulas not affected/modified by a given application of a rule (by a given step) are implicitly assumed to remain unchanged when moving from conclusion to premise.\vspace{10pt} 

{\em Stage 1}. Keep applying to $C$ Conjunction Introduction and Disjunction Introduction as long as possible.  As a result, we  end up with a cirquent that has no oformulas (this does not extend to osubformulas of oformulas though!) of the form $E\mlc G$ or $E\mld G$. 
Whenever $\mld\{\alpha,\beta\}$ was the image of an oformula $E\mlc G$ (see Condition \ref{conds}(ii)) of the cirquent and $E\mlc G$ was split into $E$ and $G$ as a result of applying Conjunction Introduction, the image of $E$ in the new cirquent is stipulated to be $\alpha$ and the image of $G$ is stipulated to be $\beta$. Similarly, whenever $\mld\{\alpha,\beta\}$ was the image of an oformula $E\mld G$ (see Condition \ref{conds}(iii)) of the cirquent and $E\mld G$ was split into $E$ and $G$ as a result of applying Disjunction Introduction, the image of $E$ in the new cirquent is $\alpha$ and the image of $G$ is $\beta$.

{\em Stage 2}. Assume $\st E$ is an oformula of $C$, and $\mlc\{\alpha\}$ is its image (see Condition \ref{conds}(iv)). Turn $\st E$ into $E$ using Recurrence Introduction. The newly emerged overgroup gets labeled with the unital origin of $\mlc\{\alpha\}$, and the image of the newly emerged oformula $E$ becomes $\alpha$. Repeat such a step as long as possible.

{\em Stage 3}.  Assume $\cost E$ is an (unchecked) oformula of $C$, and $\mld\{\alpha_1,\ldots,\alpha_n\}$ is its image (see Condition \ref{conds}(v)).  Using Contraction $n-1$ times, generate $n$ copies of $\surd \cost E$.\footnote{If here $n=1$, no rule is applied, and simply $\cost E$ is replaced by $\surd \cost E$.} For each $i$ with $1\leq i\leq n$, the image of the $i$'th copy of $\surd\cost E$ will be $\alpha_i$. Repeat such a step as long as possible.

{\em Stage 4}. Assume $\surd \cost E$ is a (checked) oformula of $C$, and $\alpha$ is its image. Further assume that, for every subhyperformula $\beta$ of $\mathbb{F}_4$, whenever $\uorigin(\beta)$ dominates $\uorigin(\alpha)$ in $\mathbb{F}_{1}^{\cal T}$, $\uorigin(\beta)$ is the label of one of the overgroups of $C$. Then, using Corecurrence Introduction, turn $\surd \cost E$ into $E$, and include $E$ (in addition to the old overgroups already containing $\surd \cost E$) in exactly those overgroups that are labeled by some unit $\st H$ such that $\st H$ dominates $\uorigin(\alpha)$ in $\mathbb{F}_{1}^{\cal T}$. The image of $E$ is stipulated to remain $\alpha$.  Repeat such a step as long as possible.\hspace{10pt}

\begin{lemma}\label{nov10a}
Assume $C$ is the ``current cirquent'' at a given step of applying $\first$, and $E$ is an unchecked oformula of $C$. 
Then, with ``overgroup'' meaning one of $C$ and  ``dominates'' meaning ``dominates in $\mathbb{F}_{1}^{\cal T}$'', we have:

(a) $E$ is in the master overgroup.

(b) Whenever $E$ is in some non-master overgroup $O$, the label of $O$ dominates $\uorigin(\image^C(E))$.

(c) Whenever $\alpha$ is a subhyperformula of $\mathbb{F}_4$ and $\uorigin(\alpha)$ dominates $\uorigin(\image^C(E))$, there is an $\uorigin(\alpha)$-labeled overgroup $O$ such that $E$ is in $O$.

\end{lemma}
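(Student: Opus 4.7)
The plan is to prove the three claims simultaneously by induction on the number of steps of $\first$ that have been executed in producing the current cirquent $C$, verifying that each of the four stages preserves the conjunction of (a), (b) and (c). For the base case, $C = \mathbb{F}_{0}^{\clubsuit}$ has a single oformula $\mathbb{F}_0$ with $\image^{C}(\mathbb{F}_0)=\mathbb{F}_4$: clause (a) is immediate, (b) is vacuous (no non-master overgroups), and (c) is vacuous because $\uorigin(\mathbb{F}_4) = \mathbb{F}_0[\hspace{1pt}]$ is the root of $\mathbb{F}_{1}^{\cal T}$, which by clause (i) of Lemma \ref{nov5} is not dominated by any $\st$-unit.

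Stages 1 and 3 are routine. In Stage 1, splitting an oformula $E\mlc G$ or $E\mld G$ preserves overgroup memberships, and the unital origins of the children's images are children of the old unital origin; thus (b) is preserved via Lemma \ref{dec29a} and (c) via clause (a) of Lemma \ref{apr6a}. In Stage 3, a contraction replaces an unchecked $\cost$-oformula by checked copies, which fall outside the scope of the lemma; all other unchecked oformulas are unaffected. In Stage 2, applying Recurrence Introduction to $\st E$ creates exactly one new overgroup, labeled by the $\st$-unit $\uorigin(\image^{C}(\st E))$, and containing only the resulting $E$. Clause (a) is inherited from $\st E$; for (b) the new overgroup's label is the parent of $\uorigin(\image(E))$ (hence dominates it), while for the old overgroups containing $E$ one applies Lemma \ref{dec29a} to the inductive hypothesis on $\st E$; for (c), given any $\uorigin(\beta)$ dominating $\uorigin(\image(E))$, clause (b) of Lemma \ref{apr6a} yields either equality with the parent $\st$-unit (supplied by the new overgroup) or dominance of the parent (supplied by the inductive hypothesis).

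The genuinely subtle case, which I expect to be the main obstacle, is Stage 4. Here $\surd\cost E$ becomes an unchecked $E$ that joins both the overgroups already containing $\surd \cost E$ and the overgroups labeled by $\st$-units dominating $\uorigin(\alpha)$, where $\alpha=\image^{C}(\surd\cost E)$. To verify (b) on the inherited non-master overgroups, I need an analogue of (b) for checked oformulas, which I would track as a strengthened auxiliary invariant throughout the induction: such an invariant survives Stage 3 because each image $\alpha_i$ assigned to a new copy is a child of $\uorigin(\image(\cost E))$, so Lemma \ref{dec29a} preserves the dominance relationship; and it is unaffected by Stages 1, 2, 4 applied to other oformulas. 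For the newly added overgroups, (b) holds by construction. The critical point for (c) is that the applicability condition built into Stage 4 is exactly the statement to be proved: every $\uorigin(\beta)$ dominating $\uorigin(\alpha)$ must already label some overgroup of $C$, and $E$ is explicitly placed in every such overgroup. For every unchecked oformula $F$ other than the one acted upon, the image and overgroup memberships are untouched, so (a), (b), (c) for $F$ follow directly from the induction hypothesis.
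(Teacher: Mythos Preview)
Your proposal is correct and follows essentially the same inductive approach as the paper, with the same case analysis by stage and the same supporting lemmas (Lemma~\ref{dec29a} for clause~(b), Lemma~\ref{apr6a} for clause~(c), and the built-in applicability condition of Stage~4 for clause~(c) there).

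You are in fact slightly more careful than the paper at one point. For clause~(b) in the Stage~4 case, the paper simply writes ``by the induction hypothesis, the label of $O$ dominates $\uorigin(\image^{C'}(\surd\cost E))$'', invoking the hypothesis directly on the checked oformula $\surd\cost E$ in $C'$; but the lemma as stated is only about unchecked oformulas, so strictly speaking this step is not covered. Your proposed fix --- carrying an auxiliary invariant (the analogue of~(b)) for checked oformulas through the induction, verified at Stage~3 via Lemma~\ref{dec29a} since $\uorigin(\alpha_i)$ is a child of $\uorigin(\image(\cost E))$ --- is exactly the rigorous way to close this gap, and it goes through without difficulty because checked oformulas acquire no new overgroup memberships between their creation in Stage~3 and their removal in Stage~4.
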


\begin{proof} Assume the conditions of the lemma.

{\em Clause (a)}: At the very beginning of the work of $\first$, the oformula $\mathbb{F}_0$ is included in the master overgroup. This inclusion is automatically inherited by all later-emerged oformulas of all later-emerged cirquents.

{\em Clause (b)}: Assume $O$ is a non-master overgroup of $C$ containing $E$. $C$ cannot be the initial cirquent $\mathbb{F}_{0}^{\clubsuit}$, because the only overgroup of the latter is the master overgroup. Let $C'$ be the cirquent immediately preceding $C$ (immediately below $C$) in the bottom-up derivation constructed by $\first$.

If the transition from $C'$ (the conclusion) to $C$ (the premise) did not modify $E$, then it is clear that $O$ already existed in $C'$ and contained $E$. Further,  
the image of $E$ in $C$ is the same as in $C'$.  Also, as always, the label of $O$ in $C$ is the same as in $C'$. By the induction hypothesis,\footnote{Where induction is on the number of the steps of $\first$ preceding the step at which $C$ was generated.} the label of $O$ dominates $\uorigin(\image^{C'}(E))$, and thus dominates $\uorigin(\image^C(E))$.    

Now assume the transition from $C'$ to $C$ modified $E$. 

One possibility is that $C'$ contained $E\mld G$ (or $G\mld E$, or $E\mlc G$, or $G\mlc E$, but these cases are similar) and $E$ emerged in $C$ during Stage 1 as a result of splitting this oformula into its two components $E$ and $G$. Note that $O$ already existed in $C'$ and contained $E\mld G$. So, by the induction hypothesis, the label of $O$ dominates 
$\uorigin(\image^{C'}(E\mld G))$. But $\uorigin(\image^{C}(E))$ is a  subunit of $\uorigin(\image^{C'}(E\mld G))$. Hence, by Lemma \ref{dec29a}, the label of $O$ also dominates $\uorigin(\image^{C}(E))$.

Another possibility is that $C'$ contained $\st E$, and $E$ emerged in $C$ during Stage 2 as a result of deleting the prefix $\st$. If $C'$ already contained $O$, the case is similar to the previous one. Otherwise, the label of $O$ is nothing but $\uorigin(\image^{C'}(\st E))$. But $\uorigin(\image^{C}(E))$ is a proper subunit of the latter and, by the definition of domination, is dominated by it. 

The final possibility is that $C'$ contained $\surd\cost E$, and $E$ emerged in $C$ during Stage 4 as a result of deleting the prefix $\surd\cost$. Note that $O$ already existed in $C'$. If $O$ contained $\surd\cost E$ in $C'$, then, by the induction hypothesis, the label of $O$ dominates $\uorigin(\image^{C'}(\surd\cost E))$. But $\image^{C'}(\surd\cost E)=\image^{C}(E)$. Thus, the label of $O$ dominates  $\uorigin(\image^{C}(E))$. And if $O$ did not contain $\surd\cost E$ in $C'$, then the label of $O$ dominates $\uorigin(\image^{C}(E))$, because otherwise $E$ would not have been included in  $O$ when transferring from $C'$ to $C$ according to the prescriptions of Stage 4. 

{\em Clause (c)}: Assume $\alpha$ is a subhyperformula of $\mathbb{F}_4$ and $\uorigin(\alpha)$ dominates $\uorigin(\image^C(E))$. In the initial cirquent $\mathbb{F}_{0}^{\clubsuit}$, the only oformula is $\mathbb{F}_0$ and the unital origin of its image is the root of $\mathbb{F}_{1}^{\cal T}$. By clause (i) of Lemma \ref{nov5}, the latter is not dominated by anything. So, $C$ cannot be the initial cirquent $\mathbb{F}_{0}^{\clubsuit}$.  
Let $C'$ be the cirquent immediately preceding $C$ in the derivation constructed by $\first$. 

If the transition from $C'$ (the conclusion) to $C$ (the premise) did not modify $E$, then, by the induction hypothesis, $C'$ already had an $\uorigin(\alpha)$-labeled overgroup $O$ such that $O$ contained $E$. This situation is then automatically inherited by $C$ from $C'$. 
  
Now assume the transition from $C'$ to $C$ modified $E$.

One possibility is that $C'$ contained $E\mld G$ (or $G\mld E$, or $E\mlc G$, or $G\mlc E$, but these cases are similar), and $E$ emerged in $C$ during Stage 1 as a result of splitting this oformula into its two components $E$ and $G$. In view of clause (a) of Lemma \ref{apr6a},   $\uorigin(\alpha)$ dominates $\uorigin(\image^C(E\mld G))$. Hence, by the induction hypothesis, an $\uorigin(\alpha)$-labeled overgroup $O$ existed in $C'$ and it included $E\mld G$. The same overgroup and the corresponding inclusion are 
inherited by $C$ and its oformula $E$.

Another possibility is that $C'$ contained $\st E$, and $E$ emerged in $C$ during Stage 2 as a result of deleting the prefix $\st$. In view of clause (b) of Lemma \ref{apr6a},  $\uorigin(\alpha)$ is either the same as $\uorigin(\image^{C'}(\st E))$ or dominates  $\uorigin(\image^{C'}(\st E))$. In the former case, the new overgroup emerged during the transition from $C'$ to $C$ is $\uorigin(\alpha)$-labeled and it includes $E$. In the latter case, by the induction hypothesis, an $\uorigin(\alpha)$-labeled overgroup $O$ existed in $C'$ and it included $\st E$. The same overgroup and the corresponding inclusion are 
inherited by $C$ and its oformula $E$.  

The final possibility is that $C'$ contained $\surd\cost E$, and $E$ emerged in $C$ during Stage 4 as a result of deleting the prefix $\surd\cost$. But in this case the conditions of Step 4 immediately guarantee that clause (c) of the present lemma is satisfied. 
\end{proof}

\begin{lemma}\label{nov10b}
Assume $C$ is the ``current cirquent'' at some given step of applying $\first$, $L$ and $M$ are literals that happen to be oformulas of $C$,\footnote{Note that being an oformula of $C$ means being strictly ``on the surface'' of $C$. Namely, proper osubformulas of oformulas of $C$ do not count as oformulas of $C$.} and the unital origins of their $C$-images are opposite. Then $L$ and $M$ are opposite (one is the negation of the other), and they are contained in exactly the same overgroups of $C$.\end{lemma}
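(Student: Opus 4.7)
The plan is to combine Lemma \ref{nov10a} (which characterizes overgroup membership in terms of domination and labels) with Lemma \ref{nov10c} (which says opposite politeral units are dominated by exactly the same $\st$-units of $\mathbb{F}_{1}^{\cal T}$).

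First I would dispose of the easy part — that $L$ and $M$ are opposite literals. By Condition \ref{conds}(i), $\image^C(L)$ is a hyperliteral whose unital origin is a politeral unit whose $\mathbb{F}_0$-origin is $L$; likewise for $M$. Since $\uorigin(\image^C(L))$ and $\uorigin(\image^C(M))$ are opposite politeral units, their $\mathbb{F}_0$-origins are the literals $L$ and $M$, and the definition of ``opposite'' forces $\tilde{L}=\gneg\tilde{M}$, i.e.\ $L=\gneg M$.

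For the overgroup assertion I would set up the following characterization of overgroup membership for a literal oformula $E$ of $C$: $E$ lies in the master overgroup by clause (a) of Lemma \ref{nov10a}, and for a non-master overgroup $O$ with label $\st H$, $E$ lies in $O$ if and only if $\st H$ dominates $\uorigin(\image^C(E))$ in $\mathbb{F}_{1}^{\cal T}$. The ``only if'' direction is clause (b) of Lemma \ref{nov10a}. For the ``if'' direction I would invoke clause (c) of Lemma \ref{nov10a} to produce \emph{some} $\st H$-labeled overgroup containing $E$, and then note that distinct non-master overgroups carry distinct labels — this is because new overgroups are produced only at Stage 2 of $\first$, one per processed $\st$-oformula, and binarity of $\overline{\mathbb{F}_{1}^{\cal T}}$ (statement (\ref{apr3a})) together with the one-to-one nature of the image mapping guarantees that distinct $\st$-oformulas processed at Stage 2 have images with distinct unital origins; hence there is at most one overgroup carrying any given label $\st H$, and clauses (b) and (c) combine into the claimed equivalence.

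With this characterization in hand, the conclusion is immediate: $L$ and $M$ both belong to the master overgroup, and for any non-master overgroup $O$ with label $\st H$, by Lemma \ref{nov10c} $\st H$ dominates $\uorigin(\image^C(L))$ iff it dominates the opposite politeral unit $\uorigin(\image^C(M))$, so $L\in O$ iff $M\in O$.

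I expect the main obstacle to be verifying the uniqueness of overgroup labels carefully, since the other ingredients come essentially ``for free'' from the preceding lemmas. This uniqueness is really a routine induction on the number of steps of $\first$ carried out so far, tracking how Stages 1--4 affect overgroups (only Stage 2 creates them, and inherits them thereafter), but it is the one place where one must look inside the procedure rather than black-box the previous lemma.
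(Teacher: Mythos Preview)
Your proposal is correct and follows essentially the same route as the paper's proof: both combine clauses (a)--(c) of Lemma \ref{nov10a} with Lemma \ref{nov10c} to conclude that membership of $L$ and of $M$ in any overgroup is determined by the same domination data, hence coincides. You are in fact more careful than the paper about the uniqueness of overgroup labels---the paper leaves this implicit when it asserts that the set of overgroups containing $L$ is exactly the master overgroup together with those whose labels lie in the set $S$ of dominating $\st$-units that are unital origins of subhyperformulas of $\mathbb{F}_4$.
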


\begin{proof} Assume the conditions of the lemma. Below, ``dominates'' should be understood as 11dominates in $\mathbb{F}_{\cal T}$. That $L$ and $M$ are opposite is immediate from Condition \ref{conds}. Next, Lemma \ref{nov10c} implies that $\uorigin(\image^C(L))$ and $\uorigin(\image^C(M))$ are dominated by exactly the same $\st$-units of $\mathbb{F}_{1}^{\cal T}$. Let $S$ be the set of all $\st$-units  of $\mathbb{F}_{1}^{\cal T}$ that dominate
$\uorigin(\image^C(L))$ (and/or $\uorigin(\image^C(M))$) and that also happen to be the unital origins of some subhyperformulas  of $\mathbb{F}_4$. Further, let $S'$ be the (sub)set of  overgroups of $C$ consisting of the master overgroup and those overgroups whose labels are in $S$. Then, according to Lemma \ref{nov10a}, $S'$ is exactly the set of the overgroups of $C$ in which $L$ is contained, and the same holds for $M$.  
\end{proof}

What we below call a {\bf hypercirquent} is defined in exactly the same way as a cirquent, with the only difference that while cirquents are formula-based, hypercirquents are hyperformula-based. Correspondingly, ``{\bf ohyperformula}'' in the context of hypercirquents means the same as ``oformula'' in the context of cirquents.

We say that a hypercirquent $D$  is {\bf binary} iff so are all of its ohyperformulas and, in addition, for every hyperatom $\alpha$, there is at most one ohyperformula in $D$ in which $\alpha$ has a positive occurrence, and at most one ohyperformula in $D$ in which $\alpha$ has a negative occurrence. 
Next, we say that $D$ is {\bf tautological} iff, for every undergroup $U$ of $D$, the disjunction of all ohyperformulas of $U$ is a tautology.  Finally, we say that $D$ is a {\bf binary tautology} iff it is both binary and tautological. 

We agree that, whenever $C$ is a cirquent generated by $\first$, the {\bf image} of $C$, denoted by 
\[\image(C),\]
means the hypercirquent resulting from $C$ through replacing each oformula $E$ by $\image^C(E)$, otherwise retaining all arcs, groups and group labels.

The following two lemmas  can be verified by a routine induction on the number of cirquents generated by $\first$ earlier than $C$, in the same style as in the proof of Lemma \ref{nov10a}.\footnote{In the case of Lemma \ref{apr8b}, the basis of induction will be provided by that already observed fact that $\mathbb{F}_4$ is a binary tautology.} Hence we state them without proofs: 

\begin{lemma}\label{apr8b}
Assume $C$ is the ``current cirquent'' at some given step of applying $\first$. Then $\image(C)$ is a binary tautology.  
\end{lemma}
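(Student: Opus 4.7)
The plan is to prove this by induction on the number $k$ of steps of $\first$ performed to reach $C$ from the initial cirquent $\mathbb{F}_{0}^{\clubsuit}$. When $k=0$, $\image(C)$ consists of the single ohyperformula $\mathbb{F}_4$ placed in the lone undergroup, and we have already shown that $\mathbb{F}_4$ is a binary tautology.

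For the inductive step, assume $\image(C')$ is a binary tautology and $C$ arises from $C'$ by one step. I would examine the four stages in turn. Each Stage 1 step replaces $E\mlc G$ (image $\mlc\{\alpha,\beta\}$) or $E\mld G$ (image $\mld\{\alpha,\beta\}$) by its two components, which corresponds at the hypercirquent level to the classical tautology-preserving equivalences $(\alpha\mlc\beta)\mld X \leftrightarrow (\alpha\mld X)\mlc(\beta\mld X)$ and $(\alpha\mld\beta)\mld X \leftrightarrow \alpha\mld\beta\mld X$, respectively (with undergroups splitting in the conjunctive case and simply enlarging in the disjunctive case). Each Stage 2 step replaces $\st E$ (image $\mlc\{\alpha\}$) by $E$ (image $\alpha$); since $\mlc\{\alpha\}$ is logically identical to $\alpha$, the undergroup disjunctions are unchanged, and the newly introduced overgroup is irrelevant to the tautology condition. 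Each Stage 3 step replaces $\cost E$ (image $\mld\{\alpha_1,\ldots,\alpha_n\}$) by $n$ copies with respective images $\alpha_1,\ldots,\alpha_n$ in every undergroup that contained $\cost E$, absorbing the internal disjunction of $\mld\{\alpha_1,\ldots,\alpha_n\}$ into the undergroup-level disjunction, which is tautology-preserving. Each Stage 4 step replaces $\surd\cost E$ by $E$ with the same image $\alpha$ and modifies only overgroup memberships, so the ohyperformula content and undergroup structure of $\image(C)$ match those of $\image(C')$. Hence tautologicity propagates from $\image(C')$ to $\image(C)$. Binarity is likewise preserved, because the only transformation producing several new ohyperformulas from a single one is Stage 3, where the $\alpha_i$ are pairwise disjoint subhyperformulas of the binary hyperformula $\mathbb{F}_4$; so no hyperatom ever acquires a second positive or second negative occurrence.

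The main obstacle is essentially bookkeeping: one has to keep precise track of how undergroup-splitting in Stage 1 Conjunction Introduction and overgroup introduction in Stages 2 and 4 interact with the $\image$ construction. Once it is verified that each stage realizes exactly the classical equivalence described above, the undergroup-based tautology condition is preserved automatically, and binarity is inherited from the binarity of $\mathbb{F}_4$.
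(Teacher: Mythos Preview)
Your proposal is correct and follows exactly the approach the paper indicates: a routine induction on the number of steps of $\first$, with base case supplied by the already-established fact that $\mathbb{F}_4$ is a binary tautology. One small slip: Stage~1 also produces two ohyperformulas from one, not just Stage~3, but your underlying reason (the new images are disjoint subhyperformulas of the binary $\mathbb{F}_4$) applies there too, so the binarity argument still goes through.
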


\begin{lemma}\label{apr8a}
Assume $C$ is the ``current cirquent'' at some given step of applying $\first$, and $\st H$ is the unital origin of some subhyperformula $\alpha$ of $\mathbb{F}_4$. Then either $\alpha$ is a subhyperformula of some ohyperformula of $\image(C)$,  or there is a $\st H$-labeled overgroup in $C$. 
\end{lemma}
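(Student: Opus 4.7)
The plan is to proceed by induction on the number of cirquents generated by $\first$ prior to $C$, in the same style as the proof of Lemma \ref{nov10a}. For the base case $C=\mathbb{F}_{0}^{\clubsuit}$, the only oformula is $\mathbb{F}_0$ with image $\mathbb{F}_4$, so every subhyperformula of $\mathbb{F}_4$ is trivially a subhyperformula of the unique ohyperformula of $\image(C)$, and the first disjunct holds for every $\alpha$.

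For the inductive step, fix $\alpha$ with $\uorigin(\alpha)=\st H$ and let $C'$ denote the cirquent immediately below $C$ in the derivation being built. Apply the induction hypothesis to $C'$. If a $\st H$-labeled overgroup already exists in $C'$, it is inherited unchanged by $C$ (since the rules invoked by $\first$ never destroy, merge, split, or relabel overgroups), so the second disjunct transfers. Otherwise the IH yields an oformula $E$ of $C'$ with $\alpha$ a subhyperformula of $\image^{C'}(E)$. If $E$ is unaltered by the $C'\to C$ transition then $\image^{C}(E)=\image^{C'}(E)$ and we are done, so the remaining work is a case analysis on the stage that modified $E$.

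The key observation is Condition \ref{conds}: since $\uorigin(\alpha)=\st H$ is a $\st$-unit, $\alpha$ can coincide with $\image^{C'}(E)$ only when $E$ has the form $\st E'$ (the Stage 2 case), because otherwise the top connective of $\image^{C'}(E)$ originates from a $\mlc$-, $\mld$-, or $\cost$-unit and hence cannot equal $\st H$. In Stages 1 and 3 we conclude that $\alpha$ is a proper subhyperformula of $\image^{C'}(E)$, hence a subhyperformula of the image of one of the suboformulas newly exposed in $C$ (the two conjuncts or disjuncts in Stage 1, or some $\alpha_i$ attached to the $i$-th copy of $\surd\cost E'$ in Stage 3). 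In Stage 4 the image is preserved, so the first disjunct carries over verbatim. The sole situation in which the first disjunct can fail is Stage 2 with $\alpha=\image^{C'}(\st E')=\mlc\{\alpha'\}$; but then $\st H=\uorigin(\mlc\{\alpha'\})$ is precisely the label assigned to the fresh overgroup created at that step, so the second disjunct takes over. The argument is essentially mechanical; the only point requiring care is aligning the syntactic form of $E$ with the top connective of $\image^{C'}(E)$ as dictated by Condition \ref{conds}, so that the one case where $\alpha$ equals rather than properly descends into the image coincides exactly with the case in which an overgroup labeled $\st H$ is simultaneously introduced.
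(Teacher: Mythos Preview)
Your proposal is correct and follows essentially the same approach the paper indicates: a routine induction on the number of cirquents generated by $\first$ prior to $C$, in the style of Lemma \ref{nov10a}. The paper states the lemma without proof for precisely this reason, and your case analysis across the four stages---using Condition~\ref{conds} to rule out $\alpha=\image^{C'}(E)$ except in the Stage~2 situation where the freshly created overgroup receives label $\st H$---is exactly what the paper has in mind.
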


In what follows, 
\[\mathbb{D}\]
stands for the final cirquent produced by the procedure $\first$. Since $\first$ always moves from a conclusion to a legitimate premise, we have: 
\begin{equation}\label{apr9d}
\mbox{\em $\mathbb{F}_{0}^{\clubsuit}$ is derivable in $\fif$ from $\mathbb{D}$.}
\end{equation}

 \begin{lemma}\label{apr6b}
Every oformula of $\mathbb{D}$ is a literal, and hence every ohyperformula of $\image(\mathbb{D})$ is a hyperliteral.
\end{lemma}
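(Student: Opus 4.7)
The plan is to exploit that $\mathbb{D}$ is the terminal cirquent of $\first$, so that none of Stages 1--4 is applicable to it. The inapplicability of Stages 1, 2, 3 immediately rules out any oformula of outer form $E\mlc G$, $E\mld G$, $\st E$, or unchecked $\cost E$; hence every oformula of $\mathbb{D}$ is already either a literal or a checked $\surd\cost F$. The task therefore reduces to ruling out oformulas of the latter shape.

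I would argue by contradiction, via a minimality argument. Let $\mathcal{G}$ be the set of $\surd\cost$-form oformulas of $\mathbb{D}$, and for each $G\in\mathcal{G}$ set $U_G = \uorigin(\image^{\mathbb{D}}(G))$ and
\[
D_G \;=\; \{\uorigin(\beta) : \beta \text{ is a subhyperformula of } \mathbb{F}_4 \text{ and } \uorigin(\beta) \text{ is a } \st\text{-unit dominating } U_G \text{ in } \mathbb{F}_{1}^{\cal T}\}.
\]
Since $\mathbb{F}_4$ has only finitely many subhyperformulas, each $D_G$ is finite. Assume $\mathcal{G}\neq\emptyset$ and pick $G\in\mathcal{G}$ so as to minimize $|D_G|$. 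By the failure of Stage 4 at $G$, some $\st H\in D_G$ fails to be the label of any overgroup of $\mathbb{D}$. Applying Lemma \ref{apr8a} to the subhyperformula $\beta$ of $\mathbb{F}_4$ with $\uorigin(\beta) = \st H$ (unique by the binarity of $\mathbb{F}_4$), I get that $\beta$ is a subhyperformula of $\image^{\mathbb{D}}(G')$ for some oformula $G'$ of $\mathbb{D}$. Because $\beta$ is a $\mlc$-hyperformula (its unital origin being a $\st$-unit) and images of literal oformulas are hyperliterals, $G'$ cannot be a literal, so $G' = \surd\cost F'\in\mathcal{G}$.

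Next I would case-split: (i) $\beta = \image^{\mathbb{D}}(G')$, so that $U_{G'} = \st H$; (ii) $\beta$ is a proper subhyperformula of $\image^{\mathbb{D}}(G')$, so that $\st H$ is a proper subunit of $U_{G'}$. In every sub-case, the inclusion $D_{G'}\subseteq D_G$ follows from Lemma \ref{dec29a} (applied to $\st H$ as a subunit of $U_{G'}$) combined with clause (iii) (transitivity) of Lemma \ref{nov5}. To make the inclusion strict and thus contradict the minimality of $|D_G|$, I would produce a separating witness in $D_G \setminus D_{G'}$ via clause (ii) (asymmetry) of Lemma \ref{nov5}: in case (i), $\st H$ itself is such a witness; in case (ii) with $U_{G'}$ a $\st$-unit, $U_{G'}$ serves as the witness; the remaining sub-case --- where $U_{G'}$ is a $\mlc$-, $\mld$-, or $\cost$-unit --- requires locating an appropriate $\st$-superunit of $\st H$ strictly below $U_{G'}$ on the unit-tree path and using the structural conditions of Definition \ref{dec10d} to show that it does not dominate $U_{G'}$.

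The main obstacle is precisely this last sub-case: since $U_{G'}$ is not a $\st$-unit, the asymmetry of domination cannot be invoked at $U_{G'}$ directly. Here one has to exploit the prohibition on proper $\cost$-superunits built into the definition of driving, together with the tree structure of $\mathbb{F}_{1}^{\cal T}$, to argue that the topmost $\st$-superunit of $\st H$ strictly contained in $U_{G'}$ lies in $D_G$ but not in $D_{G'}$, so that the cardinalities separate and the minimality of $|D_G|$ is contradicted.
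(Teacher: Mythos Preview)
Your overall strategy---minimizing $|D_G|$ over the checked oformulas $G$ and then producing a $G'$ with $|D_{G'}|<|D_G|$---does work, but you are making the endgame harder than it needs to be. The single witness $\st H$ already lies in $D_G\setminus D_{G'}$ in \emph{every} case, with no case split on the nature of $U_{G'}$. Indeed, $\st H$ is a subunit of $U_{G'}$ (that is exactly what ``$\beta$ is a subhyperformula of $\image^{\mathbb{D}}(G')$'' means); so if $\st H$ dominated $U_{G'}$, then by Lemma~\ref{dec29a} applied with $H=\st H$ it would dominate its own subunit $\st H$, contradicting clause~(ii) of Lemma~\ref{nov5}. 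Hence $\st H\notin D_{G'}$, and since $\st H\in D_G$ by assumption and $D_{G'}\subseteq D_G$ (your argument via Lemma~\ref{dec29a} plus transitivity is correct), the strict inclusion follows. Your proposed hunt for a ``topmost $\st$-superunit of $\st H$ strictly below $U_{G'}$'' is not needed, and in fact such a unit need not exist.

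For comparison, the paper takes a different and somewhat more direct minimality: rather than minimizing $|D_G|$, it considers the finite set $S$ of $\st$-typed subhyperformulas occurring inside the ohyperformulas of $\image(\mathbb{D})$, picks $\alpha\in S$ whose unital origin is not dominated by the unital origin of any other element of $S$ (existence by transitivity and asymmetry), and then looks at any $\surd\cost E$ whose image contains $\alpha$. The paper then argues that Stage~4 actually applies at this $\surd\cost E$: if some $\uorigin(\beta)$ dominates $\uorigin(\image^{\mathbb{D}}(\surd\cost E))$ without labelling an overgroup, then by Lemma~\ref{apr8a} $\beta\in S$, and by Lemma~\ref{dec29a} $\uorigin(\beta)$ dominates the subunit $\uorigin(\alpha)$, contradicting either the choice of $\alpha$ (if $\beta\neq\alpha$) or asymmetry (if $\beta=\alpha$). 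Both arguments hinge on the same trick---``dominating a superunit forces dominating oneself''---but the paper's version avoids the auxiliary sets $D_G$ and reaches the contradiction in one step rather than through a descent.
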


\begin{proof} Note that, if every oformula of $\mathbb{D}$ is a literal, then, in view of Condition \ref{conds}(i), every ohyperformula of $\image(\mathbb{D})$ is indeed a hyperliteral. So, we only need to verify that all oformulas of $\mathbb{D}$ are literals.   For a contradiction, deny this. Throughout this proof, domination means domination in $\mathbb{F}_{1}^{\cal T}$.  

$\mathbb{D}$ cannot have an oformula of the form $E\mld F$ or $E \mlc F$, because the corresponding Disjunction Introduction or Conjunction Introduction is always applicable, and thus such a formula would have been split into $E$ and $F$ during Stage 1 of \first. Next, there can be no oformulas of the form $\st F$ in $\mathbb{D}$ either,   as Stage 2 would immediately apply Recurrence Introduction and turn it into $F$. Similarly, $\mathbb{D}$ cannot have 
oformulas of the form $\cost F$, because such oformulas would have been modified during Stage 3. 

So, $\mathbb{D}$ must have some  oformulas of the form $\surd\cost F$.  Let $S$ be the set of all hyperformulas $\gamma$ such that $\gamma$ is a subhyperformula of some  ohyperformulas of 
$\image(\mathbb{D})$ and $\uorigin(\gamma)$ is a $\st$-unit.   In view of Lemma \ref{apr8a},  $S$ can be seen to be nonempty, for otherwise all oformulas of the form $\surd \cost F$ would have been eliminated during Stage 4 of $\first$. Next, in view of 
clauses (ii) and (iii) of Lemma \ref{nov5} (namely, the transitive and asymmetric --- and hence irreflexive --- properties of the relation of domination),  it is easy to see that there is an element $\alpha$ of $S$ whose unital origin is not dominated  by the unital origins of any 
other elements of $S$. Such an $\alpha$ obviously should be a subhyperformula of $\image^{\mathbb{D}}(\surd\cost E)$ for some oformula 
$\surd \cost E$ of $\mathbb{D}$  (because there are no $\mld$-, $\mlc$- or $\st$-oformulas in $\mathbb{D}$). Note that $\uorigin(\alpha)$ is a subunit of $\uorigin(\image^{\mathbb{D}}(\surd\cost E))$.
  But then, in view of clause (ii) of Lemma \ref{nov5},  $\uorigin(\image^{\mathbb{D}}(\surd \cost E))$ is not dominated by $\uorigin(\alpha)$. Nor is it dominated by the unital origin of any other element of $S$, because then, by Lemma \ref{dec29a}, so would be 
$\uorigin(\alpha)$. With Lemma \ref{apr8a} in mind, the observation we have just made means that, for every subhyperformula $\beta$ of $\mathbb{F}_4$, whenever $\uorigin(\beta)$ dominates $\uorigin(\image(\surd \cost E))$, $\uorigin(\beta)$ is the label of one of the overgroups of $\mathbb{D}$.  But, if so, $\surd\cost E$ cannot be an oformula of $\mathbb{D}$, because Stage 4 of $\first$ would have turned it into $E$. This is a contradiction. 
\end{proof}

\begin{lemma}\label{apr8c}
$\fif\vdash \mathbb{D}$. 
\end{lemma}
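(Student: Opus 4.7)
The plan is to construct, as a bottom-up counterpart to $\first$, a second procedure $\second$ which, starting at $\mathbb{D}$, grows an upward $\fif$-derivation terminating in axioms. The two structural inputs driving the construction are Lemma \ref{apr6b} (every oformula of $\mathbb{D}$ is a literal, equivalently every ohyperformula of $\image(\mathbb{D})$ is a hyperliteral) and Lemma \ref{apr8b} ($\image(\mathbb{D})$ is a binary tautology).

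Since every ohyperformula of $\image(\mathbb{D})$ is a hyperliteral, tautologicity of an undergroup $U$ of $\image(\mathbb{D})$ reduces to $U$ containing some opposite pair $\alpha,\gneg\alpha$ of hyperliterals --- otherwise the disjunction over $U$ could be falsified by an appropriate hypermodel. Reading this back to $\mathbb{D}$ through Condition \ref{conds}(i), the two corresponding oformulas of the undergroup have opposite politeral unital origins. Lemma \ref{nov10b} then upgrades this to the essential combinatorial property I want to exploit: for each undergroup $U$ of $\mathbb{D}$ there are two oformulas of $U$ that are genuinely opposite literals $L,\gneg L$ and that lie in \emph{exactly the same} overgroups of $\mathbb{D}$. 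I will call such a pair an \emph{axiomatic pair} for $U$.

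The procedure $\second$ then simplifies $\mathbb{D}$ undergroup by undergroup. At each step, I would fix an axiomatic pair $\{L,\gneg L\}$ for the current target undergroup $U$, use Contraction (bottom-up) to duplicate any undergroups or oformulas whose axiomatic-pair members are shared with other undergroups, and then use the available weakening/structural rules of $\fif$ to trim each copy of $U$ down to its chosen pair $\{L,\gneg L\}$. Iterating through all undergroups eventually produces a cirquent in which every undergroup consists of exactly an opposite pair of literals lying in the same overgroups --- this is precisely an axiom of $\fif$, closing the branch of the derivation.

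The main obstacle I foresee is the bookkeeping: a single literal can participate as half of an axiomatic pair in several distinct undergroups, and contractions or trimmings performed on behalf of one undergroup must not destroy the axiomatic pair of another. The coincidence-of-overgroups half of Lemma \ref{nov10b} is what makes coordination feasible --- contractions at the overgroup level duplicate $L$ and $\gneg L$ in lockstep, so axiomatic pairings propagate cleanly through the contractions rather than being severed. Once this bookkeeping is formalized, termination and correctness of $\second$ follow by a straightforward induction on the number of oformulas of $\mathbb{D}$ not yet assigned to a chosen axiomatic pair.
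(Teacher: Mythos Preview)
Your overall plan---identify an opposite pair in each undergroup via Lemmas \ref{apr6b}, \ref{apr8b}, \ref{nov10b}, then strip everything else away and collapse to an axiom---matches the paper's strategy. But the execution has a genuine gap.

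The central error is your proposed use of Contraction. In $\fif$, Contraction acts only on $\cost$-oformulas; by Lemma \ref{apr6b} every oformula of $\mathbb{D}$ is a literal, so Contraction is simply inapplicable at this stage. You therefore have no mechanism to ``duplicate undergroups or oformulas'' as your bookkeeping paragraph proposes, and the sharing problem you correctly identify is left unresolved.

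The paper dissolves that problem differently: it works at the level of $\image(\mathbb{D})$ rather than $\mathbb{D}$, exploiting that $\image(\mathbb{D})$ is \emph{binary}. After applying Weakening (bottom-up) to cut each undergroup down to its chosen pair $\{\alpha,\gneg\alpha\}$, binarity forces any two undergroups that share even one ohyperformula to have \emph{identical} contents---so no duplication is needed, only Undergroup Duplication (bottom-up) to delete the redundant copies. The proof of $\mathbb{D}$ itself then follows because $\mathbb{D}$ is a substitutional instance of $\image(\mathbb{D})$.

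You also omit the overgroup side entirely. Reaching an axiom requires not just that each undergroup be an opposite pair in the same overgroups, but that the overgroup structure itself be in axiom form. The paper handles this with Merging (splitting each overgroup into one per undergroup-pair it contains) followed by Overgroup Duplication (removing redundant overgroups). Your proposal never mentions these steps, and ``contractions at the overgroup level'' is not a rule of $\fif$.
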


\begin{proof} For any atom $P$, whenever $P$ (resp. $\gneg P$) is an oformula of $\mathbb{D}$ and $\alpha$ is a hyperatom with $\alpha=\image^{\mathbb{D}}(P)$ (resp. $\gneg\alpha=\image^{\mathbb{D}}(\gneg P)$), we call the atom $P$ the {\bf preimage} of  $\alpha$. In view of  Condition 
 \ref{conds}(i), the concept of preimage is well defined on the atoms found in $\mathbb{D}$: every hyperatom occurring in (some ohyperformula of) $\image({\mathbb{D}})$ has a unique preimage.
Now note that  
 $\mathbb{D}$ is  a {\em substitutional instance} of $\image({\mathbb{D}})$, in the sense that $\mathbb{D}$ is the result of replacing in $\image({\mathbb{D}})$ every (occurrence of every) hyperatom $\alpha$ by the  preimage of $\alpha$. 

For our present purposes, we ignore the irrelevant technicality that $\image({\mathbb{D}})$ is a hypercirquent rather than a cirquent, and treat it as an ordinary cirquent to which the ordinary rules of $\fif$ can be applied.\footnote{Of course, the rules of Contraction, Recurrence Introduction and Corecurrence Introduction automatically become redundant/inapplicable, because hyperformulas cannot contain $\sti$ or $\costi$.} In what follows, we show that $\fif\vdash \image({\mathbb{D}})$. In view of Condition \ref{conds}, this immediately implies that $\fif\vdash\mathbb{D}$, because  
any $\fif$-proof $T$ of $\image({\mathbb{D}})$ automatically turns into a $\fif$-proof of $\mathbb{D}$ once, in all cirquents of $T$, we replace every occurrence of every hyperatom by the preimage of that atom.

We agree that, in our description of a $\fif$-proof of $\image({\mathbb{D}})$, $C$ always stands for the ``current hypercirquent''. Also, in what follows, we do not explicitly mention Exchange even though the latter may have to be occasionally used  to make applications of the other rules possible. 

According to Lemma \ref{apr8b}, $\image(\mathbb{D})$ is a (binary) tautology. In view of Lemma \ref{apr6b}, 
the tautologicity of $\image(\mathbb{D})$ obviously implies that every undergroup of $\image(\mathbb{D})$ contains a pair of opposite hyperliterals. We choose one such pair for each undergroup of $\image(\mathbb{D})$. Then, starting from $C=\image(\mathbb{D})$ and moving in the ``from conclusion to premise'' direction, we apply a series of Weakenings to $C$ and delete from each undergroup all ohyperformulas except the chosen pair. 

Thus, now every undergroup $U$ of (the resulting value of) $C$ has exactly two ohyperformulas $\alpha$ and $\gneg\alpha$, where $\alpha$ is 
an atom. It is obvious that $C$ remains a binary tautology. The binarity of $C$, in turn, implies that, whenever two undergroups $U_1$ and $U_2$ 
share an ohyperformula,  they share {\em both} ohyperformulas --- that is, $U_1$ and $U_2$ have identical contents.  As our next step, using a 
series of Undergroup Duplications, from each set of identical-content undergroups, we eliminate all but one undergroup. Now, no two undergroups 
share ohyperformulas, and every undergroup $U$ (as before) contains exactly two ohyperformulas $\alpha_U$ and $\gneg\alpha_U$, where $\alpha_U$ 
is a hyperatom. Note also that, in view of Lemma \ref{nov10b}, every such pair $\alpha_U,\gneg\alpha_U$ of ohyperformulas is included in exactly the 
same overgroups. 

The content of 
every overgroup $O$ of  the resulting cirquent $C$ is $\{\alpha_{U_1},\gneg\alpha_{U_1},\ldots,\alpha_{U_n},\gneg\alpha_{U_n}\}$ for some undergroups 
$U_1,\ldots,U_n$. Now, applying a series of Mergings, we split every such overgroup $O$ into $n$ overgroups, whose contents are 
$\{\alpha_{U_1},\gneg\alpha_{U_1}\},\ldots,\{\alpha_{U_n},\gneg\alpha_{U_n}\}$. Finally, 
 using a series of Overgroup Duplications, from each set of identical-content overgroups, we eliminate all but one overgroup.  This way we end up with an axiom. 
\end{proof}

As an immediate corollary of (\ref{apr9d}) and Lemma \ref{apr8c}, we find that $\fif\vdash \mathbb{F}_{0}^{\clubsuit}$. Claim (\ref{apr3b}) is thus proven, and so is Lemma \ref{feb26a}. 

\section{Uniform completeness}\label{notice}

In the following lemma, $\cal H$ and $e$ are the HPM and the valuation fixed in Section \ref{sscs}.

\begin{lemma}\label{apr8d}
There is a constant enumeration interpretation $^*$ such that $\cal H$  does not win $\mathbb{F}_{0}^{*}$. Namely, $\cal H$  loses this game against $\cal E$ on $e$. 
\end{lemma}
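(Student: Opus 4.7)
The plan is to use Lemma \ref{feb26a} to extract a falsifying hypermodel and then transfer its witness to the game-theoretic setting. By Lemma \ref{feb26a}, $\overline{\mathbb{F}_1}$ is not tautological, so some hypermodel $h$ makes $\overline{\mathbb{F}_1}$ false. I would define the constant enumeration interpretation $^*$ by stipulating, for each atom $P$ of $\mathbb{F}_0$ and each pair $(A,B)$ of subsets of $\mathbb{N}$, that the instance of $P^*$ in which $\top$'s enumerated set is $A$ and $\bot$'s enumerated set is $B$ is $\top$-won precisely when $h$ makes the hyperatom $(P,A,B)$ true. Since an enumeration game is uniquely determined by such a function (as observed in Section \ref{sseg}), this defines $P^*$ unambiguously, and the resulting $^*$ is manifestly constant.

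Next I would verify the base-level correspondence: for every politeral unit $L$, the projection $\Theta_L$ of $\Omega$ onto $L$ is $\top$-won as a play of $\tilde{L}^*$ iff $h(L^{\circ})$ is true. When $\tilde{L}=P$, we have $L^{\circ}=(P,A_L,B_L)$ and $\Theta_L$ is a play of $P^*$ with $\top$-enumerated set $A_L$ and $\bot$-enumerated set $B_L$, so the two values coincide directly by the definition of $^*$. When $\tilde{L}=\gneg P$, we have $L^{\circ}=\gneg(P,B_L,A_L)$, and swapping the roles of the two players under negation yields the same equivalence. I would then propagate this upward through the unit tree $\mathbb{F}_1$ by induction on units, using the CL semantics of $\mlc,\mld,\st,\cost$ applied to the static run $\Omega$: $\top$ wins a $\mlc$- (resp.\ $\mld$-) unit iff $\top$ wins both (resp.\ some) child, and $\top$ wins a $\st$- (resp.\ $\cost$-) unit iff, for every (resp.\ some) infinite bitstring $y$, the projection of $\Omega$ onto the child $E[\vec{x},y]$ is $\top$-won. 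These rules exactly mirror the boolean semantics of $\overline{\mathbb{F}_1}$ at the $\mlc/\mld$-nodes originating from $\mlc,\st$ units (resp.\ $\mld,\cost$ units).

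From this induction I obtain the clean summary statement that $\top$ wins $\Omega$ on $e[\mathbb{F}_0^*]$ iff $h$ makes $\overline{\mathbb{F}_1}$ true. Since $h$ falsifies $\overline{\mathbb{F}_1}$, $\top$ loses $\Omega$, which, remembering from Section \ref{sscs} that $\mathcal{H}$ plays in the role of $\top$ and $\mathcal{E}$ in the role of $\bot$, means precisely that $\mathcal{H}$ loses $\mathbb{F}_0^*$ against $\mathcal{E}$ on $e$. Combined with the fact (noted in Section \ref{sseg}) that the set of legal runs of $e[\mathbb{F}_0^*]$ does not depend on $e$ or $^*$, and with (\ref{feb23b}), no appeal to legality is needed beyond what has already been established.

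The main obstacle I expect is the careful execution of the inductive step for $\st$- and $\cost$-units. One must confirm that the CL-semantic winner at such a unit, evaluated on the full run $\Omega$, truly equals the conjunction (resp.\ disjunction) over all $y\in 2^{\omega}$ of the winners at the children $E[\vec{x},y]$; this is the point at which the ``uncountably branching formula'' view of $\mathbb{F}_1$ that motivates the entire construction must be reconciled with the branching-recurrence semantics imported from Part I. A minor additional bookkeeping matter is that, because of the cospelled-rather-than-spelled convention of Section \ref{sscs}, one must check that the $\top$/$\bot$-labels on moves of $\Omega$ correspond correctly to $\mathcal{H}$'s and $\mathcal{E}$'s moves and thereby to the enumerations $A_L,B_L$ used in the definition of $L^{\circ}$; this is resolved once the definitions are lined up, but it is the step most susceptible to sign errors.
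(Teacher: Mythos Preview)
Your proposal is correct and follows essentially the same approach as the paper: extract a falsifying hypermodel $h$ from Lemma~\ref{feb26a}, define $^*$ so that the winner of $P^*$ on a run with $\top$-enumerated set $A$ and $\bot$-enumerated set $B$ is determined by $h(P,A,B)$, and conclude that $\Omega$ is $\bot$-won. The paper compresses your base-case and inductive verification into a single ``it is not hard to see,'' so your version is simply a more explicit rendering of the same argument.
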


\begin{proof} According to Lemma \ref{feb26a}, $\overline{\mathbb{F}_1}$ is not tautological. This means there is a hypermodel $h$ such that $h$ makes $\overline{\mathbb{F}_1}$ false. We now define the interpretation $^*$ by stipulating that, for every atom $P$, $P^*$ is the constant enumeration game such that, for any legal run $\Gamma$ of $P^*$, $\win{P^*}{}\seq{\Gamma}=\top$ iff 
$h$ makes the hyperliteral $(P,A,B)$ true, where $A$ and $B$ are the following sets of natural numbers:
\[\begin{array}{c}
 A\ =\ \{a\  |\ \mbox{\em $\Gamma$ contains the labmove $\top a$}\};\\
 B\ =\ \{b\  |\ \mbox{\em $\Gamma$ contains the labmove $\bot b$}\}. 
\end{array}\]
 It is not hard to see that $\Omega$ --- the ${\cal H}$ vs.\hspace{-1pt} ${\cal E}$-run on $e$ fixed in Section \ref{sscs} --- is a $\bot$-won run of $\mathbb{F}_{0}^{*}$. Thus, $\cal H$ loses  $\mathbb{F}_{0}^{*}$ against $\cal E$ on $e$.
\end{proof}

Now, remembering that $\mathbb{F}_0$ was an arbitrary (even if fixed) $\fif$-unprovable formula, Lemma \ref{apr8d} means nothing but that  the implication $(ii)\Rightarrow (i)$ (that is, $\gneg (i)\Rightarrow\gneg(ii)$) of Theorem  6 of  \cite{taming1}, in the strong form of clause (b) of that theorem, holds.

\begin{remark}\label{apr15a}
For good reasons (see, for instance, the end of Section 9 of \cite{lbcs}), CoL takes no interest in non-effective strategies. It would however be a pity to leave it unobserved that we have never really relied on the fact that the strategy of $\cal E$'s adversary was algorithmic. Namely, our entire argument and Lemma \ref{apr8d} in particular would just as well go through if $\cal H$ was an HPM with an arbitrary oracle,\footnote{See Section 18 of \cite{Jap03} for a discussion of HPMs with oracles.}  and hence its strategy was not-necessarily-effective.  Thus, whenever $\fif$ does not prove a formula $F$, 
there is simply no strategy --- whether effective or not --- that wins $F^*$ for every constant (and enumeration) interpretation $^*$; in other words, $F$ has no uniform solution even if we do not require that uniform solutions be effective. 
\end{remark}

\section{Multiform completeness}\label{sss}

A transition from completeness with respect to uniform validity (``uniform completeness'') to completeness with respect to multiform validity (``multiform completeness'') is already rather standard in the literature on CoL. The diagonalization-style idea that we are going to employ below has been used in  \cite{Japtocl1,Japtocl2,Japtcs,Japjsl,Japtcs2,Propint,Japseq,Japtogl} for a number of other deductive systems for CoL in multiform completeness proofs.  

Let 
\begin{equation}\label{list}
{\cal H}_0,\ {\cal H}_1,\ {\cal H}_2,\ \ldots
\end{equation}
 be a fixed list of all HPMs (say, arranged in the lexicographic order of their descriptions). 

Next, we pick a variable $x$, and let 
\begin{equation}\label{lista}
e_0,\ e_1,\ e_2,\ \ldots
\end{equation}
be the valuations such that, for each $n\geq 0$, $e_n$ assigns $n$ to $x$ and assigns $0$ to all other variables.

Remember that, even though $\cal H$ and $e$ have been fixed so far, they were an arbitrary HPM and an arbitrary valuation. This means that $\cal H$ and $e$ just as well could have been treated --- as we are going to do from now on --- as a variable ranging over HPMs and a variable ranging over valuations, respectively.  Everything we have proven about the fixed $\cal H$ and $e$ automatically goes through for $\cal H$ and $e$ as  variables. Namely,  considering only ({\em HPM,valuation}) pairs of the form $({\cal H}_n,e_n)$,  
Lemma \ref{apr8d} can now be re-written as follows:

\begin{lemma}\label{apr8e}
For every natural number $n$ there  is a constant enumeration interpretation $^{*_{n}}$ --- which we choose and fix for the purposes of this section --- such that the HPM ${\cal H}_n$ from the list (\ref{list}) does not win the game $\mathbb{F}_{0}^{*_{n}}$ against our EPM $\cal E$ on the valuation $e_n$ from the list (\ref{lista}). 
\end{lemma}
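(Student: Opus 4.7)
The proof will be essentially a reparameterization of Lemma \ref{apr8d}. The key observation, already flagged by the author in the paragraph immediately preceding the statement, is that the HPM $\cal H$ and valuation $e$ fixed back in Section \ref{sscs} were chosen arbitrarily, and none of the intervening constructions (the counterstrategy $\cal E$, the run $\Omega$, the resolution $\cal T$, the hyperformula $\overline{\mathbb{F}_1}$ and its refuting hypermodel $h$, and the interpretation $^*$ built from $h$ in the proof of Lemma \ref{apr8d}) exploits anything about $\cal H$ or $e$ beyond their bare types. Hence the whole argument is uniformly parametric in $({\cal H},e)$.

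The plan is therefore the following. Fix an arbitrary natural number $n$. Instantiate the ``free'' parameters of Sections \ref{sscs}--\ref{notice} by setting ${\cal H} := {\cal H}_n$ and $e := e_n$, where ${\cal H}_n$ comes from the list (\ref{list}) and $e_n$ from the list (\ref{lista}). With this instantiation, $\Omega$ becomes the ${\cal H}_n$ vs.\ $\cal E$ run on $e_n$, and Lemma \ref{apr8d} applied in this instantiated setting yields a constant enumeration interpretation, which we designate as $^{*_{n}}$, such that ${\cal H}_n$ loses $\mathbb{F}_0^{*_{n}}$ against $\cal E$ on $e_n$. In particular ${\cal H}_n$ does not win $\mathbb{F}_0^{*_{n}}$, as required. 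Doing this for each $n$ (and using any fixed choice function to pick a specific $^{*_{n}}$ when several work) produces the sequence of interpretations asserted by the lemma.

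The only thing that might, on a casual reading, look like an obstacle is the worry that something earlier in the development implicitly depended on a single fixed $({\cal H},e)$ in a way that would prevent this re-instantiation. A quick audit rules this out: the construction of $\cal E$ in Section \ref{sscs} is independent of $\cal H$ and $e$; the run $\Omega$, the $\downarrow\hspace{-2pt}m$ notation, the domination/visibility machinery, the operations \one, $\one^+$, \two, \three, \four, and the resolution $\cal T$ are all defined relative to whichever $({\cal H},e)$ is in play; and the extraction of $h$ from the non-tautologicity of $\overline{\mathbb{F}_1}$ in Lemma \ref{feb26a}, together with the construction of $^*$ from $h$ in Lemma \ref{apr8d}, is likewise purely parametric. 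Thus no step of the preceding argument is damaged by letting $({\cal H},e)$ vary over $({\cal H}_n,e_n)$, and the proof reduces to a one-line appeal to Lemma \ref{apr8d} for each $n$.
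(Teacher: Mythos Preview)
Your proposal is correct and matches the paper's own treatment: the paper does not give a separate proof of this lemma but simply observes (in the paragraph you cite) that $\cal H$ and $e$ were arbitrary, so Lemma \ref{apr8d} ``can now be re-written'' with $({\cal H}_n,e_n)$ in place of $({\cal H},e)$. Your audit of the parametric independence of the intervening constructions is more explicit than what the paper provides, but the approach is identical.
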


We now define an enumeration interpretation $^\dagger$ such that $P^\dagger$  (any atom $P$) is a unary game that depends on no variables other than $x$. Namely, we let $P^\dagger$ be the game such that, for any valuation $e$, $e[P^\dagger]=P^{*_n}$, where $n$ is the value assigned to the variable $x$ by $e$,     and $^{*_n}$ is the interpretation fixed in Lemma \ref{apr8e} for $n$. Of course, $^\dagger$ is a unary interpretation, as promised in clause (c) of Theorem  6 of  \cite{taming1}. To complete our proof of that clause, we need to show that 
no HPM wins $\mathbb{F}_{0}^{\dagger}$. 

But indeed, consider an arbitrary HPM $\cal H$. Since all HPMs are on the list (\ref{list}), we have ${\cal H}={\cal H}_n$ for some $n$ (fix the latter). 
Next, consider the valuation $e_n$ from the list (\ref{lista}). By our choice of $^\dagger$, for every atom $P$ we have $e_n[P^\dagger]=P^{*_n}$. Clearly this extends from atoms to compound formulas as well, so 
$e_n[\mathbb{F}_{0}^{\dagger}]=\mathbb{F}_{0}^{*_n}$. By Lemma \ref{apr8e}, ${\cal H}_n$ --- i.e. $\cal H$ --- does not win $\mathbb{F}_{0}^{*_n}$ against $\cal E$ on $e_n$. 
Thus, $\cal H$ does not win $e_n[\mathbb{F}_{0}^{\dagger}]$ against $\cal E$ on $e_n$. But not winning $e_n[\mathbb{F}_{0}^{\dagger}]$ against $\cal E$ on $e_n$ 
obviously means the same as not winning $\mathbb{F}_{0}^{\dagger}$ against $\cal E$ on $e_n$. 
 Thus, $\cal H$ does not win $\mathbb{F}_{0}^{\dagger}$, as desired.


\begin{thebibliography}{99}

\bibitem{Jap03} G. Japaridze. {\em Introduction to computability logic}. {\bf Annals of Pure and Applied Logic} 123 (2003), pp. 1-99.

\bibitem{Japtocl1} G. Japaridze. {\em Propositional computability logic I}. {\bf ACM Transactions on Computational Logic} 7 (2006),  pp. 302-330.

\bibitem{Japtocl2} G. Japaridze. {\em Propositional computability logic II}. {\bf ACM Transactions on Computational Logic} 7 (2006), 
pp.  331-362.

\bibitem{Japtcs} G. Japaridze. {\em From truth to computability I}. {\bf Theoretical Computer Science} 357 (2006), pp. 100-135.

\bibitem{Japjsl} G. Japaridze. {\em The logic of interactive Turing reduction}. {\bf Journal of Symbolic Logic} 72 (2007), pp. 243-276. 

\bibitem{Japtcs2} G. Japaridze. {\em From truth to computability II}. {\bf Theoretical Computer Science} 379 (2007), pp. 20-52.

\bibitem{Propint} G. Japaridze. {\em The intuitionistic fragment of computability logic at the propositional level}. {\bf Annals of Pure and Applied Logic} 147 (2007),  pp. 187-227. 

\bibitem{Japseq} G. Japaridze. {\em Sequential operators in computability logic}. {\bf Information and Computation} 206 (2008),  pp. 1443-1475.   

\bibitem{Japtogl} G. Japaridze. {\em Toggling operators in computability logic}. {\bf Theoretical Computer Science} 412 (2011), pp. 971-1004.

\bibitem{lbcs} G. Japaridze. {\em A logical basis for constructive systems}. {\bf Journal of Logic and Computation} (to appear). doi: 10.1093/logcom/exr009, 38 pages.   


\bibitem{taming1} G. Japaridze. {\em The taming of recurrences in computability logic through cirquent calculus, Part I}. This journal,  26 pages. 


\end{thebibliography}
\end{document}